\newtheorem{theorem}{Theorem}[section]
\newtheorem{proposition}[theorem]{Proposition}
\newtheorem{lemma}[theorem]{Lemma}
\newtheorem{definition}[theorem]{Definition}
\newtheorem{claim}[theorem]{Claim}
\newtheorem{corollary}[theorem]{Corollary}
\newcommand{\reals}{{\mathbb R}}
\newcommand{\normalize}{\mathcal{P}}
\newcommand{\lb}{L}
\renewcommand{\S}{\mathbb S}
\newcommand{\kibitz}[2]{\ifnum\Comments=1{\color{#1}{#2}}\fi \ignorespaces}
\title{User-Creator Feature Polarization in Recommender Systems with Dual Influence
}
\author{%
  Tao Lin\thanks{The first two authors contributed equally to this work.  And this work was done when Kun was at TikTok.}\\
  Harvard University \\
  \texttt{\small tlin@g.harvard.edu} 
  \And
  Kun Jin\footnotemark[1] \\
  Google \\
  \texttt{\small kunjin@google.com} \\
  \AND
  Andrew Estornell \\
  ByteDance \\
  \texttt{\small andrew.estornell@bytedance.com} \\
  \And
  Xiaoying Zhang \\
  ByteDance \\
  \texttt{\small zhangxiaoying.xy@bytedance.com} \\
  \And
  Yiling Chen \\
  Harvard University \\
  \texttt{\small yiling@seas.harvard.edu} \\
  \And
  Yang Liu \\
  University of California, Santa Cruz \\
  \texttt{\small yangliu@ucsc.edu}
}
\begin{document}

\maketitle

\begin{abstract}
Recommender systems serve the dual purpose of presenting relevant content to users and helping content creators reach their target audience. 
The dual nature of these systems naturally influences both users and creators: users' preferences are affected by the items they are recommended, while creators may be incentivized to alter their content to attract more users.   
We define a model, called user-creator feature dynamics, to capture the dual influence of recommender systems.
We prove that a recommender system with dual influence is guaranteed to polarize, causing diversity loss in the system.
We then investigate, both theoretically and empirically, approaches for mitigating polarization and promoting diversity in recommender systems.
Unexpectedly, we find that common diversity-promoting approaches do not work in the presence of dual influence, 
while relevancy-optimizing methods like top-$k$ truncation can prevent polarization and improve diversity of the system. 
\end{abstract}

\section{Introduction}

From restaurant selection, video watching, to apartment renting, recommender systems play a pivotal role across a plethora of real-world domains.
These systems match users with content they like, and help creators (those producing the content) identify their target audiences. 
Nevertheless, behind such success, concerns have emerged regarding possible harmful outcomes of recommender systems, in particular, \emph{filter bubbles} \cite{Masrour_2020, Aridor_2020} and  \emph{polarization} \cite{Santos_2021} -- outcomes with insufficient \emph{recommendation diversity} and \emph{creation diversity}.
Recommendation diversity, meaning the diversity of the contents recommended to a user, is key to users' engagement and retention on the platform.  
Meanwhile, creation diversity, meaning the variety of content created on the platform, is a determinant of the platform's long-term health.  
In extreme cases, insufficient creation diversity can lead to consensus or polarization, where the latter can cause conflict and hatred, diminish people's mutual understanding, and cause societal crises. Therefore, from both business and social responsibility perspectives, championing and improving diversity in recommender systems is equally important as optimizing recommendation relevancy.


There is increasing emphasis in academia and industry on investigating and improving the diversity of recommender systems, combating filter bubbles and polarization.
Popular diversity-boosting approaches include applying post-processing procedures such as re-ranking \cite{Carbonell_1998, ziegler-2005-improving} and setting diversity-aware objectives in addition to relevance maximization \cite{Su_2013, Zhang_2008, Hurley_2013, Wilhelm_2018, Cheng_2017}.  These methods aim to increase the recommendation diversity for users.  
Assuming that the contents on the platform are static, these methods have been shown to bring diversity gain to the system.

However, an important aspect that is overlooked in the aforementioned 
approaches is that: users and contents on a recommendation platform are not static entities -- they can be \emph{influenced} by the recommendation made by the system.
In content creation platforms like YouTube, TikTok, and Twitter, recommendations naturally affect both content users and content creators.  It is well known that the exposure to recommended items can shift a user's preference \cite{jiang_degenerate_2019, Kalimeris_2021_Preference, dean_preference_2022}.  On the other hand, the creators have the incentive to change their creation styles constantly to attract their audience better (and to make more profits from the platform) \cite{eilat_performative_2023, hron_modeling_2023, jagadeesan2024supply}.
While the effects of recommendation on either users or creators have been investigated separately, to our knowledge no previous work considers both effects. 
The dual influence of recommendation on users and creators causes complicated dynamics where users and creators interact and their preferences evolve together.  Such evolution might exacerbate filter bubble and polarization effects. Whether the aforementioned diversity-boosting approaches still work in a dynamic environment with dual influence is questionable. 

\paragraph{Our contributions}
The first contribution of our work is to define a novel, natural dynamics model that captures the dual influence of a recommender system on users and creators, which we call user-creator feature dynamics (Section~\ref{sec:model}).  
We leverage the users' and items'/creators' embedding vectors to represent their preferences and creation styles, and use cosine similarity to characterize the relevance of creations and users' interests (which is common in the recommender system literature and practice).  
This model allows us to formally reason about the impact of various design choices on the long-term diversity of a recommender system with dual influence. 

Our second contribution is 
to demonstrate that,
under realistic conditions, the user-creator feature dynamics of any recommender system with dual influence must unavoidably converge to polarization (Section~\ref{sec:polarization}), i.e., the preferences of users and the contents of creators will become tightly clustered into two opposite groups, significantly reducing the diversity of the system.
We demonstrate that this phenomenon still occurs even after applying diversity-boosting interventions to the system.

Then, (in Section~\ref{sec:real-world-discussion}) we investigate some real-world designs of recommendation algorithms 
in order to look for techniques that mitigate polarization.
Interestingly, we find that some common efficiency-improving methods, 
such as top-$k$ truncation, can 
both
prevent the system from polarization and improve the creation diversity.
We also provide 
empirical results (Section~\ref{sec:experiment}) on both synthetic and real-world (MovieLens) data.
As predicted by our theory, we find that systems with dual influence more easily converge to polarization under diversity-boosting designs,
while efficiency-oriented and relevance-optimizing designs can in fact improve the long-term diversity of the system.
This could explain why polarization does not always happen in reality.
Section~\ref{sec:conclusion} concludes.

\subsection{Related Work}



\paragraph{Diversity in recommendations} 
Diversity, filter bubbles, and polarization in recommendations have been important research topics in recent years, and they are closely related concepts with different focuses. On the one hand, filter bubbles are frequently defined as decreasing recommendation diversity over time \cite{Aridor_2020}, which describes both the process and the outcome of insufficiently diverse recommendations. On the other hand, polarization describes the negative outcome of insufficient mutual understanding between people \cite{Santos_2021}.  In content platforms, an example of polarization is people creating content with strong agreement or disagreement with other content under the same topic, e.g., political opinions.
To combat these negative outcomes, previous works propose diversity-boosting approaches
including re-ranking \cite{Carbonell_1998,ziegler-2005-improving} and diversity-aware objective optimization \cite{Su_2013, Zhang_2008, Hurley_2013, Wilhelm_2018, Cheng_2017,zhang2023disentangled}. Despite having positive effects in situations where user preferences and creation styles are fixed, these approaches overlooked the dynamic nature of recommender systems and our work shows that certain approaches will make long-term outcomes worse under the dual influence.

\paragraph{Opinion dynamics}
Opinion dynamics study the effect of people exchanging opinions with others on social networks \cite{Sarlette_2007, Golub_2010, TAC_2014_Spong, Altafini_2015}.
Our model of a recommender system with dual influence on users and creators resembles a bipartite social network, and our conclusion that the system converges to polarization is conceptually similar to people reaching consensus on social networks \cite{opinion_ozdaglar2013, caponigro_nonlinear_2015, markdahl_almost_2018, zhang_opinion_2022}.  However, the technique we use to prove our conclusion (absorbing Markov chain) significantly differs from the main technique (stability of ODE) in the mentioned works. 
%


\paragraph{Performative effects of recommender systems}
The phenomenon that predictive systems like recommender systems can impact the individuals interacting with those systems (e.g., users and creators) is related to the literature of performative prediction \cite{perdomo2020performative, hardt2022performative}.
These impacts can be direct, such as individuals ostensibly modifying their features in order to obtain more desirable outcomes \cite{levanon2022generalized}.
Prior works on the performative effects of recommender systems (e.g., \cite{ben2018game, jiang_degenerate_2019, dean_preference_2022, Yao_2022_Learning, eilat_performative_2023, yao_how_2023, prasad2023content, hron_modeling_2023, agarwal2023online, Yao_2024_User, acharya2024producers, jagadeesan2024supply}) only consider one-sided impact, either on users or on creators. 
Differing from them, our work studies two-sided impacts, i.e., on both users and creators.  We provide a table to compare our work with previous works in Appendix~\ref{app:additional-related-works}.

\section{Model: User-Creator Feature Dynamics} \label{sec:model}


We define a \emph{dynamics} model for user preferences and content/creator features in a recommender system. 
Let $\bm U^t = [\bm u_j^t]_{j=1}^m = [\bm u_1^t, \ldots, \bm u_m^t]  \in \reals^{d\times m}$ be a population of $m$ users and $\bm V^t = [\bm v_i^t]_{i=1}^n = [\bm v_1^t, \ldots, \bm v_n^t] \in \reals^{d\times n}$ be a population of $n$ creators at time $t$, where each vector $\bm u_j^t, \bm v_i^t \in \S^{d-1}$ represent the preference/feature vector of each user and creator respectively, assumed to be on the unit sphere $\S^{d-1}$ with $\ell_2$-norm. 
Then $(\bm U^t, \bm V^t)$ denotes the state of the dynamics at time $t$.  
The dynamics evolve as follows at each time step $t \ge 0$:

\textbf{1) Recommendation:} Each user $j\in [m]$ is recommended a creator, where creator $i \in [n]$ is chosen with a probability
    \begin{equation}
        p_{ij}^t = p_{ij}^t(\bm U^t, \bm V^t).
    \end{equation}
    While we allow a wide array of different functions $p_{ij}^t(\cdot)$, a common example of such functions is the so-called \emph{softmax function}:
    \begin{equation}\label{eq:softmax}
        p_{ij}^t = \mathrm{softmax}(\bm u_j^t, \bm V^t; \beta) = \frac{\exp(\beta \langle \bm u_j^t, \bm v_i^t\rangle)}{\sum_{i=1}^n \exp(\beta \langle \bm u_j^t, \bm v_i^t \rangle)}. 
    \end{equation}
    A larger $\beta$ means that the recommendation is more sensitive to the \emph{relevance} of a creator to a user, measured by $\langle \bm u_j^t, \bm v_i^t \rangle$.  
    
    \textbf{2) User update:} After recommendation, each user $j\in [m]$ updates their feature vector $\bm u_j^{t}$, based on which creator, say $i_j^t$, was recommended to them:  
    \begin{equation}\label{eq:user-dynamics-definition}
        \bm u_j^{t+1} = \normalize\big( \bm u_j^t + \eta_u f(\bm v_{i_j^t}^t, \bm u_j^t)\bm v_{i_j^t}^t \big).  
    \end{equation}
    Here, $\eta_u \in [0, 1]$ is a parameter controlling the rate of update, $f(\bm v_i, \bm u_j)$ is a function that quantifies the impact of creator $i$'s content on user $j$ (discussed in detail later), and $\mathrm{\normalize}(\bm x) = \frac{\bm x}{\| \bm x \|_2}$ is the projection back onto the unit sphere. 
    Our user update model generalizes \cite{dean_preference_2022}, which considers $\bm u_j^{t+1} = \normalize(\bm u_j^t + \eta_u \langle \bm v_{i_j^t}^t, \bm u_j^t \rangle \bm v_{i_j^t}^t)$, by replacing the inner product with a general function $f$.
    
    \textbf{3) Creator update:}  Creators also update their feature vectors based on which users are recommended their content. 
    For each creator $i \in [n]$, let $J_i^t = \{j: i_j^t = i\}$ be the set of users being recommended creator $i$, then $\bm v_i^t$ is updated by: 
    \begin{equation} \label{eq:creator-update}
        \bm v_i^{t+1} = \normalize \Big( \bm v_i^t + \frac{\eta_c}{|J_i^t|}\sum_{j \in J_i^t} g(\bm u_j^t, \bm v_i^t) \bm u_j^t \Big), 
    \end{equation}
    where $\eta_c \in [0, 1]$ is a parameter controlling the rate of update, and $g(\bm u_j, \bm v_i)$ is a function that quantifies the impact of user $j$ on creator $i$. 

\paragraph{Impact functions $f$ and $g$}
Our results 
apply to any impact functions $f$ and $g$ that satisfy the following natural assumptions. 
First, $f(\bm v_i, \bm u_j)$ and the inner product $\langle \bm v_i, \bm u_j\rangle$ have the same sign: 
$
    f(\bm v_i, \bm u_j) \text{ is } {\tiny \begin{cases}
    > 0 & \text{ if } \langle\bm v_i, \bm u_j \rangle > 0 \\
    < 0 & \text{ if }\langle\bm v_i, \bm u_j \rangle < 0 \\
    = 0 & \text{ if } \langle \bm v_i, \bm u_j \rangle = 0. 
    \end{cases} }
$
This means that if a user \emph{likes} the content ($\langle\bm v_i^t, \bm u_j^t \rangle > 0$), then the user vector $\bm u_j^t$ will be updated \emph{towards} the direction of the creator vector $\bm v_j^t$.  If the user \emph{dislikes} the content ($\langle\bm v_i^t, \bm u_j^t \rangle < 0$), then the user vector $\bm u_j^t$ will move \emph{away from} $\bm v_j^t$.  Such ``biased assimilation'' user behavior is well documented in the literature \cite{dean_preference_2022}.
Further, we assume upper and lower bounds on $|f|$: 
\begin{align*}
    |f(\bm v_i, \bm u_j)| \le 1, \quad \quad  |f(\bm v_i, \bm u_j)| \ge \lb_f > 0 \text{ whenever $\langle \bm v_i, \bm u_j \rangle \ne 0$}.
\end{align*}
The lower bound $|f(\bm v_i, \bm u_j)| \ge \lb_f$ means that the exposure to an item that a user likes or dislikes always has some non-negligible impact on the user's preference.
For example, 
$f(\bm v_i, \bm u_j) = \mathrm{sign}(\langle\bm v_i, \bm u_j \rangle)a + b \langle\bm v_i, \bm u_j \rangle$
satisfies both assumptions when $\lb_f = a > 0$ and $b\geq0$. 

For $g$, likewise assume 
that its sign is the same as $\langle \bm u_j, \bm v_i\rangle$: 
$
    g(\bm u_j, \bm v_i) \text{ is } {\tiny  \begin{cases}
    > 0 & \text{ if } \langle\bm u_j, \bm v_i \rangle > 0 \\
    < 0 & \text{ if }\langle\bm u_j, \bm v_i \rangle < 0 \\
    = 0 & \text{ if } \langle \bm u_j, \bm v_i \rangle = 0. 
    \end{cases} }
$
Intuitively, this captures the incentive of a creator who aims to maximize the average ratings from users who are recommended their items.  
On video platforms for example, if the creators are rewarded based on the average rating of their videos, they will try to reinforce their creation styles based on the users who give positive feedback ($\langle \bm u_j, \bm v_i\rangle > 0$) so that their creations are more likely to be recommended to those users. 
Meanwhile, the creators will also change their creation styles based on negative feedback ($\langle \bm u_j, \bm v_i\rangle < 0$), but in the opposite direction of the negative-feedback users' interests, so that their creations are less likely to be recommended to those users.
Taking both scenarios into account, the creator moves towards the weighted average of user preferences $\sum_{j \in J_i^t} g(\bm u_j^t, \bm v_i^t) \bm u_j^t$, which is captured by our update rule \eqref{eq:creator-update}.
A particular example of $g$ is the sign function $g(\bm u_j, \bm v_i) = \mathrm{sign}( \langle \bm u_j, \bm v_i \rangle) \in \{-1, 0, 1\}$.  We will only consider the sign function $g$ in order to simplify the theoretical presentation. 
We believe that all our results can be generalized to other $g$ functions satisfying similar conditions as $f$; the details are left as future work.

\section{Unavoidable Polarization} \label{sec:polarization}
Having defined the user-creator feature dynamics in a recommender system with dual influence, we now theoretically study how such dynamics evolve.  
Our main result is: 
if every creator can be recommended to every user with some non-zero probability, then the dynamics must eventually \emph{polarize}.   
\begin{definition}[consensus and bi-polarization]
Let $R > 0$.  The dynamics $(\bm U^t, \bm V^t)$ is said to reach:
\begin{itemize}[noitemsep,topsep=0pt,parsep=2pt,partopsep=0pt, leftmargin=1.5em]
    \item \emph{$R$-consensus} if there exists a vector $\bm c \in \reals^d$ such that every feature vector is $R$-close to $\bm c$: $\forall \bm u_j^t, \| \bm u_j^t - \bm c \|_2 \le R$ and $\forall \bm v_i^t, \| \bm v_i^t - \bm c \|_2 \le R$. 
    \item \emph{$R$-bi-polarization} if there exists a vector $\bm c\in\reals^d$ such that every feature vector is $R$-close to $+ \bm c$ or $-\bm c$: $\forall \bm u_j^t$, $\| \bm u_j^t - \bm c \|_2 \le R$ or $\|\bm u_j^t + \bm c \|_2 \le R$, and $\forall \bm v_i^t$, $\| \bm v_i^t - \bm c \|_2 \le R$ or $\| \bm v_i^t + \bm c \|_2 \le R$.  
\end{itemize}
The dynamics is said to reach $(R, \bm c)$-consensus (or $(R, \bm c)$-bi-polarization) if the dynamics reaches $R$-consensus (or $R$-bi-polarization) with the vector $\bm c$. 
\end{definition}

Consensus is any state where all users and creators have similar feature vectors (with maximum difference $R$), implying that they have similar interests or preferences.
Bi-polarization is any state where all users and creators are clustered into two groups with exactly opposite features (e.g., Republicans vs Democrats). 
Mathematically, consensus is a special case of bi-polarization.

\begin{proposition}\label{obs:absorbing}
Bi-polarization states are absorbing: once the dynamics reaches $(R, \bm c)$-bi-polarization with some $R \in[0, 1]$ and $\bm c \in \S^{d-1}$, it will satisfy $(R, \bm c)$-bi-polarization forever.  The same holds for consensus. 
\end{proposition}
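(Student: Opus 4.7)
The plan is to verify that the same witness $\bm c$ certifying $R$-bi-polarization at time $t$ continues to work at time $t+1$, by tracking each updated feature vector individually. It is convenient to work in the antipodal quotient $\S^{d-1}/\{\pm 1\}$, where $R$-bi-polarization says exactly that every class $[\bm u_j^t]$ and $[\bm v_i^t]$ lies in the ``ball'' $B_R([\bm c])$ of radius $R$ around $[\bm c]$. Consensus is the special case where every vector stays on the same side, i.e., within $R$ of $+\bm c$.

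Using the sign assumption on $f$, I would first rewrite the user update as a nonnegative combination on the sphere:
$$\bm u_j^{t+1} = \normalize\bigl(\bm u_j^t + \alpha\, \bm w\bigr), \quad \alpha = \eta_u |f(\bm v_{i_j}^t, \bm u_j^t)| \ge 0, \quad \bm w = \mathrm{sign}(\langle \bm u_j^t, \bm v_{i_j}^t\rangle)\bm v_{i_j}^t,$$
so that $\langle \bm u_j^t, \bm w\rangle \ge 0$ and $\bm u_j^{t+1}$ sits on the minor geodesic arc from $\bm u_j^t$ to $\bm w$ on the sphere. Since $\bm w = \pm \bm v_{i_j}^t$, projecting to the quotient shows $[\bm u_j^{t+1}]$ lies on the shortest geodesic from $[\bm u_j^t]$ to $[\bm v_{i_j}^t]$, both of which are in $B_R([\bm c])$.

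The creator update receives a parallel treatment via the sign assumption on $g$. Each summand $g(\bm u_j, \bm v_i)\bm u_j$ is a nonnegative multiple of $\pm \bm u_j$ picked to have nonnegative inner product with $\bm v_i$. Unfolding the normalized weighted sum via repeated two-term combinations then places $[\bm v_i^{t+1}]$ in the geodesic convex hull of $\{[\bm v_i^t]\} \cup \{[\bm u_j^t] : j \in J_i^t\}$ in the quotient, and this hull lies inside $B_R([\bm c])$.

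What remains is to confirm the underlying geometric claim: the quotient ball $B_R([\bm c])$ is closed under taking geodesics between its pairs of points. I expect this to reduce to a direct calculation on the spherical cap around $\bm c$ of Euclidean radius $R$. The main obstacle is the regime $R$ close to $1$, where the quotient ball has angular radius up to $\pi/3$ and is no longer small enough for off-the-shelf geodesic convexity results on the sphere to apply; here one must argue by hand, or else allow $\bm c^{t+1}$ to drift to a slightly different witness rather than insisting the same $\bm c$ work at every step.
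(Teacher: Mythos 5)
Your approach is essentially the paper's: flip signs using the sign conditions on $f$ and $g$ so that each update becomes the normalization of a nonnegative combination of vectors lying in a single cap, then invoke a convexity property of that cap. Your ``geodesic convexity of the quotient ball'' is exactly the paper's Lemma~\ref{lem:convex-cone} (the convex cone property) in different clothing, and your unfolding of the creator update into repeated two-term combinations matches the paper's direct application of that lemma to the sum $\bm v_i^t + \frac{\eta_c}{|J|}\sum_j |g|\,\tilde{\bm u}_j^t$. The one step you leave open --- that the sign-corrected partner $\bm w = \pm\bm v_{i_j}^t$ actually lies in the \emph{same} cap as $\bm u_j^t$, so that the arc between them stays inside --- is precisely the crux, and it closes cleanly when the caps have angular radius at most $45^\circ$, i.e.\ $R \le \sqrt{2-\sqrt{2}}$: in that regime a point within $R$ of $+\bm c$ and a point within $R$ of $-\bm c$ subtend an angle of at least $90^\circ$, hence have nonpositive inner product, so $\langle \bm u_j^t, \bm w\rangle \ge 0$ forces $\bm w$ into the $+\bm c$ cap, and Lemma~\ref{lem:convex-cone} (all combined vectors now having inner product at least $1-R^2/2>0$ with $\bm c$) gives $\|\bm u_j^{t+1}-\bm c\|\le R$.

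Your worry about $R$ close to $1$ is legitimate, and you should know that the paper's own proof steps over it: the sentence ``since $\langle\tilde{\bm v}_i^t,\bm u_j^t\rangle>0$ and $\bm u_j^t$ is close to $+\bm c$, it must be that $\tilde{\bm v}_i^t$ is close to $+\bm c$'' fails for $R>\sqrt{2-\sqrt{2}}$ (e.g.\ $\bm u_j^t$ at $55^\circ$ from $\bm c$ and $\tilde{\bm v}_i^t$ at $125^\circ$ satisfy both caps' membership for $R=1$ yet have positive inner product, and the updated user can land at $90^\circ$ from $\bm c$, outside both caps). So neither you nor the paper has a complete argument for all $R\in[0,1]$; since the proposition is only invoked downstream with small radii ($R_0$, $R_1$), this does not affect the rest of the paper, but if you want the statement as literally written you would indeed need the ``drifting witness'' device you mention, or a restriction on $R$.
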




A natural property of a recommender system is that every creator can be recommended to every user with some non-zero probability: $p_{ij}^t \ge p_0 > 0$ with some constant $p_0$.
This is satisfied by the softmax function, which is a rough model of real-world recommendation algorithms \cite{covington_deep_2016, Kalimeris_2021_Preference}: 
$p_{ij}^t = \tfrac{\exp(\beta\langle \bm u_j^t, \bm v_i^t\rangle)}{\sum_{i=1}^n \exp(\beta \langle \bm u_j^t, \bm v_i^t \rangle)} \ge \tfrac{\exp(-\beta)}{n \exp(\beta) } = p_0 > 0$. 
Moreover, many large-scale real-world recomendation systems (e.g., Yahoo!~\cite{Li2010Contextual} and Kuaishou \cite{Gao2022Kuai}) intentionally insert small random traffic attempting to improve recommendation diversity or explore users' interests \cite{Moller_2018, Yang2018Unbiased}, which will cause all recommendation probabilities to be non-zero. 
We show in Theorem~\ref{thm:bi-polarization} that, however, a recommender system satisfying $p_{ij}^t \ge p_0 > 0$ must converge to polarization, under some additional conditions on the users' and creators' update rates: 
\begin{theorem}\label{thm:bi-polarization}
Suppose $g(\bm u_j, \bm v_i) = \mathrm{sign}( \langle \bm u_j, \bm v_i \rangle)$, the update rates $\eta_c \le \frac{\eta_u \lb_f}{2}$ and $\eta_u < \frac{1}{2}$, and the recommendation probability $p_{ij}^t \ge p_0 > 0, \forall i, j, t$. 
Then, from almost all initial states, the dynamics $(\bm U^t, \bm V^t)$ will eventually reach $R$-consensus or $R$-bi-polarization for any $R > 0$. 
\end{theorem}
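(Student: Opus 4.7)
The plan is to build a bounded submartingale from the pairwise squared inner products of the feature vectors, apply Doob's martingale convergence theorem, and then argue that the enforced limit is $R$-consensus or $R$-bi-polarization. Concretely, I would take as potential
\[
\Phi^t \;=\; \sum_{i=1}^n \sum_{j=1}^m \langle \bm v_i^t, \bm u_j^t\rangle^2,
\]
possibly augmented with analogous user-user and creator-creator squared inner products. Since every vector lies on $\S^{d-1}$, $\Phi^t \in [0, mn]$, and $\Phi^t = mn$ is attained exactly when every $(\bm u_j, \bm v_i)$ pair is collinear, i.e.\ at $R=0$ bi-polarization or consensus.

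Next I would carry out the single-step analysis. For the pair $(\bm u_j, \bm v_i)$ actually recommended, write $s = \langle \bm u_j^t, \bm v_i^t\rangle$ and $a = 2\eta_u |f||s| + \eta_u^2 f^2 \ge 0$. The user update combined with $\mathrm{sign}(f) = \mathrm{sign}(s)$ yields, by a direct computation,
\[
\langle \bm u_j^{t+1}, \bm v_i^t\rangle^2 - s^2 \;=\; \frac{a(1-s^2)}{1+a} \;\ge\; 0,
\]
strict whenever $|s| < 1$, producing a \emph{direct} alignment gain of order $\eta_u \lb_f (1-s^2)$. The hard part is absorbing the cross terms: the update of $\bm u_j$ also moves $\langle \bm u_j, \bm v_{i'}\rangle$ for $i' \ne i$ (and the user-user inner products), and the creators update simultaneously. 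My plan is to Taylor-expand the normalized update in $\eta_u$ (valid since $\eta_u < 1/2$ keeps $\sqrt{1+a}$ bounded away from $0$) and, after averaging over the recommendation distribution with $p_{ij}^t \ge p_0 > 0$, show that the cross-term contributions are dominated by the positive direct gains. The rate condition $\eta_c \le \eta_u \lb_f / 2$ is used precisely to ensure that the $O(\eta_c)$ creator-side perturbations cannot cancel the $\Omega(\eta_u \lb_f)$ drift produced on the user side. The conclusion would be a drift inequality of the form
\[
\E[\Phi^{t+1} - \Phi^t \mid \mathcal F^t] \;\ge\; c\, p_0 \sum_{i,j}\bigl(1 - \langle \bm u_j^t, \bm v_i^t\rangle^2\bigr),
\]
for some $c > 0$ depending on $\eta_u, \eta_c, \lb_f, m, n$.

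Being a bounded submartingale, $\Phi^t$ converges almost surely by Doob's theorem, and the drift inequality then forces $|\langle \bm u_j^t, \bm v_i^t\rangle| \to 1$ almost surely for every $i, j$. Fixing a reference vector $\bm c$ (for instance $\bm v_1^t$ at a sufficiently late time), every $\bm u_j$ is eventually $R$-close to $\pm \bm c$; a short triangle-inequality argument then places every $\bm v_i$ within $R$ of $\pm \bm c$ as well, yielding $R$-consensus or $R$-bi-polarization for arbitrarily small $R > 0$, and \Cref{obs:absorbing} locks the dynamics in that regime. The ``almost all initial states'' caveat excludes the measure-zero set of degenerate initializations (for example exact-orthogonality patterns) where the updates can freeze. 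The main obstacle in this plan is the cross-term control in the middle step: a single user update simultaneously shifts many inner products, some of which can individually shrink, and extracting a positive aggregate drift requires careful bookkeeping of the normalization together with the averaging over the recommendation distribution under the prescribed rate constraints.
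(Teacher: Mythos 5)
Your plan hinges on the drift inequality $\E[\Phi^{t+1}-\Phi^t\mid\mathcal F^t]\ge c\,p_0\sum_{i,j}\bigl(1-\langle\bm u_j^t,\bm v_i^t\rangle^2\bigr)$, and that is exactly where the proposal breaks. Writing $p=\langle\bm u_j^t,\bm v_{i'}^t\rangle$, $q=\langle\bm v_i^t,\bm v_{i'}^t\rangle$, $s=\langle\bm u_j^t,\bm v_i^t\rangle$, the update of $\bm u_j$ toward the recommended creator $i$ changes $\langle\bm u_j,\bm v_{i'}\rangle^2$ by $\bigl(2\eta_u f\,p(q-sp)+\eta_u^2f^2(q^2-p^2)\bigr)/(1+a)$, whose first-order part $2\eta_u f\,p(q-sp)$ is negative whenever $q<sp$. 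Take $n-1$ creators clustered at some $\bm w$ with $\langle\bm u_j,\bm w\rangle=0.9$ and one creator $\bm v_i$ with $\langle\bm u_j,\bm v_i\rangle=0.1$ and $\langle\bm v_i,\bm w\rangle=0$ (possible for $d\ge3$): recommending $\bm v_i$ yields a direct gain of order $2\eta_u |f|\cdot 0.1$ on the pair $(j,i)$ but a loss of order $2\eta_u |f|\cdot(n-1)\cdot 0.081$ on the clustered pairs, so the one-step change of $\Phi$ is strictly negative already for $n=3$. Since the theorem assumes only $p_{ij}^t\ge p_0$ with $p_0$ arbitrarily small, the recommendation distribution may put almost all of its mass on this bad creator, and the compensating creator-side gain is only $O(\eta_c)$; the expected drift is then negative, so $\Phi^t$ is not a submartingale and no configuration-independent $c>0$ exists. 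Consequently Doob's theorem cannot be invoked, and even almost-sure convergence of $\Phi^t$ would not by itself force $|\langle\bm u_j^t,\bm v_i^t\rangle|\to1$ without the (false) drift bound.

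The paper avoids this difficulty entirely: it never controls the drift of the actual stochastic dynamics. Instead it proves (Lemma~\ref{lem:finite-length-path}, by induction on $n$) that from almost every state there is \emph{one specific} finite sequence of recommendations leading to an $R$-bi-polarization state; since each step of that sequence occurs with probability at least $p_0^m$, the absorbing set (Proposition~\ref{obs:absorbing}) is reached almost surely by a geometric-trials argument. A potential function is used only in the base case of a single creator recommended to all of its users (Lemma~\ref{lemma:single-creator-J-users-any-radius}), where every user moves toward the same creator and the cross-term problem you yourself flag as the main obstacle simply does not arise. If you wish to salvage a Lyapunov-style proof, you would need a potential that is monotone under \emph{every} admissible recommendation rather than merely in expectation against an adversarial distribution --- which is essentially why the paper resorts to constructing explicit positive-probability paths instead.
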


In other words, if the users' and creators' updates are not too fast and all recommendation probabilities are non-zero, then all users and creators will eventually converge to at most two clusters (regardless of the feature dimension $d$). 
Since creators in one cluster produce similar contents, users in such a polarized system can never receive diverse recommendations.
This means that the na\"ive attempt of imposing $p_{ij}^t \ge p_0>0$ cannot improve the diversity of a recommender system with dual influence. 
The conditions on the update rates $\eta_u, \eta_c$ are only assumed to simplify the proof of Theorem \ref{thm:bi-polarization}. 
Our experiments (in Section \ref{sec:experiment}) will show that polarization still occurs even without those conditions.

Theorem~\ref{thm:bi-polarization} does not characterize the rate of convergence of the user-creator feature dynamics to polarization, which we leave as an open question.




The proof of Theorem~\ref{thm:bi-polarization} is an absorbing Markov chain argument.  It uses the following lemma: 
\begin{restatable}{lemma}{finiteLengthPath}\label{lem:finite-length-path}
Suppose $\eta_c \le \frac{\eta_u \lb_f}{2}$ and $\eta_u < \frac{1}{2}$.  For any $R>0$, for almost every state $(\bm U^t, \bm V^t)$ in the state space, there exists a path $(\bm U^t, \bm V^t) \to (\bm U^{t+1}, \bm V^{t+1}) \to \cdots \to (\bm U^{t+T}, \bm V^{t+T})$ of finite length that leads to an $R$-bi-polarization state $(\bm U^{t+T}, \bm V^{t+T})$.
\end{restatable}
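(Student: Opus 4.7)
The plan is to construct an explicit deterministic sequence of recommendation outcomes of finite length that drives the dynamics from a generic initial state to an $R$-bi-polarization state; since $p_{ij}^t \ge p_0 > 0$, any such sequence is a legal path in the state space, so existence is all we need. I take the "almost every" exception to be the set of states where some $\bm u_j^t$ or $\bm v_i^t$ is orthogonal to the reference direction $\bm c := \bm v_1^t$. These "bad" states form a finite union of codimension-one smooth submanifolds of $(\S^{d-1})^{m+n}$ and hence have Lebesgue measure zero.

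Stage 1 (polarize the users around $\bm v_1$). Run many rounds in which creator $1$ is recommended to every user. A two-vector-plane computation (restricting to the $2$-plane spanned by $\bm u_j^s$ and $\bm v_1^s$) shows that if $\bm v_1$ were frozen, $\sin \angle(\bm u_j,\bm v_1)$ would contract by a multiplicative factor bounded strictly below $1$ in terms of $\eta_u$ and $\lb_f$ (using $|f|\ge \lb_f$ and the generic condition $\langle \bm u_j^t,\bm v_1^t\rangle\neq 0$ to keep the user away from the equator initially). In reality $\bm v_1^{s+1}=\normalize\bigl(\bm v_1^s + \tfrac{\eta_c}{m}\sum_{j}\mathrm{sign}(\langle\bm u_j,\bm v_1\rangle)\bm u_j\bigr)$ drifts by at most $O(\eta_c)$ per step, and in fact only by $O(\eta_c\cdot \bar d^s)$ once users are within $\bar d^s$ of $\pm\bm v_1^s$, because then the sign-aligned users are themselves near $\bm v_1^s$. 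The hypotheses $\eta_u<1/2$ and $\eta_c\le \eta_u\lb_f/2$ are exactly what makes the user contraction outpace the creator drift, so after finitely many steps every user lies within $R/3$ of $\pm\bm v_1^{\mathrm{(current)}}$.

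Stage 2 (polarize the remaining creators). Let $\bm c_1$ denote the current creator-$1$ vector, so the users are organized into (up to) two tight clusters near $\pm\bm c_1$; if only one side is populated we are already at $R$-consensus, which is a special case of $R$-bi-polarization. Process the other creators one at a time: for each $i\ge 2$, pick the target pole $s_i:=\mathrm{sign}\langle\bm v_i,\bm c_1\rangle$, pick any user $j_i$ in the $s_i$-cluster, and run bursts of rounds in which $j_i$ is recommended to $i$ (which pulls $\bm v_i$ toward $s_i\bm c_1$ by the same contraction computation as Stage 1, with user and creator roles swapped) while every other user is recommended to creator $1$ (which keeps them essentially pinned near $\pm\bm c_1$, since they are close to fixed points of that update). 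A few interleaved rounds in which creator $1$ is recommended to $j_i$ re-pin $j_i$ itself. Iterating over $i=2,\dots,n$, and choosing all burst lengths small enough that the cumulative drift of any user and of $\bm v_1$ stays below $R$, produces an $R$-bi-polarization state in finite total time.

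The hard part will be the drift bookkeeping in Stage 2. Every time $j_i$ is used to pull creator $i$, $j_i$ itself is perturbed away from $\pm\bm c_1$, and every round of re-pinning perturbs $\bm v_1$ slightly; meanwhile the already-polarized creators $i'<i$ are updated too and could drift back. Making this work cleanly requires balancing three quantities simultaneously: the geometric per-step contraction of the vector currently being targeted, the per-step drift of all other vectors (controlled by $\eta_c\le\eta_u\lb_f/2$ and $\eta_u<1/2$), and the choice of all burst lengths so that the accumulated drift never exceeds $R$. The two step-size conditions in the hypothesis are precisely what make such a finite budget feasible, which is exactly the content of the lemma.
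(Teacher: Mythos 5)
Your architecture matches the paper's in spirit: collapse everything onto one creator's poles first, then drag the remaining creators in one at a time using a single designated user, re-stabilizing between pulls (the paper organizes this as an induction on $n$, using the full $(n-1)$-creator construction as the re-stabilization subroutine, plus a reflection lemma to reduce bi-polarization to consensus). However, the step you defer as ``the hard part'' is where the entire content of the lemma lives, and your proposed resolution points the wrong way. First, the bursts cannot be chosen ``small enough'': each burst-plus-re-pin cycle closes only a constant fraction $\tfrac{\eta_c}{\eta_u+\eta_c}$ of the gap between $\bm v_i$ and the pole (the creator moves at rate $\eta_c$ while its partner user races toward it at rate $\eta_u$, so their meeting point is a convex combination weighted toward the user's \emph{old} position). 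To guarantee even that much progress, each burst must be \emph{long} --- long enough for $\bm u_{j_i}$ and $\bm v_i$ to essentially merge (the paper runs $\Theta(\tfrac{1}{\eta_u \lb_f}\log\tfrac{1}{R_1})$ steps per burst) --- and you need $\Theta(\log\tfrac{1}{R})$ cycles, not a bounded number. What must instead be chosen small is the inner consensus radius $R_1$ relative to $R$: each cycle contributes an additive $O(R_1)$ error to the recursion $F \mapsto \tfrac{\eta_u}{\eta_u+\eta_c}F + O(R_1)$ for $F^t = 1 - \langle \bm v_n^t, \bm c^t\rangle$ (with a \emph{moving} cluster center $\bm c^t$, since re-pinning $j_i$ shifts the whole cluster), and the budget closes only with a choice like $R_1 = \Theta\bigl(\tfrac{\eta_c}{\eta_u}R^2\bigr)$. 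Second, ``a few interleaved rounds'' of re-pinning $j_i$ with creator $1$ is not enough: after a burst, $j_i$ sits near $\bm v_i$, which may be nearly orthogonal to the cluster, and restoring an $R_1$-tight cluster containing $j_i$ requires re-running the full multi-user convergence argument, which again takes $\Theta(\log\tfrac{1}{R_1})$ steps and perturbs the center.

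There is also a smaller gap in Stage 1: the heuristic ``frozen-creator contraction outpaces an $O(\eta_c)$ drift'' does not telescope into geometric convergence by a per-step triangle inequality, because the drift bound and the contraction factor interact multiplicatively across steps. The paper instead proves joint convergence directly via a weighted potential $\Phi^t = a_t\sum_j(1-\langle \bm u_j^t,\bm v_1^t\rangle)$ with $a_t = (1-\tfrac{3\eta_u\lb_f}{8})^{-t}$, where the cross term $\langle \bm u_j^{t+1}-\bm u_j^t, \bm v_1^t - \bm v_1^{t+1}\rangle$ is absorbed by AM--GM using exactly $\eta_u\eta_c \le \eta_u^2\lb_f/2 < 1$; this is the precise place the hypotheses $\eta_u<\tfrac12$ and $\eta_c\le\tfrac{\eta_u\lb_f}{2}$ enter, rather than through a contraction-versus-drift race. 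Finally, your measure-zero set is not quite the right one: the relevant degeneracies are exact orthogonalities $\langle \bm u_j, \bm v_i\rangle = 0$ between a user and the creator \emph{currently recommended to it} arising anywhere along the constructed path (these make $f$ and $g$ vanish and stall the update), not merely orthogonality to $\bm v_1$ at the initial state.
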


The proof of this lemma (in Appendix~\ref{app:finite-length-path}) is involved.  It uses induction on the number of creators $n$.  The base case of $n=1$ is proved by a potential function argument.  For $n \ge 2$, we first construct a path that leads the \emph{subsystem} of $n-1$ creators and all users to $R$-bi-polarization.   Then, depending on where the remaining creator is, we construct a sequence of recommendations that leads the remaining creator to one of the two clusters formed by the $n-1$ creators and all users.  Such recommendations will move some users out of the formed clusters, which requires extra care in the proof. 

\begin{proof}[Proof of Theorem~\ref{thm:bi-polarization}]
For any state $(\bm U^t, \bm V^t)$ in the state space, by Lemma~\ref{lem:finite-length-path} there exists a path $(\bm U^t, \bm V^t) \to \cdots \to (\bm U^{t+T}, \bm V^{t+T})$ of length $T$ that leads to $R$-bi-polarization.
Because every creator can be recommended to a user with probability at least $p_0$, each transition $(\bm U^{t'}, \bm V^{t'}) \to (\bm U^{t'+1}, \bm V^{t'+1})$ happens with probability at least $p_0^m$. 
So, the path of length $T$ has probability at least $p_0^{m T} > 0$, and the probability that the dynamics \emph{does not} reach $R$-bi-polarization after $KT$ steps is at most $(1 - p_0^{mT})^K$, which $\to 0$ as $K \to \infty$.  Therefore, with probability $1$ the dynamics will reach $R$-bi-polarization eventually. 
\end{proof}

\section{Discussions on Real-World Designs} \label{sec:real-world-discussion}


Next, we discuss how 4 types of real-world recommender system designs affect the user-creator feature dynamics: 
top-$k$ truncation, threshold truncation, diversity-boosting, and uniform traffic.


\paragraph{(1) Top-$k$ Truncation}

A prevalent practice in modern two-stage recommendation algorithms on large-scale platforms, such as YouTube \cite{covington_deep_2016}, is to first filter out items that are unlikely to be relevant to a user, then make recommendations from the remaining items.
In particular, we consider the top-$k$ truncation policy: for every user $j$, find the $k$ most relevant creators, namely, the $k$ creators whose inner products with the user $\langle \bm v_i^t, \bm u_j^t \rangle$ are largest (equivalently, the $k$ creators whose probabilities $p_{ij}^t$ of being recommended to user $j$ are highest), then recommend one of those $k$ creators to user $j$ with probability proportional to $p_{ij}^t$. The other creators will not be recommended.  
This practice significantly reduces the computation cost and improves the relevancy of recommendations. 
Interestingly, we show that such a practice also has the potential to improve the long-term diversity of a recommender system with dual influence.
\begin{definition}[clusters]
We say a state $(\bm U^t, \bm V^t)$ \emph{forms $q$ clusters} if there exist $\bm c_1, \ldots, \bm c_q \in \reals^d$ and a small number $R>0$ such that every feature vector is in the $\ell_2$ ball of some $\bm c_i$ with radius $R$ (denoted by $B(\bm c_\ell, R) = \{ \bm x: \| \bm x - \bm c_\ell \|_2 \le R \}$), and $B(\bm c_\ell, 2R) \cap B(\bm c_{\ell'}, 2R) = \emptyset$ for $\ell \ne \ell'$.
\end{definition}

It is clear that consensus has a single cluster, and bi-polarization has two.

\begin{proposition}\label{prop:top-k}
With top-$k$ truncation, there exist states $(\bm U^t, \bm V^t)$ that form $\lfloor n/k \rfloor$ clusters and are absorbing (i.e., once the system forms $\lfloor n/k \rfloor$ clusters, it forms $\lfloor n/k \rfloor$ clusters forever). 
\end{proposition}

This result is in contrast with Theorem~\ref{thm:bi-polarization} which shows that a recommender system where every creator can be recommended to every user ($p_{ij}^t > 0$) is doomed to polarize.  With top-$k$ truncation where some $p_{ij}^t = 0$, polarization can be avoided.  Experiments in Section~\ref{sec:top-k-experiment} support our prediction that top-$k$ truncation can reduce polarization and improve diversity.

\paragraph{(2) Threshold Truncation}
Besides top-$k$ truncation, threshold truncation is another way to filter out irrelevant creators: set a threshold $\tau \in [-1, 1]$ such that any user-creator pair with inner product $\langle \bm u_i, \bm v_j \rangle < \tau$ is not recommended.  A natural choice is $\tau = 0$, meaning that users will not receive recommendations predicted to be ``disliked'' by them.
Increasing $\tau$ is similar to increasing the $\beta$ in the softmax function, which improves recommendation relevance.
\begin{proposition}\label{prop:truncating}
In $d$-dimensional feature space, if user-creator pairs with $\langle \bm u_i, \bm v_j \rangle < 0$ are not recommended, then there exist stable states with $d+1$ clusters.
\end{proposition}
Although truncation at $\tau=0$ allows stable states with $d+1$ clusters to exist, the dynamics does not necessarily converge to such states; it can still end up with stable states with fewer clusters.  In fact, experiments (in Appendix~\ref{app:truncation-experiment}) show that truncation at $\tau = 0$ is \emph{not good} for diversity and causes severe polarization, while truncation at a large threshold like $\tau = 0.707$ is better at reducing polarization. 


\paragraph{(3) Diversity Boosting}
Diversity boosting aims to explore users' interests and improve users' experience by diversifying recommendation. For example, when making recommendations, the model optimizes the objective:
\begin{equation} \label{eq:div_boost}
    h_{rel}(\langle \bm u_i, \bm v_j \rangle) + \rho h_{div}( list_i, \bm v_j ),
\end{equation}
where $h_{rel}, h_{div}$ rewards the recommendation relevance and diversity respectively and $list_i$ records the recent list of recommended items to user $i$.  $h_{div}$ can take a simple form of $\sum_{j' \in list_i} 1 - \langle \bm v_{j'}, \bm v_j \rangle$, and $\rho > 0$ controls the strength of diversity-boosting. Despite being successful when users' preferences and items are fixed, this design alone cannot prevent bi-polarization in our dual-influence dynamics, since the conditions in Theorem \ref{thm:bi-polarization} are still satisfied and the users' and creators' update rules remain the same.  Experiments in Section~\ref{sec:real-world} support our claim.

\paragraph{(4) Uniform Traffic}

Adding a small fraction of uniform traffic to the personalized recommendations is another method proposed in previous works to improve recommendation diversity or to explore user preferences \cite{Moller_2018, Gao2022Kuai, borgs_bursting_2023, bonner_causal_2018, liu_bounding_2023}.
This method gives a non-zero lower bound on the probability of every creator being recommended to every user.  So, as a corollary of our Theorem~\ref{thm:bi-polarization}, it causes a recommender system with dual influence to polarize. 
Such an observation is striking as it demonstrates that optimizing for recommendation diversity in a static setting can ultimately lead to a huge loss of the system diversity in the long run.

\section{Experiments}
\label{sec:experiment}

We present experimental results on the behavior of user-creator feature dynamics on synthetic data and real-world (MovieLens 20M) data and the effect of top-$k$ truncation and threshold truncation on the dynamics. 

\subsection{Synthetic Data Experiments}




\paragraph{Setup}
The dynamics is initialized by randomly generating user and creator features on the unit sphere in $\reals^d$. 
We pick $d=10$, 
number of creators $n=50$, number of users $m=100$. 
We use the softmax recommendation probability function \eqref{eq:softmax}.
We simulate the dynamics for $T=1000$ steps, repeated $100$ times each with a new initialization. 
We choose the sign impact function $g(\bm u_j, \bm v_i) = \mathrm{sign}(\langle \bm u_j, \bm v_i \rangle)$ for creator updates. 
For user updates, we choose inner product $f(\bm v_i, \bm u_j) = \langle \bm v_i, \bm u_j \rangle$.  The inner product function is studied in previous works on users' preference dynamics (but not creators') \cite{dean_preference_2022}. 
Note that the inner product does not satisfy the condition $|f(\bm v_i, \bm u_j)| \ge \lb_f$ needed in Theorem~\ref{thm:bi-polarization}. However, we still observe convergence to polarization in nearly all experiments. 
Thus, even when this condition does not hold, users and creators still tend towards polarization in practice.

Three key parameters in our model are $\beta$ (sensitivity of the softmax function), $\eta_c$ (creator update rate), and $\eta_u$ (user update rate).  We set them to $\beta = 1, \eta_c = \eta_u = 0.1$, and change one parameter at a time to see its effect on the dynamics.  We also test what happens when some dimensions of the user features are \emph{fixed} features that are not updated. 

\paragraph{Measures}
To quantify the behavior of the dynamics, given user and creator feature vectors $(\bm U, \bm V)$ we compute the following measures, which cover diversity, relevancy, and polarization of the system:
\begin{itemize}[itemsep=1pt,topsep=0pt,parsep=2pt,partopsep=0pt, leftmargin=1.5em]
    \item \emph{Creator Diversity} (CD): diversity of the creator features, measured by their average pairwise distance \cite{ziegler-2005-improving, Nguyen-2014-Exploring}: $\mathrm{CD}(\bm V) = \frac{1}{n(n-1)}\sum_{i=1}^n \sum_{j\ne i} \|\bm v_i - \bm v_j\|$. 
    \item \emph{Recommendation Diversity} (RD): diversity of the contents recommended to a user, measured by the weighted variance of the contents: $\mathrm{RD}(\bm U, \bm V; \beta) = \frac{1}{m}\sum_{j=1}^m \sum_{i=1}^n p_{ij} \| \bm v_i - \overline{\bm v}_j \|^2$, 
    where $\overline{\bm v}_j = \sum_{i=1}^n p_{ij} \bm v_i$ and $p_{ij} = \frac{\exp(\beta \langle \bm u_j, \bm v_i\rangle)}{\sum_{i=1}^n \exp(\beta \langle \bm u_j, \bm v_i\rangle)}$. 
    \item \emph{Recommendation Relevance} (RR): relevance of the contents recommended to a user, measured by the weighted average of inner products: $\mathrm{RR}(\bm U, \bm V; \beta) = \frac{1}{m} \sum_{j=1}^m \sum_{i=1}^n p_{ij} \langle \bm u_j, \bm v_i \rangle$. 
    \item
    \emph{Tendency to Polarization} (TP): This is a novel measure we propose to quantify how close the system is to consensus or bi-polarization, measured by the average absolute inner products between the creators: $\mathrm{TP}(\bm V) = \frac{1}{n^2} \sum_{i=1}^n \sum_{k=1}^n | \langle \bm v_i, \bm v_k \rangle |$.
    $\mathrm{TP}(\bm V)$ being closer to 1 means that the system is more polarized, because the term $| \langle \bm v_i, \bm v_k \rangle|$ is $1$ iff the two vectors $\bm v_i, \bm v_k$ are equal or opposite to each other.  
\end{itemize}

It is worth noting that a high creator diversity is necessary for simultaneously achieving high recommendation relevance and high recommendation diversity.  For example, they cannot be simultaneously achieved in a polarized state. 

\paragraph{Sensitivity Parameter $\beta$}
A larger $\beta$ means that a user will be recommended more relevant content/creator with a higher probability. $\beta = 0$, on the other hand, means that the user receives uniform recommendations across all creators. 
\textbf{Our main observation} from the experiments is: \emph{a larger $\beta$ leads to higher creator diversity and alleviated polarization in the long run}.

Figure \ref{fig:varying_beta_symmetric} shows snapshots of the dynamics at different time steps under different $\beta$ values.  Here, we choose dimension $d=3$ instead of $10$ so the feature vectors can be visualized on a 3d sphere.  We see that the system tends to form more clusters at time $t=200$ as $\beta$ increases. 

\begin{figure}
    \centering
    \includegraphics[width=\textwidth]{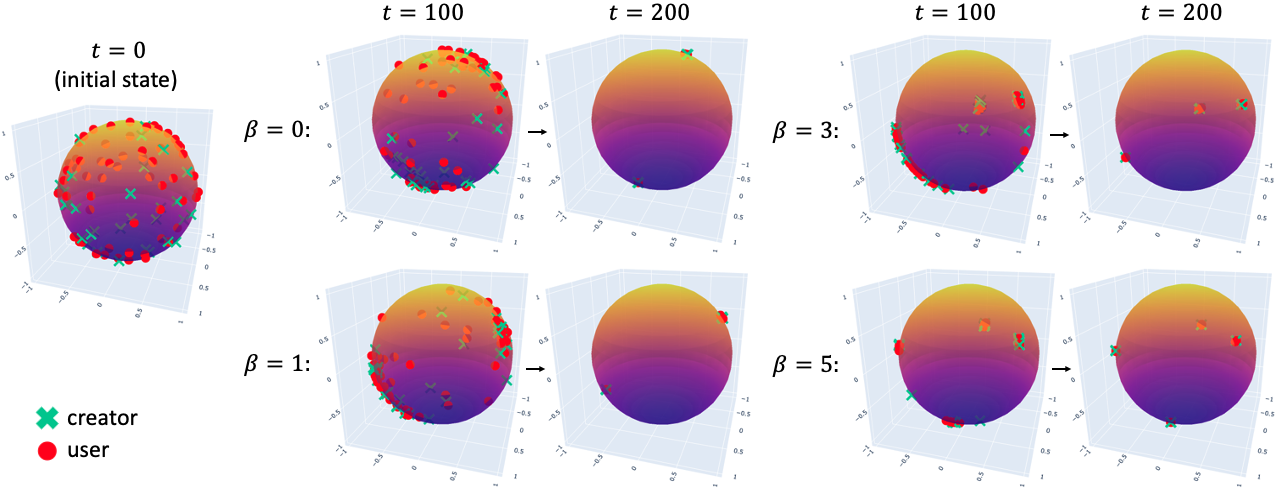}
    \captionsetup{justification=centering}
    \caption{Snapshots of the dynamics simulated with the same initialization but different recommendation sensitivity $\beta$. A larger $\beta$ resulted in more clusters at time step $t=200$.}
    \label{fig:varying_beta_symmetric}
\end{figure}

\begin{figure}[ht]
    \centering
    \includegraphics[width=\linewidth]{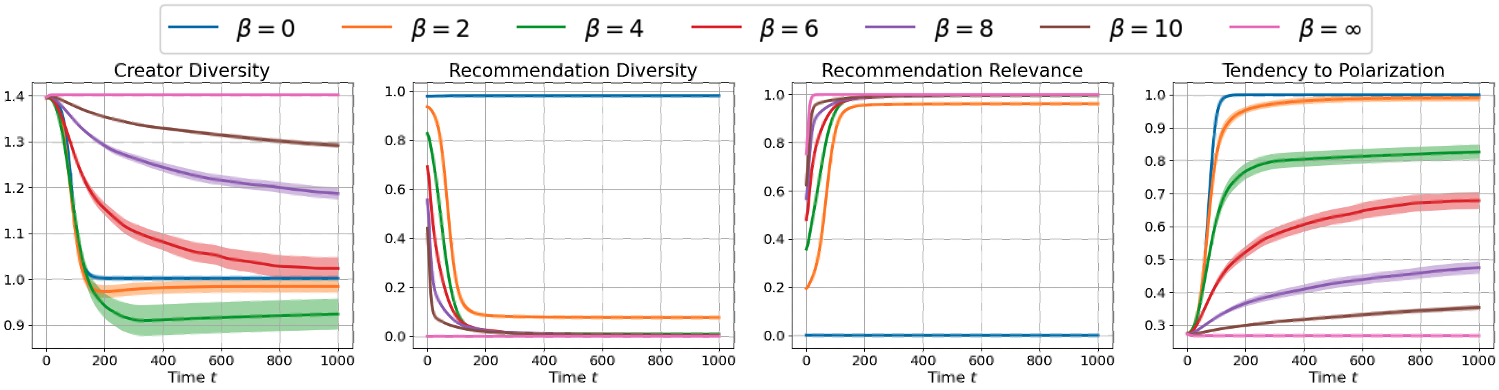}
    \caption{Changes of measures over time under different sensitivity parameter $\beta$, on synthetic data. $\beta=0$ means uniform (non-personalized) recommendation. $\beta=\infty$ means hard-max recommendation: only recommend the single most relevant creator to a user. Larger $\beta$ reduces the tendency to polarization. 
    }
    \label{fig:varying-beta}
\end{figure}

Figure~\ref{fig:varying-beta} shows the changes of the 4 measures CD, RD, RR, TP over time under different $\beta$ values.
We see that a more diverse recommendation policy (a smaller $\beta$) leads to lower creator diversity and a higher level of polarization in the long run.
In particular, while Creator Diversity reaches a similar level under different $\beta$ in the end, it \emph{drops at a slower rate} with a \emph{larger} $\beta$ (see $\beta=5, 6$).  Moreover, from the plot of Tendency to Polarization, we see that a larger $\beta$ \emph{alleviates} polarization, which means improvement in the diversity of the whole system. 

An explanation for our observation is the following: When $\beta$ is smaller, each user receives more uniform recommendations across all creators. So, for different creators, the sets of users recommended to those creators have larger intersections.  Since the creator updates are based on the sets of recommended users, different creators will be moving towards more similar directions.  This leads to faster polarization.  One can also predict this observation from Theorem~\ref{thm:bi-polarization}: when $\beta$ is large, the minimum recommendation probability $p_0$ of the softmax function tends to $0$, so it might take a long time for the system to converge to polarization, while with a small $\beta$ the system polarizes quickly.

\paragraph{Update Rates $\eta_c$ and $\eta_u$}
A larger $\eta_c$ means that creator features are updated faster, and intuitively should lead to faster polarization.  This is validated in experiments: Figure~\ref{fig:varying-eta-c} shows that a larger $\eta_c$ indeed causes more extreme polarization and lower diversity (both CD and RD).
A larger $\eta_u$ means that user features are updated faster.  It has a similar effect of exacerbating polarization as $\eta_c$ does, as shown in Figure~\ref{fig:varying-eta-u}.

\begin{figure}[ht]
    \centering
    \includegraphics[width=\linewidth]{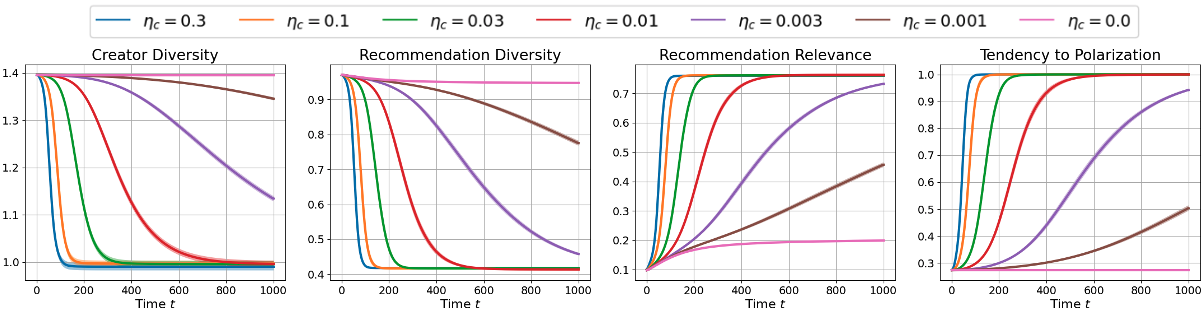}
    \caption{Changes of measures over time under different creator update rate $\eta_c$, on synthetic data}
    \label{fig:varying-eta-c}
\end{figure}

\begin{figure}[ht]
    \centering
    \includegraphics[width=\linewidth]{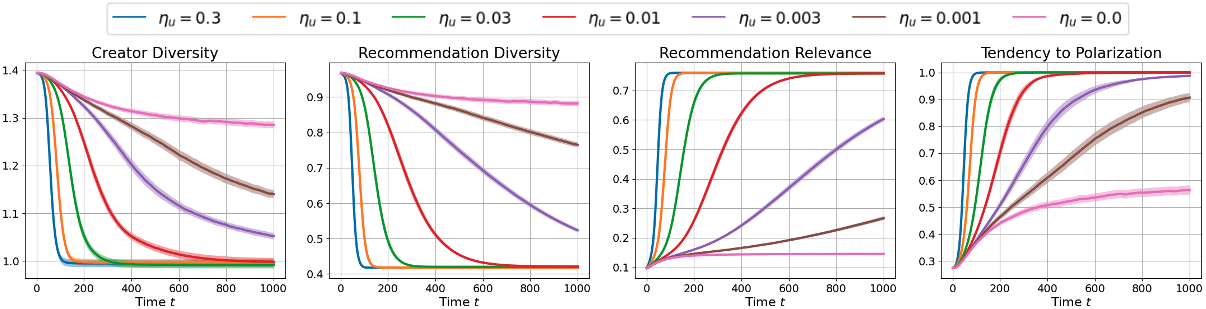}
    
    \caption{Changes of measures over time under different user update rate $\eta_u$, on synthetic data}
    \label{fig:varying-eta-u}
\end{figure}

\paragraph{Number of Fixed Dimensions}
We also consider the scenario where some dimensions of the user feature vectors are fixed features and thus not updated from round to round (e.g., age, gender), which is a realistic scenario. 
Detailed results are in Appendix~\ref{app:fixed-dimension}.  The \textbf{main observation} is: \emph{as the number of fixed dimensions increases, the diversity of the system improves and the degree of polarization is reduced.}  This is similar to the effect of decreasing user update rate $\eta_u$.  The observation that fixed dimensions of user features help to improve diversity might be a reason why the recommender systems in practice are not as polarized as our theoretical prediction.

\subsection{Real-World Data Experiments}
\label{sec:real-world}
In this part, we conduct experiments on the MovieLens 20M dataset \cite{harper2015movielens}. We use a real-world two-tower recommendation model with 16-dimensional tower tops as the user and creator embeddings (Figure \ref{fig:two_tower}).
The model is initialized by fitting a two-tower model \cite{huang_learning_2013} on the existing MovieLens rating data and using the tower tops as the initial user and creator embeddings.
Then we follow Algorithm \ref{alg:example} to simulate the dynamics.


Figure \ref{fig:movielens_beta} shows the effect of the recommendation sensitivity parameter $\beta$ on the system. 
Similar to the synthetic data experiments, a smaller $\beta$ (more diverse recommendation for the users in the short term) results in faster polarization.  We note that the joint results on CD and TP are more informative than each one alone: despite $\beta = 0$ has a higher creator diversity than $\beta=2$ at $T=500$, the system reaches polarization more quickly under $\beta=0$.  The higher creator diversity under $\beta=0$ is because the two clusters in the bi-polarized state are more balanced so the average pairwise distance between the creators is higher under $\beta=0$ than under $\beta = 2$. 

Figure \ref{fig:movielens_rho} shows the effect of using diversity-aware objective (Eq.~\ref{eq:div_boost}) for diversity boosting. 
We see that myopically promoting the short-term recommendation diversity (using a larger $\rho$) results a higher creation diversity but also a higher tendency to polarization. 
A possible explanation for this phenomenon is, similar to the case with $\beta$, the system polarizes into two balanced clusters which actually have a large average pairwise distance.  In this case, Tendency to Polarization is a better measure for diversity loss than Creator Diversity (average pairwise distance).



\begin{figure}[h]
    \centering
    \includegraphics[width=\linewidth]{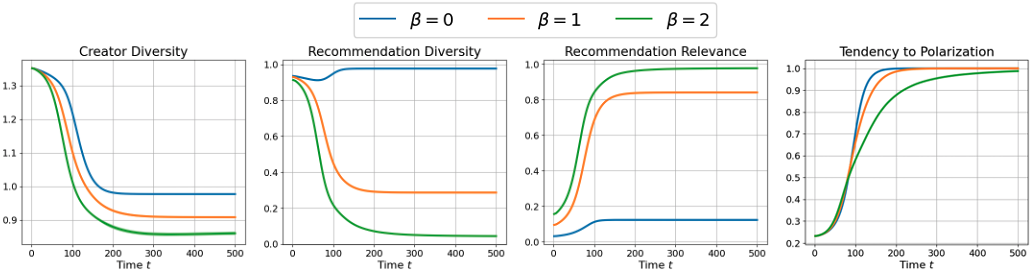}
    \caption{Experiment on MovieLens 20M dataset under different recommendation sensitivity $\beta$}
    \label{fig:movielens_beta}
\end{figure}


\begin{figure}[h]
    \centering
    \includegraphics[width=\linewidth]{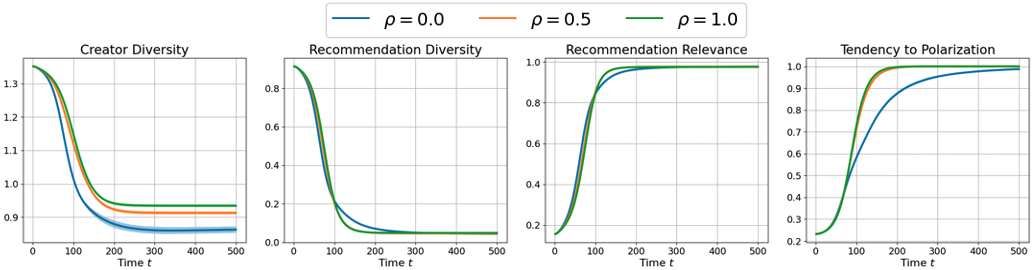}
    \caption{Experiment on MovieLens 20M dataset with diversity-aware objective under different $\rho$
    }
    \label{fig:movielens_rho}
\end{figure}

\subsection{Top-$k$ Truncation and Threshold Truncation}
\label{sec:top-k-experiment}

We experimented with top-$k$ truncation on the synthetic data (Table \ref{table:top-k}) and the MovieLens dataset (Appendix~\ref{app:additional-real-world}).
Our \textbf{main observation} is: \emph{a small $k$ improves the diversity of the recommender system and reduces polarization}.  This is consistent with our theoretical prediction (Proposition~\ref{prop:top-k}). 
However, there is a tradeoff between the diversity of recommendations to users (RD) and the diversity of creations in the system (CD and TP).
A top-$k$ truncation policy with small $k$ is ``not diverse'' for users because it exposes a user only to a small set of contents.
However, such a policy can lead to a more diverse outcome in the whole system.  This tradeoff is worth further studying. 

\begin{footnotesize}
\begin{table}[h]
\centering
\setlength{\tabcolsep}{2pt}
\footnotesize
\caption{Diversity improvement by top-$k$ truncation on synthetic data}
\label{table:top-k}
\begin{tabular}{|l|l||c|c|c|c||}
\hline
$\beta$              & $k$  & Creator Diversity                          & Recommendation Diversity                       & Recommendation Relevance                          & Tendency to Polarization                          \\ \hline
\multirow{6}{*}{$1$} & $50$ & $1.00_{\pm .03}$             & $~~~\mathbf{0.42_{\pm 0.01}}$ & $~~0.76_{\pm 0.01}$             & $1.00_{\pm 10^{-3}}$          \\
                     & $25$ & $0.52_{\pm .32}$             & $~~0.03_{\pm 0.03}$          & $~~0.97_{\pm 0.02}$             & $0.91_{\pm 0.13}$             \\
                     & $20$ & $0.91_{\pm .15}$             & $~~0.00_{\pm 0.01}$          & $~~1.00_{\pm 0.01}$             & $0.68_{\pm 0.12}$             \\
                     & $10$ & $1.17_{\pm .06}$             & $\quad0.00_{\pm 10^{-3}}$       & $\quad1.00_{\pm 10^{-3}}$          & $0.50_{\pm 0.07}$             \\
                     & $5$  & $1.31_{\pm .02}$             & $\quad0.00_{\pm 10^{-3}}$       & $\quad1.00_{\pm 10^{-3}}$          & $0.35_{\pm 0.03}$             \\
                     & $1$  & $\quad\mathbf{1.40_{\pm 10^{-3}}}$ & $\quad0.00_{\pm 10^{-3}}$       & $~\quad\mathbf{1.00_{\pm 10^{-3}}}$ & $\quad\mathbf{0.27_{\pm 10^{-3}}}$ \\ \hline
\multirow{6}{*}{$3$} & $50$ & $0.95_{\pm .14}$             & $~~~\mathbf{0.02_{\pm 0.02}}$ & $~~0.99_{\pm 0.01}$             & $0.91_{\pm 0.10}$             \\
                     & $25$ & $0.80_{\pm .24}$             & $~~0.00_{\pm 0.01}$          & $\quad1.00_{\pm 10^{-3}}$          & $0.77_{\pm 0.13}$             \\
                     & $20$ & $0.89_{\pm .13}$             & $\quad0.00_{\pm 10^{-3}}$       & $\quad1.00_{\pm 10^{-3}}$          & $0.74_{\pm 0.11}$             \\
                     & $10$ & $1.18_{\pm .05}$             & $\quad0.00_{\pm 10^{-3}}$       & $\quad1.00_{\pm 10^{-3}}$          & $0.49_{\pm 0.07}$             \\
                     & $5$  & $1.31_{\pm .02}$             & $\quad0.00_{\pm 10^{-3}}$       & $\quad1.00_{\pm 10^{-3}}$          & $0.34_{\pm 0.03}$             \\
                    & $1$  & $\quad\mathbf{1.40_{\pm 10^{-3}}}$ & $\quad0.00_{\pm 10^{-3}}$       & $~\quad\mathbf{1.00_{\pm 10^{-3}}}$ & $\quad\mathbf{0.27_{\pm 10^{-3}}}$\\
                    \hline
\end{tabular}
\end{table}
\end{footnotesize}

We also experimented with threshold truncation on synthetic data (Appendix~\ref{app:truncation-experiment}) and MovieLens data (Appendix~\ref{app:additional-real-world}).  The effect of a large truncation threshold $\tau$ is similar to the effect of a small $k$ in top-$k$ truncation.  

\section{Conclusion} \label{sec:conclusion}

Our work defines a dynamics model to capture the dual influence of recommender systems on user preferences and content creation.
Although our model is a theoretical abstraction, we believe that it captures the essence of a real-world recommender system, and our effort is an important initial endeavor to study diversity in recommender systems with dual influence.
(See Appendix~\ref{app:additional-discussion-real-world} for some additional discussions on real-world recommender systems.)
We specifically point out different concepts of diversity in recommender systems (creation diversity, recommendation diversity, and tendency to polarization) and provide theoretical and empirical evidences to show that, due to dual influence,
myopically optimizing recommendation diversity might hurt the long-term creation diversity and result in polarization of the system.
We also explore popular design choices in recommender systems and show an interesting and somewhat counter-intuitive result that designs purely targeting efficiency improvement (e.g., top-$k$ truncation) can alleviate polarization.
We believe that the insights from our work are valuable to building healthy and sustainable recommender systems, and our results can inspire more sophisticated solutions for improving the long-term diversity of recommender systems to be developed.




\bibliographystyle{plainnat}

\bibliography{bibs}

\begin{thebibliography}{47}
\providecommand{\natexlab}[1]{#1}
\providecommand{\url}[1]{\texttt{#1}}
\expandafter\ifx\csname urlstyle\endcsname\relax
  \providecommand{\doi}[1]{doi: #1}\else
  \providecommand{\doi}{doi: \begingroup \urlstyle{rm}\Url}\fi

\bibitem[Acemoğlu et~al.(2013)Acemoğlu, Como, Fagnani, and Ozdaglar]{opinion_ozdaglar2013}
Daron Acemoğlu, Giacomo Como, Fabio Fagnani, and Asuman Ozdaglar.
\newblock Opinion fluctuations and disagreement in social networks.
\newblock \emph{Mathematics of Operations Research}, 38\penalty0 (1):\penalty0 1--27, 2013.
\newblock ISSN 0364765X, 15265471.
\newblock URL \url{http://www.jstor.org/stable/23358646}.

\bibitem[Acharya et~al.(2024)Acharya, Vangala, Wang, and Ziani]{acharya2024producers}
Krishna Acharya, Varun Vangala, Jingyan Wang, and Juba Ziani.
\newblock Producers equilibria and dynamics in engagement-driven recommender systems.
\newblock \emph{arXiv preprint arXiv:2401.16641}, 2024.

\bibitem[Agarwal and Brown(2023)]{agarwal2023online}
Arpit Agarwal and William Brown.
\newblock Online recommendations for agents with discounted adaptive preferences.
\newblock \emph{arXiv preprint arXiv:2302.06014}, 2023.

\bibitem[Altafini and Lini(2015)]{Altafini_2015}
Claudio Altafini and Gabriele Lini.
\newblock Predictable dynamics of opinion forming for networks with antagonistic interactions.
\newblock \emph{IEEE Transactions on Automatic Control}, 60\penalty0 (2):\penalty0 342--357, 2015.
\newblock \doi{10.1109/TAC.2014.2343371}.

\bibitem[Aridor et~al.(2020)Aridor, Goncalves, and Sikdar]{Aridor_2020}
Guy Aridor, Duarte Goncalves, and Shan Sikdar.
\newblock Deconstructing the filter bubble: User decision-making recommender systems.
\newblock In \emph{Proceedings of the 14th ACM Conference on Recommender Systems}, RecSys '20, page 82–91, New York, NY, USA, 2020. Association for Computing Machinery.
\newblock ISBN 9781450375832.
\newblock \doi{10.1145/3383313.3412246}.
\newblock URL \url{https://doi.org/10.1145/3383313.3412246}.

\bibitem[Bansal and Gupta(2019)]{bansal_potential-function_2019}
Nikhil Bansal and Anupam Gupta.
\newblock Potential-{Function} {Proofs} for {Gradient} {Methods}.
\newblock \emph{Theory of Computing}, 15\penalty0 (4):\penalty0 1--32, 2019.
\newblock \doi{10.4086/toc.2019.v015a004}.
\newblock URL \url{https://theoryofcomputing.org/articles/v015a004}.
\newblock Publisher: Theory of Computing.

\bibitem[Ben-Porat and Tennenholtz(2018)]{ben2018game}
Omer Ben-Porat and Moshe Tennenholtz.
\newblock A game-theoretic approach to recommendation systems with strategic content providers.
\newblock \emph{Advances in Neural Information Processing Systems}, 31, 2018.

\bibitem[Bonner and Vasile(2018)]{bonner_causal_2018}
Stephen Bonner and Flavian Vasile.
\newblock Causal embeddings for recommendation.
\newblock In \emph{Proceedings of the 12th {ACM} {Conference} on {Recommender} {Systems}}, pages 104--112, Vancouver British Columbia Canada, September 2018. ACM.
\newblock ISBN 978-1-4503-5901-6.
\newblock \doi{10.1145/3240323.3240360}.
\newblock URL \url{https://dl.acm.org/doi/10.1145/3240323.3240360}.

\bibitem[Borgs et~al.(2023)Borgs, Chayes, Ikeokwu, and Vitercik]{borgs_bursting_2023}
Christian Borgs, Jennifer Chayes, Christian Ikeokwu, and Ellen Vitercik.
\newblock Bursting the {Filter} {Bubble}: {Disincentivizing} {Echo} {Chambers} in {Social} {Networks}.
\newblock In \emph{Proceedings of {EAAMO}’23: {ACM} {Conference} on {Equity} and {Access} in {Algorithms}, {Mechanisms}, and {Optimization}}, 2023.

\bibitem[Caponigro et~al.(2015)Caponigro, Chiara~Lai, and Piccoli]{caponigro_nonlinear_2015}
Marco Caponigro, Anna Chiara~Lai, and Benedetto Piccoli.
\newblock A nonlinear model of opinion formation on the sphere.
\newblock \emph{Discrete \& Continuous Dynamical Systems - A}, 35\penalty0 (9):\penalty0 4241--4268, 2015.
\newblock ISSN 1553-5231.
\newblock \doi{10.3934/dcds.2015.35.4241}.
\newblock URL \url{http://aimsciences.org//article/doi/10.3934/dcds.2015.35.4241}.

\bibitem[Carbonell and Goldstein(1998)]{Carbonell_1998}
Jaime Carbonell and Jade Goldstein.
\newblock The use of mmr, diversity-based reranking for reordering documents and producing summaries.
\newblock In \emph{Proceedings of the 21st Annual International ACM SIGIR Conference on Research and Development in Information Retrieval}, SIGIR '98, page 335–336, New York, NY, USA, 1998. Association for Computing Machinery.
\newblock ISBN 1581130155.
\newblock \doi{10.1145/290941.291025}.
\newblock URL \url{https://doi.org/10.1145/290941.291025}.

\bibitem[Cheng et~al.(2017)Cheng, Wang, Ma, Sun, and Xiong]{Cheng_2017}
Peizhe Cheng, Shuaiqiang Wang, Jun Ma, Jiankai Sun, and Hui Xiong.
\newblock Learning to recommend accurate and diverse items.
\newblock In \emph{Proceedings of the 26th International Conference on World Wide Web}, WWW '17, page 183–192, Republic and Canton of Geneva, CHE, 2017. International World Wide Web Conferences Steering Committee.
\newblock ISBN 9781450349130.
\newblock \doi{10.1145/3038912.3052585}.
\newblock URL \url{https://doi.org/10.1145/3038912.3052585}.

\bibitem[Covington et~al.(2016)Covington, Adams, and Sargin]{covington_deep_2016}
Paul Covington, Jay Adams, and Emre Sargin.
\newblock Deep {Neural} {Networks} for {YouTube} {Recommendations}.
\newblock In \emph{Proceedings of the 10th {ACM} {Conference} on {Recommender} {Systems}}, pages 191--198, Boston Massachusetts USA, September 2016. ACM.
\newblock ISBN 978-1-4503-4035-9.
\newblock \doi{10.1145/2959100.2959190}.
\newblock URL \url{https://dl.acm.org/doi/10.1145/2959100.2959190}.

\bibitem[Dean and Morgenstern(2022)]{dean_preference_2022}
Sarah Dean and Jamie Morgenstern.
\newblock Preference {Dynamics} {Under} {Personalized} {Recommendations}.
\newblock In \emph{Proceedings of the 23rd {ACM} {Conference} on {Economics} and {Computation}}, pages 795--816, Boulder CO USA, July 2022. ACM.
\newblock ISBN 978-1-4503-9150-4.
\newblock \doi{10.1145/3490486.3538346}.
\newblock URL \url{https://dl.acm.org/doi/10.1145/3490486.3538346}.

\bibitem[Eilat and Rosenfeld(2023)]{eilat_performative_2023}
Itay Eilat and Nir Rosenfeld.
\newblock Performative {Recommendation}: {Diversifying} {Content} via {Strategic} {Incentives}, June 2023.
\newblock URL \url{http://arxiv.org/abs/2302.04336}.
\newblock arXiv:2302.04336 [cs].

\bibitem[Gao et~al.(2022)Gao, Li, Zhang, Chen, Li, Lei, Jiang, and He]{Gao2022Kuai}
Chongming Gao, Shijun Li, Yuan Zhang, Jiawei Chen, Biao Li, Wenqiang Lei, Peng Jiang, and Xiangnan He.
\newblock Kuairand: An unbiased sequential recommendation dataset with randomly exposed videos.
\newblock In \emph{Proceedings of the 31st ACM International Conference on Information \& Knowledge Management}, CIKM '22, page 3953–3957, New York, NY, USA, 2022. Association for Computing Machinery.
\newblock ISBN 9781450392365.
\newblock \doi{10.1145/3511808.3557624}.
\newblock URL \url{https://doi.org/10.1145/3511808.3557624}.

\bibitem[Golub and Jackson(2010)]{Golub_2010}
Benjamin Golub and Matthew~O. Jackson.
\newblock Naïve learning in social networks and the wisdom of crowds.
\newblock \emph{American Economic Journal: Microeconomics}, 2\penalty0 (1):\penalty0 112--49, February 2010.
\newblock \doi{10.1257/mic.2.1.112}.
\newblock URL \url{https://www.aeaweb.org/articles?id=10.1257/mic.2.1.112}.

\bibitem[Hardt et~al.(2022)Hardt, Jagadeesan, and Mendler-D{\"u}nner]{hardt2022performative}
Moritz Hardt, Meena Jagadeesan, and Celestine Mendler-D{\"u}nner.
\newblock Performative power.
\newblock \emph{Advances in Neural Information Processing Systems}, 35:\penalty0 22969--22981, 2022.

\bibitem[Harper and Konstan(2015)]{harper2015movielens}
F~Maxwell Harper and Joseph~A Konstan.
\newblock The movielens datasets: History and context.
\newblock \emph{Acm transactions on interactive intelligent systems (tiis)}, 5\penalty0 (4):\penalty0 1--19, 2015.

\bibitem[Hron et~al.(2023)Hron, Krauth, Jordan, Kilbertus, and Dean]{hron_modeling_2023}
Jiri Hron, Karl Krauth, Michael~I. Jordan, Niki Kilbertus, and Sarah Dean.
\newblock Modeling {Content} {Creator} {Incentives} on {Algorithm}-{Curated} {Platforms}, July 2023.
\newblock URL \url{http://arxiv.org/abs/2206.13102}.
\newblock arXiv:2206.13102 [cs, stat].

\bibitem[Huang et~al.(2013)Huang, He, Gao, Deng, Acero, and Heck]{huang_learning_2013}
Po-Sen Huang, Xiaodong He, Jianfeng Gao, Li~Deng, Alex Acero, and Larry Heck.
\newblock Learning deep structured semantic models for web search using clickthrough data.
\newblock In \emph{Proceedings of the 22nd {ACM} international conference on {Information} \& {Knowledge} {Management}}, pages 2333--2338, San Francisco California USA, October 2013. ACM.
\newblock ISBN 978-1-4503-2263-8.
\newblock \doi{10.1145/2505515.2505665}.
\newblock URL \url{https://dl.acm.org/doi/10.1145/2505515.2505665}.

\bibitem[Hurley(2013)]{Hurley_2013}
Neil~J. Hurley.
\newblock Personalised ranking with diversity.
\newblock In \emph{Proceedings of the 7th ACM Conference on Recommender Systems}, RecSys '13, page 379–382, New York, NY, USA, 2013. Association for Computing Machinery.
\newblock ISBN 9781450324090.
\newblock \doi{10.1145/2507157.2507226}.
\newblock URL \url{https://doi.org/10.1145/2507157.2507226}.

\bibitem[Jagadeesan et~al.(2024)Jagadeesan, Garg, and Steinhardt]{jagadeesan2024supply}
Meena Jagadeesan, Nikhil Garg, and Jacob Steinhardt.
\newblock Supply-side equilibria in recommender systems.
\newblock \emph{Advances in Neural Information Processing Systems}, 36, 2024.

\bibitem[Jiang et~al.(2019)Jiang, Chiappa, Lattimore, György, and Kohli]{jiang_degenerate_2019}
Ray Jiang, Silvia Chiappa, Tor Lattimore, András György, and Pushmeet Kohli.
\newblock Degenerate {Feedback} {Loops} in {Recommender} {Systems}.
\newblock In \emph{Proceedings of the 2019 {AAAI}/{ACM} {Conference} on {AI}, {Ethics}, and {Society}}, pages 383--390, Honolulu HI USA, January 2019. ACM.
\newblock ISBN 978-1-4503-6324-2.
\newblock \doi{10.1145/3306618.3314288}.
\newblock URL \url{https://dl.acm.org/doi/10.1145/3306618.3314288}.

\bibitem[Judith~Möller and van Es(2018)]{Moller_2018}
Natali~Helberger Judith~Möller, Damian~Trilling and Bram van Es.
\newblock Do not blame it on the algorithm: an empirical assessment of multiple recommender systems and their impact on content diversity.
\newblock \emph{Information, Communication \& Society}, 21\penalty0 (7):\penalty0 959--977, 2018.
\newblock \doi{10.1080/1369118X.2018.1444076}.
\newblock URL \url{https://doi.org/10.1080/1369118X.2018.1444076}.

\bibitem[Kalimeris et~al.(2021)Kalimeris, Bhagat, Kalyanaraman, and Weinsberg]{Kalimeris_2021_Preference}
Dimitris Kalimeris, Smriti Bhagat, Shankar Kalyanaraman, and Udi Weinsberg.
\newblock Preference amplification in recommender systems.
\newblock In \emph{Proceedings of the 27th ACM SIGKDD Conference on Knowledge Discovery \& Data Mining}, KDD '21, page 805–815, New York, NY, USA, 2021. Association for Computing Machinery.
\newblock ISBN 9781450383325.
\newblock \doi{10.1145/3447548.3467298}.
\newblock URL \url{https://doi.org/10.1145/3447548.3467298}.

\bibitem[Levanon and Rosenfeld(2022)]{levanon2022generalized}
Sagi Levanon and Nir Rosenfeld.
\newblock Generalized strategic classification and the case of aligned incentives.
\newblock In \emph{International Conference on Machine Learning}, pages 12593--12618. PMLR, 2022.

\bibitem[Li et~al.(2010)Li, Chu, Langford, and Schapire]{Li2010Contextual}
Lihong Li, Wei Chu, John Langford, and Robert~E. Schapire.
\newblock A contextual-bandit approach to personalized news article recommendation.
\newblock In \emph{Proceedings of the 19th International Conference on World Wide Web}, WWW '10, page 661–670, New York, NY, USA, 2010. Association for Computing Machinery.
\newblock ISBN 9781605587998.
\newblock \doi{10.1145/1772690.1772758}.
\newblock URL \url{https://doi.org/10.1145/1772690.1772758}.

\bibitem[Li and Spong(2014)]{TAC_2014_Spong}
Wei Li and Mark~W. Spong.
\newblock Unified cooperative control of multiple agents on a sphere for different spherical patterns.
\newblock \emph{IEEE Transactions on Automatic Control}, 59\penalty0 (5):\penalty0 1283--1289, 2014.
\newblock \doi{10.1109/TAC.2013.2286897}.

\bibitem[Liu et~al.(2023)Liu, Cheng, Lin, Zhang, Dong, Zhang, He, Pan, and Ming]{liu_bounding_2023}
Dugang Liu, Pengxiang Cheng, Zinan Lin, Xiaolian Zhang, Zhenhua Dong, Rui Zhang, Xiuqiang He, Weike Pan, and Zhong Ming.
\newblock Bounding {System}-{Induced} {Biases} in {Recommender} {Systems} with a {Randomized} {Dataset}.
\newblock \emph{ACM Transactions on Information Systems}, 41\penalty0 (4):\penalty0 1--26, October 2023.
\newblock ISSN 1046-8188, 1558-2868.
\newblock \doi{10.1145/3582002}.
\newblock URL \url{https://dl.acm.org/doi/10.1145/3582002}.

\bibitem[Markdahl et~al.(2018)Markdahl, Thunberg, and Goncalves]{markdahl_almost_2018}
Johan Markdahl, Johan Thunberg, and Jorge Goncalves.
\newblock Almost {Global} {Consensus} on the \$n\$ -{Sphere}.
\newblock \emph{IEEE Transactions on Automatic Control}, 63\penalty0 (6):\penalty0 1664--1675, June 2018.
\newblock ISSN 0018-9286, 1558-2523.
\newblock \doi{10.1109/TAC.2017.2752799}.
\newblock URL \url{https://ieeexplore.ieee.org/document/8038829/}.

\bibitem[Masrour et~al.(2020)Masrour, Wilson, Yan, Tan, and Esfahanian]{Masrour_2020}
Farzan Masrour, Tyler Wilson, Heng Yan, Pang-Ning Tan, and Abdol Esfahanian.
\newblock Bursting the filter bubble: Fairness-aware network link prediction.
\newblock \emph{Proceedings of the AAAI Conference on Artificial Intelligence}, 34\penalty0 (01):\penalty0 841--848, Apr. 2020.
\newblock \doi{10.1609/aaai.v34i01.5429}.
\newblock URL \url{https://ojs.aaai.org/index.php/AAAI/article/view/5429}.

\bibitem[Nguyen et~al.(2014)Nguyen, Hui, Harper, Terveen, and Konstan]{Nguyen-2014-Exploring}
Tien~T. Nguyen, Pik-Mai Hui, F.~Maxwell Harper, Loren Terveen, and Joseph~A. Konstan.
\newblock Exploring the filter bubble: the effect of using recommender systems on content diversity.
\newblock In \emph{Proceedings of the 23rd International Conference on World Wide Web}, WWW '14, page 677–686, New York, NY, USA, 2014. Association for Computing Machinery.
\newblock ISBN 9781450327442.
\newblock \doi{10.1145/2566486.2568012}.
\newblock URL \url{https://doi.org/10.1145/2566486.2568012}.

\bibitem[Perdomo et~al.(2020)Perdomo, Zrnic, Mendler-D{\"u}nner, and Hardt]{perdomo2020performative}
Juan Perdomo, Tijana Zrnic, Celestine Mendler-D{\"u}nner, and Moritz Hardt.
\newblock Performative prediction.
\newblock In \emph{International Conference on Machine Learning}, pages 7599--7609. PMLR, 2020.

\bibitem[Prasad et~al.(2023)Prasad, Mladenov, and Boutilier]{prasad2023content}
Siddharth Prasad, Martin Mladenov, and Craig Boutilier.
\newblock Content prompting: Modeling content provider dynamics to improve user welfare in recommender ecosystems.
\newblock \emph{arXiv preprint arXiv:2309.00940}, 2023.

\bibitem[Santos et~al.(2021)Santos, Lelkes, and Levin]{Santos_2021}
Fernando~P. Santos, Yphtach Lelkes, and Simon~A. Levin.
\newblock Link recommendation algorithms and dynamics of polarization in online social networks.
\newblock \emph{Proceedings of the National Academy of Sciences}, 118\penalty0 (50):\penalty0 e2102141118, 2021.
\newblock \doi{10.1073/pnas.2102141118}.
\newblock URL \url{https://www.pnas.org/doi/abs/10.1073/pnas.2102141118}.

\bibitem[Sarlette et~al.(2007)Sarlette, Sepulchre, and Leonard]{Sarlette_2007}
Alain Sarlette, Rodolphe Sepulchre, and Naomi~Ehrich Leonard.
\newblock Autonomous rigid body attitude synchronization.
\newblock In \emph{2007 46th IEEE Conference on Decision and Control}, pages 2566--2571, 2007.
\newblock \doi{10.1109/CDC.2007.4434153}.

\bibitem[Su et~al.(2013)Su, Yin, Chen, and Yu]{Su_2013}
Ruilong Su, Li'Ang Yin, Kailong Chen, and Yong Yu.
\newblock Set-oriented personalized ranking for diversified top-n recommendation.
\newblock In \emph{Proceedings of the 7th ACM Conference on Recommender Systems}, RecSys '13, page 415–418, New York, NY, USA, 2013. Association for Computing Machinery.
\newblock ISBN 9781450324090.
\newblock \doi{10.1145/2507157.2507207}.
\newblock URL \url{https://doi.org/10.1145/2507157.2507207}.

\bibitem[Wilhelm et~al.(2018)Wilhelm, Ramanathan, Bonomo, Jain, Chi, and Gillenwater]{Wilhelm_2018}
Mark Wilhelm, Ajith Ramanathan, Alexander Bonomo, Sagar Jain, Ed~H. Chi, and Jennifer Gillenwater.
\newblock Practical diversified recommendations on youtube with determinantal point processes.
\newblock In \emph{Proceedings of the 27th ACM International Conference on Information and Knowledge Management}, CIKM '18, page 2165–2173, New York, NY, USA, 2018. Association for Computing Machinery.
\newblock ISBN 9781450360142.
\newblock \doi{10.1145/3269206.3272018}.
\newblock URL \url{https://doi.org/10.1145/3269206.3272018}.

\bibitem[Yang et~al.(2018)Yang, Cui, Xuan, Wang, Belongie, and Estrin]{Yang2018Unbiased}
Longqi Yang, Yin Cui, Yuan Xuan, Chenyang Wang, Serge Belongie, and Deborah Estrin.
\newblock Unbiased offline recommender evaluation for missing-not-at-random implicit feedback.
\newblock In \emph{Proceedings of the 12th ACM Conference on Recommender Systems}, RecSys '18, page 279–287, New York, NY, USA, 2018. Association for Computing Machinery.
\newblock ISBN 9781450359016.
\newblock \doi{10.1145/3240323.3240355}.
\newblock URL \url{https://doi.org/10.1145/3240323.3240355}.

\bibitem[Yao et~al.(2022)Yao, Li, Nekipelov, Wang, and Xu]{Yao_2022_Learning}
Fan Yao, Chuanhao Li, Denis Nekipelov, Hongning Wang, and Haifeng Xu.
\newblock Learning from a learning user for optimal recommendations.
\newblock In \emph{Proceedings of the 39th International Conference on Machine Learning}, volume 162 of \emph{Proceedings of Machine Learning Research}, pages 25382--25406. PMLR, 17--23 Jul 2022.
\newblock URL \url{https://proceedings.mlr.press/v162/yao22a.html}.

\bibitem[Yao et~al.(2023)Yao, Li, Nekipelov, Wang, and Xu]{yao_how_2023}
Fan Yao, Chuanhao Li, Denis Nekipelov, Hongning Wang, and Haifeng Xu.
\newblock How {Bad} is {Top}-{K} {Recommendation} under {Competing} {Content} {Creators}?
\newblock In \emph{Proceedings of the 40th {International} {Conference} on {Machine} {Learning}}, volume 202 of \emph{Proceedings of {Machine} {Learning} {Research}}, pages 39674--39701. PMLR, July 2023.
\newblock URL \url{https://proceedings.mlr.press/v202/yao23b.html}.

\bibitem[Yao et~al.(2024)Yao, Liao, Wu, Li, Zhu, Yang, Liu, Wang, Xu, and Wang]{Yao_2024_User}
Fan Yao, Yiming Liao, Mingzhe Wu, Chuanhao Li, Yan Zhu, James Yang, Jingzhou Liu, Qifan Wang, Haifeng Xu, and Hongning Wang.
\newblock User welfare optimization in recommender systems with competing content creators.
\newblock In \emph{Proceedings of the 30th ACM SIGKDD Conference on Knowledge Discovery and Data Mining}, KDD '24, page 3874–3885, New York, NY, USA, 2024. Association for Computing Machinery.
\newblock ISBN 9798400704901.
\newblock \doi{10.1145/3637528.3672021}.
\newblock URL \url{https://doi.org/10.1145/3637528.3672021}.

\bibitem[Zhang and Hurley(2008)]{Zhang_2008}
Mi~Zhang and Neil Hurley.
\newblock Avoiding monotony: improving the diversity of recommendation lists.
\newblock In \emph{Proceedings of the 2008 ACM Conference on Recommender Systems}, RecSys '08, page 123–130, New York, NY, USA, 2008. Association for Computing Machinery.
\newblock ISBN 9781605580937.
\newblock \doi{10.1145/1454008.1454030}.
\newblock URL \url{https://doi.org/10.1145/1454008.1454030}.

\bibitem[Zhang et~al.(2023)Zhang, Wang, and Li]{zhang2023disentangled}
Xiaoying Zhang, Hongning Wang, and Hang Li.
\newblock Disentangled representation for diversified recommendations.
\newblock In \emph{Proceedings of the Sixteenth ACM International Conference on Web Search and Data Mining}, pages 490--498, 2023.

\bibitem[Zhang et~al.(2022)Zhang, Al-Abri, and Zhang]{zhang_opinion_2022}
Ziqiao Zhang, Said Al-Abri, and Fumin Zhang.
\newblock Opinion {Dynamics} on the {Sphere} for {Stable} {Consensus} and {Stable} {Bipartite} {Dissensus}.
\newblock \emph{9th IFAC Conference on Networked Systems NECSYS 2022}, 55\penalty0 (13):\penalty0 288--293, January 2022.
\newblock ISSN 2405-8963.
\newblock \doi{10.1016/j.ifacol.2022.07.274}.

\bibitem[Ziegler et~al.(2005)Ziegler, McNee, Konstan, and Lausen]{ziegler-2005-improving}
Cai-Nicolas Ziegler, Sean~M. McNee, Joseph~A. Konstan, and Georg Lausen.
\newblock Improving recommendation lists through topic diversification.
\newblock In \emph{Proceedings of the 14th International Conference on World Wide Web}, WWW '05, page 22–32, New York, NY, USA, 2005. Association for Computing Machinery.
\newblock ISBN 1595930469.
\newblock \doi{10.1145/1060745.1060754}.

\end{thebibliography}


\appendix

\newpage

\section{Additional Discussions on Related Works}
\label{app:additional-related-works}

\begin{table}[h]
\caption{Comparison between our work and some previous works on performative effects of recommender systems}

\hspace{-4.5em}
\begin{tabular}{|l|l|l|l|l|l|}
\hline
\textbf{Works} & \textbf{\begin{tabular}[c]{@{}l@{}}Adaptive\\ Users?\end{tabular}} & \textbf{\begin{tabular}[c]{@{}l@{}}Adaptive\\ Creators?\end{tabular}} & \textbf{Creator Reward} & \textbf{\begin{tabular}[c]{@{}l@{}} Dynamics or \\ Equilibrium?\end{tabular}} & \textbf{Content Adjustment Model}                                                                                \\ \hline
Ours                           & Yes   & Yes   & User engagement    & Dynamics    & \begin{tabular}[c]{@{}l@{}}Conditioned on previous time step;\\ implicit cost of content adjustment\end{tabular} \\ \hline
\cite{eilat_performative_2023} & No    & Yes   & Exposure           & Dynamics    & \begin{tabular}[c]{@{}l@{}}Conditioned on previous time step;\\ explicit cost of content adjustment\end{tabular} \\ \hline
\cite{yao_how_2023}            & No    & Yes   & User engagement    & Dynamics    &  Freely choose without cost \\ \hline
\cite{prasad2023content}       & No    & Yes   & User engagement  & Dynamics    & Freely choose without cost  \\ \hline 
\cite{jagadeesan2024supply}    & No    & Yes   & Exposure           & Equilibrium &  Freely choose with cost    \\ \hline
\cite{hron_modeling_2023}      & No    & Yes   & Exposure           & Equilibrium &  Freely choose without cost \\ \hline
\cite{ben2018game}             & No    & Yes   & Exposure           & Equilibrium &  Freely choose without cost  \\ \hline
\cite{acharya2024producers}    & No    & Yes   & User engagement  & Equilibrium & Freely choose without cost  \\ \hline
\cite{Yao_2024_User}           & No    & Yes   & \begin{tabular}[c]{@{}l@{}} Designed by a welfare-\\maximizing platform \end{tabular}  &  Dynamics      &  Freely choose without cost  \\ \hline
\cite{dean_preference_2022}    & Yes   & No\footnotemark[1]    &  N/A  &  Dynamics  &  N/A  \\ \hline
\cite{Yao_2022_Learning}       & Yes   & No\footnotemark[1]    &  N/A  &  Dynamics  &  N/A  \\ \hline
\cite{agarwal2023online}       & \begin{tabular}[c]{@{}l@{}} Adaptive and \\adversarial \end{tabular} & No\footnotemark[1] & N/A  &  Dynamics  & N/A  \\ \hline
\end{tabular}

\vspace{0.2em}
{\footnotemark[1]: These works study the design of recommendation algorithms for the platform with a fixed set of content, without explicitly modeling the content creators.} 
\end{table}

\section{Additional Experiments on Synthetic Data}

\subsection{Number of Fixed Dimensions}
\label{app:fixed-dimension}
In this part, we consider the case where some dimensions of the user feature vectors are fixed features and thus not updated from round to round (e.g., ages, genders). Formally, we fix the first $k\le d$ dimensions.  The remaining $d - k$ dimensions $\bm u_j^t[k+1:d] = (u_j^t[k+1], \ldots, u_j^t[d])$ are updated according to the following rule: 
$\bm u_j^{t+1}[k+1:d] = \| \bm u_j^{t}[k+1:d] \| \cdot \normalize\big(\bm u_j^t[k+1:d] + \eta_u f(\bm v_i^t, \bm u_j^t) \bm v_i^t[k+1:d] \big)$.
The multiplication by $\| \bm u_j^{t}[k+1:d] \|$ ensures unit norm $\| \bm u_j^{t+1} \|= 1$.
%
The effect of the number of fixed dimensions on the dynamics is shown in Figure~\ref{fig:varying-fd}.
We see that the diversity of the system \emph{improves} as the number of fixed dimensions increases, and the degree of polarization is reduced.  This is similar to the effect of decreasing user update rate $\eta_u$ in Figure~\ref{fig:varying-eta-u}.  The observation that fixed user features encourage diversity might be a reason why the recommender systems in practice are not as polarized as our theoretical prediction. 

\begin{figure}[ht]
    \centering
    \includegraphics[width=\linewidth]{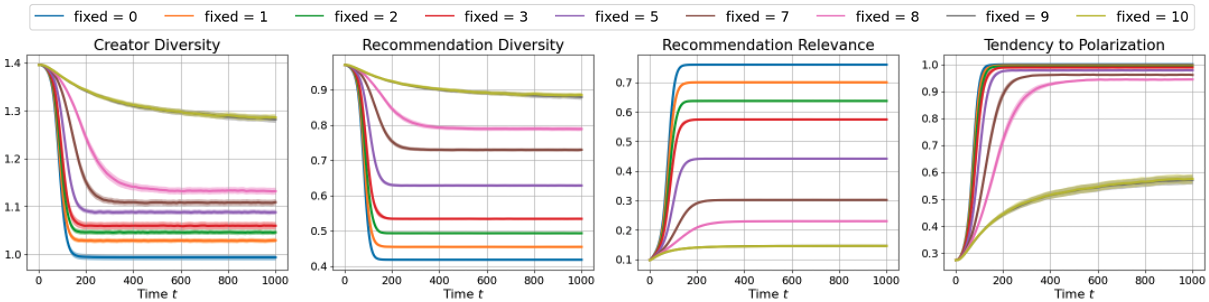}
    \caption{Changes of measures over time under different numbers of fixed dimensions, on synthetic data}
    \label{fig:varying-fd}
\end{figure}

\subsection{Threshold Truncation}\label{app:truncation-experiment}
Table \ref{table:truncation} shows that the effect of different thresholds in threshold truncation on the long-term diversity of the system.  We see that truncating at $\tau = 0$, which corresponds to $90^\circ$ angle between $\bm u_j$ and $\bm v_i$, is \emph{not good} for diversity, resulting in the lowest creator diversity measure (CD) and highest tendency to polarization (TP).  Truncating at a large threshold like $0.707$ is good for diversity, instead.  

Figure \ref{fig:truncation} shows how the diversity measures change over time, under different truncation thresholds. 

\begin{table}[h]
\centering
\caption{Diversity improvement by threshold truncation on synthetic data}
\label{table:truncation}
\begin{tabular}{l|l|llll}
$\beta$              & threshold $\tau$                     & CD                       & RD                       & RR                          & TP                       \\ \hline \hline
\multirow{7}{*}{$0$} & $-\cos(60^\circ) = -0.5$   & $1.00 \pm 0.03$          & $0.00 \pm 10^{-3}$       & $1.00 \pm 10^{-3}$          & $0.99 \pm 10^{-3}$       \\
                     & $-\cos(72^\circ) = -0.309$ & $0.96 \pm 0.06$          & $\mathbf{0.01 \pm 0.02}$ & $1.00 \pm 0.02$             & $0.92 \pm 0.10$          \\
                     & $\cos(90^\circ) = 0$       & $0.03 \pm 0.16$          & $0.00 \pm 10^{-3}$       & $1.00 \pm 10^{-3}$          & $0.99 \pm 0.04$          \\
                     & $\cos(72^\circ) = 0.309$   & $0.72 \pm 0.30$          & $0.00 \pm 10^{-3}$       & $1.00 \pm 10^{-3}$          & $0.81 \pm 0.12$          \\
                     & $\cos(60^\circ) = 0.5$     & $1.16 \pm 0.11$          & $0.00 \pm 10^{-3}$       & $1.00 \pm 10^{-3}$          & $0.47\pm 0.10$           \\
                     & $\cos(45^\circ) = 0.707$   & $\mathbf{1.37 \pm 0.02}$ & $0.00 \pm 10^{-3}$       & $\mathbf{1.00 \pm 10^{-3}}$ & $\mathbf{0.33 \pm 0.02}$ \\
                     & $\cos(30^\circ) = 0.866$   & $1.30 \pm 0.03$          & $0.00 \pm 10^{-3}$       & $1.00 \pm 10^{-3}$          & $0.55 \pm 0.05$          \\ \hline
\multirow{7}{*}{$1$} & $-\cos(60^\circ) = -0.5$   & $0.98 \pm 0.04$          & $0.00 \pm 0.02$          & $1.00 \pm 0.01$             & $0.96 \pm 0.04$          \\
                     & $-\cos(72^\circ) = -0.309$ & $0.92 \pm 0.08$          & $0.00 \pm 0.02$          & $0.99 \pm 0.02$             & $0.87 \pm 0.10$          \\
                     & $\cos(90^\circ) = 0$       & $0.13 \pm 0.31$          & $0.00 \pm 10^{-3}$       & $1.00 \pm 10^{-3}$          & $0.97 \pm 0.08$          \\
                     & $\cos(72^\circ) = 0.309$   & $0.85 \pm 0.16$          & $0.00 \pm 10^{-3}$       & $1.00 \pm 10^{-3}$          & $0.76 \pm 0.11$          \\
                     & $\cos(60^\circ) = 0.5$     & $1.21 \pm 0.07$          & $0.00 \pm 10^{-3}$       & $1.00 \pm 10^{-3}$          & $0.43\pm 0.08$           \\
                     & $\cos(45^\circ) = 0.707$   & $\mathbf{1.38 \pm 0.01}$ & $0.00 \pm 10^{-3}$       & $\mathbf{1.00 \pm 10^{-3}}$ & $\mathbf{0.30 \pm 0.01}$ \\
                     & $\cos(30^\circ) = 0.866$   & $1.33 \pm 0.02$          & $0.00 \pm 10^{-3}$       & $1.00 \pm 10^{-3}$          & $0.47 \pm 0.04$          \\ \hline
\multirow{7}{*}{$3$} & $-\cos(60^\circ) = -0.5$   & $0.91 \pm 0.18$          & $\mathbf{0.01 \pm 0.02}$ & $1.00 \pm 0.01$             & $0.83 \pm 0.10$          \\
                     & $-\cos(72^\circ) = -0.309$ & $0.85 \pm 0.23$          & $0.00 \pm 10^{-3}$       & $1.00 \pm 10^{-3}$          & $0.78 \pm 0.11$          \\
                     & $\cos(90^\circ) = 0$       & $0.64 \pm 0.33$          & $0.00 \pm 10^{-3}$       & $1.00 \pm 10^{-3}$          & $0.81 \pm 0.12$          \\
                     & $\cos(72^\circ) = 0.309$   & $1.01 \pm 0.14$          & $0.00 \pm 10^{-3}$       & $1.00 \pm 10^{-3}$          & $0.64 \pm 0.14$          \\
                     & $\cos(60^\circ) = 0.5$     & $1.26 \pm 0.05$          & $0.00 \pm 10^{-3}$       & $1.00 \pm 10^{-3}$          & $0.38\pm 0.06$           \\
                     & $\cos(45^\circ) = 0.707$   & $\mathbf{1.39 \pm 0.01}$ & $0.00 \pm 10^{-3}$       & $\mathbf{1.00 \pm 10^{-3}}$ & $\mathbf{0.28 \pm 0.01}$ \\
                     & $\cos(30^\circ) = 0.866$   & $1.37 \pm 0.01$          & $0.00 \pm 10^{-3}$       & $1.00 \pm 10^{-3}$          & $0.34 \pm 0.01$         
\end{tabular}
\end{table}

\begin{figure}[h!]
    \centering
    \includegraphics[width=0.7\linewidth]{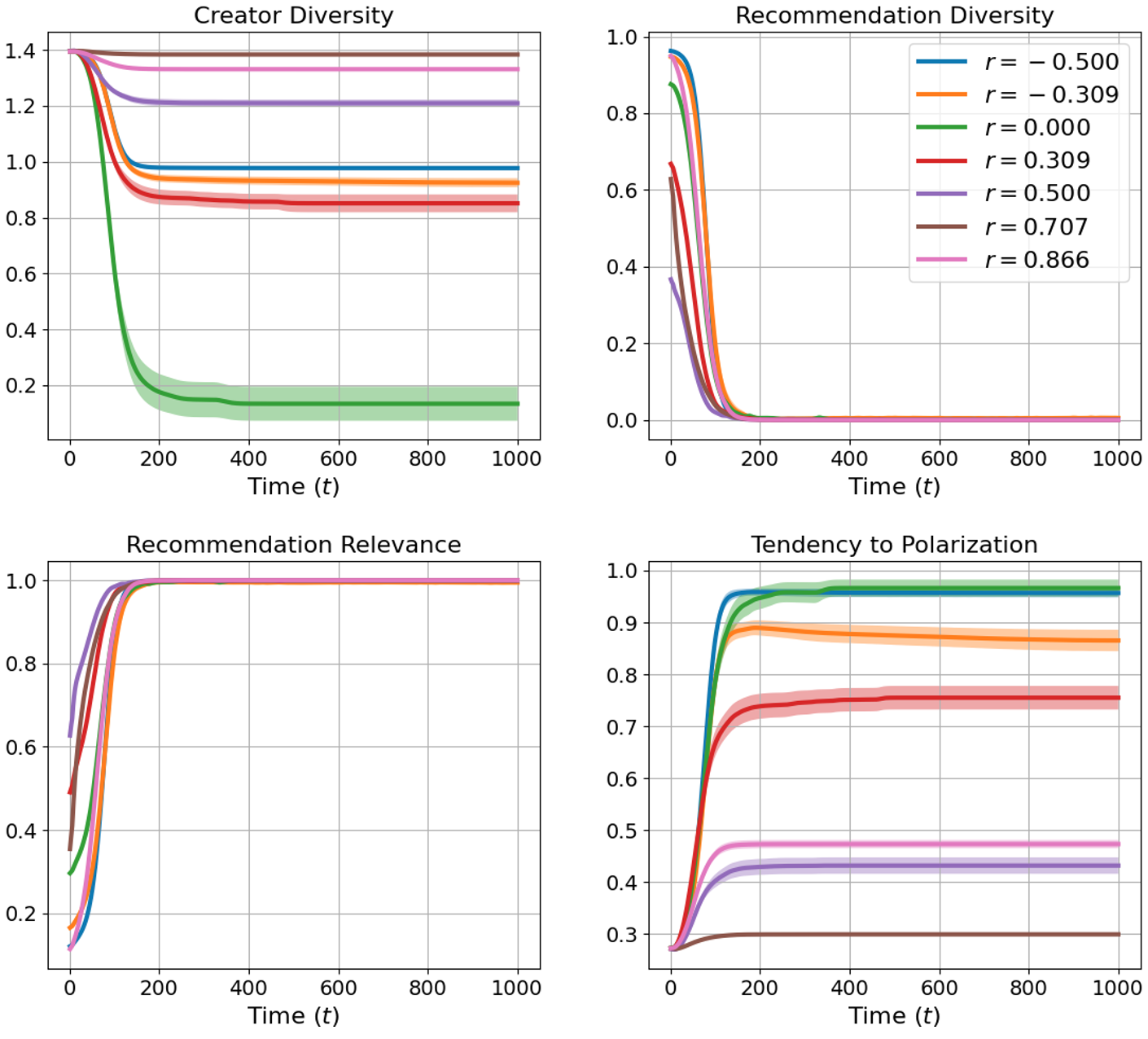}
    \caption{Changes of measures over time under different truncation threshold $\tau$, with $\beta = 1$, on synthetic data}
    \label{fig:truncation}
\end{figure}

\newpage 

~
\newpage

\section{Additional Experiments on Real-World Dataset}
\label{app:additional-real-world}

\subsection{Details of the Recommendation Algorithm}
\begin{figure}[h!]
    \centering
    \includegraphics[width=0.4\linewidth]{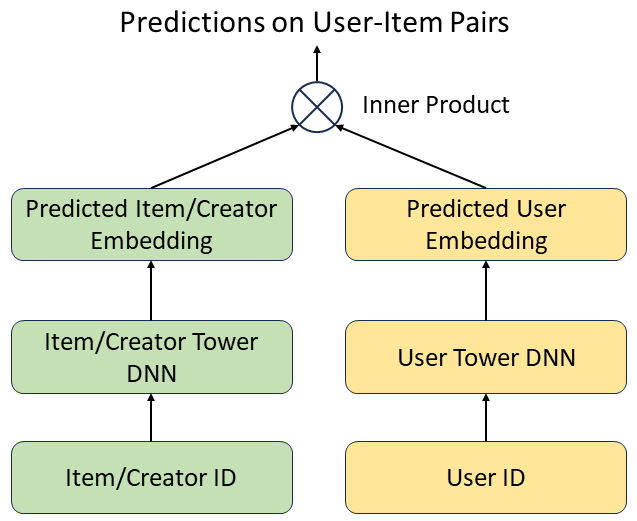}
    \caption{Two tower model for the MovieLens experiment, where the two towers both have size $16 \times 16$ with linear layers and ReLu activations.}
    \label{fig:two_tower}
\end{figure}

\begin{algorithm}[h!]
   \caption{Real-world Recommendation with Dual Influence}
   \label{alg:example}
\begin{algorithmic}
   \STATE {\bfseries Input:} $t=0$, actual embedding $U^{(0)}, V^{(0)}$, true labels $Y_{ij}^{(0)} := y(u_i^{(0)}, v_j^{(0)})$, initial parameter $\boldsymbol{\theta}^{(0)}$ (which includes the predicted embedding $\hat{U}^{(0)}, \hat{V}^{(0)}$)
   \REPEAT
   \STATE Let temporary parameter $\boldsymbol{w}^{(0)} \leftarrow \boldsymbol{\theta}^{(t)}$
   \STATE Compute loss $\mathcal{L}(\boldsymbol{\theta}^{(t)}, Y^{(t)})$
   \FOR{$s = 1$ {\bfseries to} $m-1$}
   \STATE $\boldsymbol{w}^{(s+1)} \leftarrow \boldsymbol{\theta}^{(s)} - \eta \nabla_{\boldsymbol{w}} \mathcal{L}(\boldsymbol{w}^{(s)}, Y^{(t)})$
   \ENDFOR 
   \STATE $\boldsymbol{\theta}^{(t+1)} \leftarrow \boldsymbol{w}^{(m)}$
   \STATE Deliver recommendations based on $\hat{U}^{(t+1)}, \hat{V}^{(t+1)}$
   \STATE Update $U^{(t+1)}, V^{(t+1)}$, and $Y^{(t+1)}$
   \STATE $t \leftarrow t+1$
   \UNTIL{$\| \boldsymbol{\theta}^{(t)} - \boldsymbol{\theta}^{(t-1)}\|_2 \leq \delta$}
\end{algorithmic}
\end{algorithm}

\newpage

\subsection{Top-$k$ and Threshold Truncations}

We also try top-$k$ and threshold truncations (Section~\ref{sec:real-world-discussion}) on the MovieLens 20M dataset.
Here, we have $n = 2000$ creators and $m = 2000$ users, with feature dimension $d = 16$. The results for top-$k$ truncation are in Table \ref{table:top-k-movielens} and Figure \ref{fig:movielens-top-k}.  Similar to the experiments with synthetic data, we see that a smaller $k$ improves Creator Diversity (CD) and Recommendation Relevance (RR), reduces Tendency to Polarization (TP), yet worsens Recommendation Diversity (RD). 

\begin{table}[H]
\centering
\caption{Diversity improvement by top-$k$ truncation on MovieLens 20M dataset}
\label{table:top-k-movielens}
\begin{tabular}{l|l|llll}
$\beta$              & $k$  & CD                          & RD                       & RR                          & TP                          \\ \hline\hline
\multirow{6}{*}{$0$} & $2000$ & $1.00 \pm 10^{-3}$             & $\mathbf{1.00 \pm 10^{-3}}$ & $0.00 \pm 10^{-3}$             & $1.00 \pm 10^{-3}$          \\
                     & $1000$ & $0.30 \pm 0.04$             & $0.03 \pm 0.01$          & $0.88 \pm 0.01$             & $1.00 \pm 10^{-3}$             \\
                     & $500$ & $1.10 \pm 0.06$             & $0.00 \pm 10^{-3}$          & $1.00 \pm 10^{-3}$             & $0.43 \pm 0.03$             \\
                     & $100$ & $1.36 \pm 10^{-3}$             & $0.00 \pm 10^{-3}$       & $1.00 \pm 10^{-3}$          & $0.28 \pm 0.01$             \\
                     & $10$  & $1.40 \pm 10^{-3}$             & $0.00 \pm 10^{-3}$       & $1.00 \pm 10^{-3}$          & $0.20\pm 10^{-3}$             \\
                     & $1$  & $\mathbf{1.40 \pm 10^{-3}}$ & $0.00 \pm 10^{-3}$       & $\mathbf{1.00 \pm 10^{-3}}$ & $\mathbf{0.20 \pm 10^{-3}}$ \\ \hline
\multirow{6}{*}{$1$} & $2000$ & $1.00 \pm 10^{-3}$             & $\mathbf{0.42 \pm 10^{-3}}$ & $0.92 \pm 0.01$             & $1.00 \pm 10^{-3}$          \\
                     & $1000$ & $0.61 \pm 0.16$             & $0.03 \pm 0.01$          & $0.97 \pm 0.01$             & $0.90 \pm 0.06$             \\
                     & $500$ & $1.14 \pm 0.04$             & $0.00 \pm 10^{-3}$          & $1.00 \pm 10^{-3}$             & $0.41 \pm 0.04$             \\
                     & $100$ & $1.35 \pm 0.01$             & $0.00 \pm 10^{-3}$       & $1.00 \pm 10^{-3}$          & $0.27 \pm 10^{-3}$             \\
                     & $10$  & $1.40 \pm 10^{-3}$             & $0.00 \pm 10^{-3}$       & $1.00 \pm 10^{-3}$          & $0.20 \pm 10^{-3}$             \\
                     & $1$  & $\mathbf{1.40 \pm 10^{-3}}$ & $0.00 \pm 10^{-3}$       & $\mathbf{1.00 \pm 10^{-3}}$ & $\mathbf{0.20 \pm 10^{-3}}$ \\ \hline
\multirow{6}{*}{$3$} & $2000$ & $0.92 \pm 0.07$             & $\mathbf{0.02 \pm 0.01}$ & $0.99 \pm 10^{-3}$             & $0.91 \pm 0.05$             \\
                     & $1000$ & $0.65 \pm 0.18$             & $0.00 \pm 10^{-3}$          & $1.00 \pm 10^{-3}$          & $0.69 \pm 0.14$             \\
                     & $500$ & $1.07 \pm 0.07$             & $0.00 \pm 10^{-3}$       & $1.00 \pm 10^{-3}$          & $0.48 \pm 0.11$             \\
                     & $100$ & $1.36 \pm 0.01$             & $0.00 \pm 10^{-3}$       & $1.00 \pm 10^{-3}$          & $0.27 \pm 0.01$             \\
                     & $10$  & $1.40 \pm 10^{-3}$             & $0.00 \pm 10^{-3}$       & $1.00 \pm 10^{-3}$          & $0.20 \pm 10^{-3}$             \\
                     & $1$  & $\mathbf{1.40 \pm 10^{-3}}$ & $0.00 \pm 10^{-3}$       & $\mathbf{1.00 \pm 10^{-3}}$ & $\mathbf{0.20 \pm 10^{-3}}$
\end{tabular}
\end{table}

\begin{figure}[h!]
    \centering
    \includegraphics[width=0.75\linewidth]{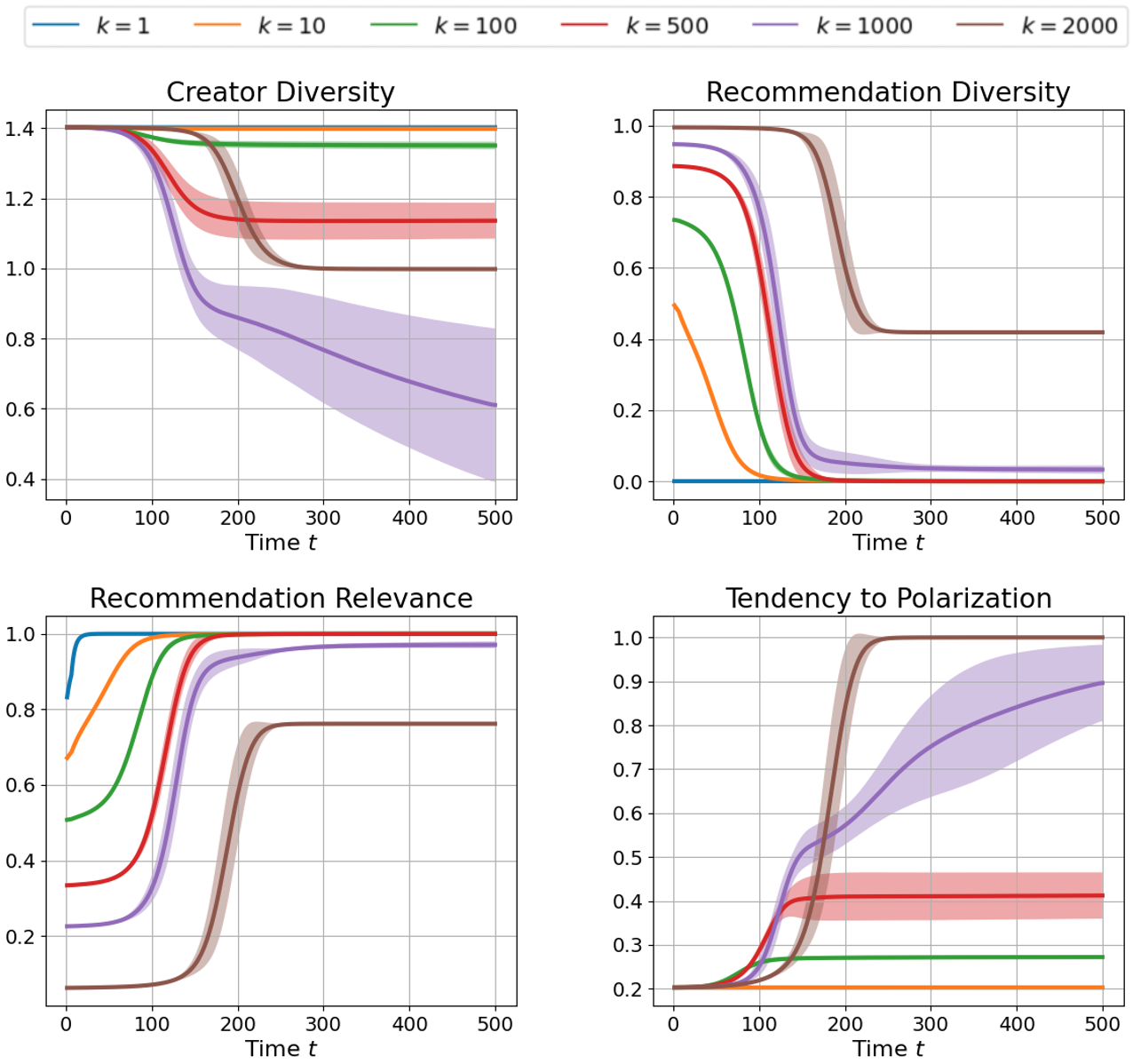}
    \caption{Changes of measures over time under different $k$, with $\beta = 1$, on MovieLens 20M dataset}
    \label{fig:movielens-top-k}
\end{figure}

Results for threshold truncation are in Table~\ref{table:truncation-movielens} and Figure~\ref{fig:movielens-truncation}.  Similar to synthetic data, we see that a large (but not too large) threshold like 0.707 is good for improving CD and TP.  
\begin{table}[h]
\centering
\caption{Threshold truncation with different thresholds on MovieLens 20M dataset}
\label{table:truncation-movielens}
\begin{tabular}{l|l|llll}
$\beta$              & threshold $\tau$                     & CD                       & RD                       & RR                          & TP                       \\ \hline \hline
\multirow{7}{*}{$0$} & $-\cos(60^\circ) = -0.5$   & $1.00 \pm 10^{-3}$          & $0.00 \pm 10^{-3}$       & $1.00 \pm 10^{-3}$          & $1.00 \pm 10^{-3}$       \\
                     & $-\cos(72^\circ) = -0.309$ & $1.00 \pm 10^{-3}$          & $0.00 \pm 10^{-3}$ & $1.00 \pm 10^{-3}$             & $1.00 \pm 10^{-3}$          \\
                     & $\cos(90^\circ) = 0$       & $0.01 \pm 0.01$          & $0.00 \pm 10^{-3}$       & $1.00 \pm 10^{-3}$          & $1.00 \pm 10^{-3}$          \\
                     & $\cos(72^\circ) = 0.309$   & $0.83 \pm 0.08$          & $0.00 \pm 10^{-3}$       & $1.00 \pm 10^{-3}$          & $0.72 \pm 0.09$          \\
                     & $\cos(60^\circ) = 0.5$     & $1.20 \pm 0.05$          & $0.00 \pm 10^{-3}$       & $1.00 \pm 10^{-3}$          & $0.46\pm 0.07$           \\
                     & $\cos(45^\circ) = 0.707$   & $\mathbf{1.39 \pm 10^{-3}}$ & $0.00 \pm 10^{-3}$       & $1.00 \pm 10^{-3}$ & $\mathbf{0.20 \pm 10^{-3}}$ \\
                     & $\cos(30^\circ) = 0.866$   & $1.36 \pm 10^{-3}$          & $0.00 \pm 10^{-3}$       & $1.00 \pm 10^{-3}$          & $0.40 \pm 0.01$          \\ \hline
\multirow{7}{*}{$1$} & $-\cos(60^\circ) = -0.5$   & $1.00 \pm 10^{-3}$          & ${0.00 \pm 10^{-3}}$          & $1.00 \pm 10^{-3}$             & $1.00 \pm 10^{-3}$          \\
                     & $-\cos(72^\circ) = -0.309$ & $0.96 \pm 0.03$          & $0.00 \pm 10^{-3}$          & $1.00 \pm 10^{-3}$             & $0.95 \pm 0.03$          \\
                     & $\cos(90^\circ) = 0$       & $0.02 \pm 0.02$          & $0.00 \pm 10^{-3}$       & $1.00 \pm 10^{-3}$          & $0.99 \pm 10^{-3}$          \\
                     & $\cos(72^\circ) = 0.309$   & $0.83 \pm 0.07$          & $0.00 \pm 10^{-3}$       & $1.00 \pm 10^{-3}$          & $0.66 \pm 0.10$          \\
                     & $\cos(60^\circ) = 0.5$     & $1.18 \pm 0.06$          & $0.00 \pm 10^{-3}$       & $1.00 \pm 0.01$          & $0.44 \pm 0.07$           \\
                     & $\cos(45^\circ) = 0.707$   & $\mathbf{1.40 \pm 10^{-3}}$ & $0.00 \pm 10^{-3}$       & $1.00 \pm 10^{-3}$ & $\mathbf{0.20 \pm 10^{-3}}$ \\
                     & $\cos(30^\circ) = 0.866$   & $1.35 \pm 0.01$          & $0.00 \pm 10^{-3}$       & $1.00 \pm 10^{-3}$          & $0.40 \pm 0.02$          \\ \hline
\multirow{7}{*}{$3$} & $-\cos(60^\circ) = -0.5$   & $0.77 \pm 0.27$          & $0.00 \pm 10^{-3}$       & $1.00 \pm 10^{-3}$          & $0.86 \pm 0.09$          \\
                     & $-\cos(72^\circ) = -0.309$ & $0.80 \pm 0.24$          & $0.00 \pm 10^{-3}$       & $1.00 \pm 10^{-3}$          & $0.79 \pm 0.13$          \\
                     & $\cos(90^\circ) = 0$       & $0.04 \pm 0.02$          & $0.00 \pm 10^{-3}$       & $1.00 \pm 10^{-3}$          & $0.98 \pm 0.01$          \\
                     & $\cos(72^\circ) = 0.309$   & $0.99 \pm 0.11$          & $0.00 \pm 10^{-3}$       & $1.00 \pm 10^{-3}$          & $0.55 \pm 0.13$          \\
                     & $\cos(60^\circ) = 0.5$     & $1.26 \pm 0.05$          & $0.00 \pm 10^{-3}$       & $1.00 \pm 10^{-3}$          & $0.36 \pm 0.06$           \\
                     & $\cos(45^\circ) = 0.707$   & $\mathbf{1.40 \pm 10^{-3}}$ & $0.00 \pm 10^{-3}$       & $1.00 \pm 10^{-3}$       & $\mathbf{0.20 \pm 10^{-3}}$ \\
                     & $\cos(30^\circ) = 0.866$   & $1.36 \pm 10^{-3}$          & $0.00 \pm 10^{-3}$       & $1.00 \pm 10^{-3}$          & $0.39 \pm 0.01$         
\end{tabular}
\end{table}

\begin{figure}[h!]
    \centering
    \includegraphics[width=0.9\linewidth]{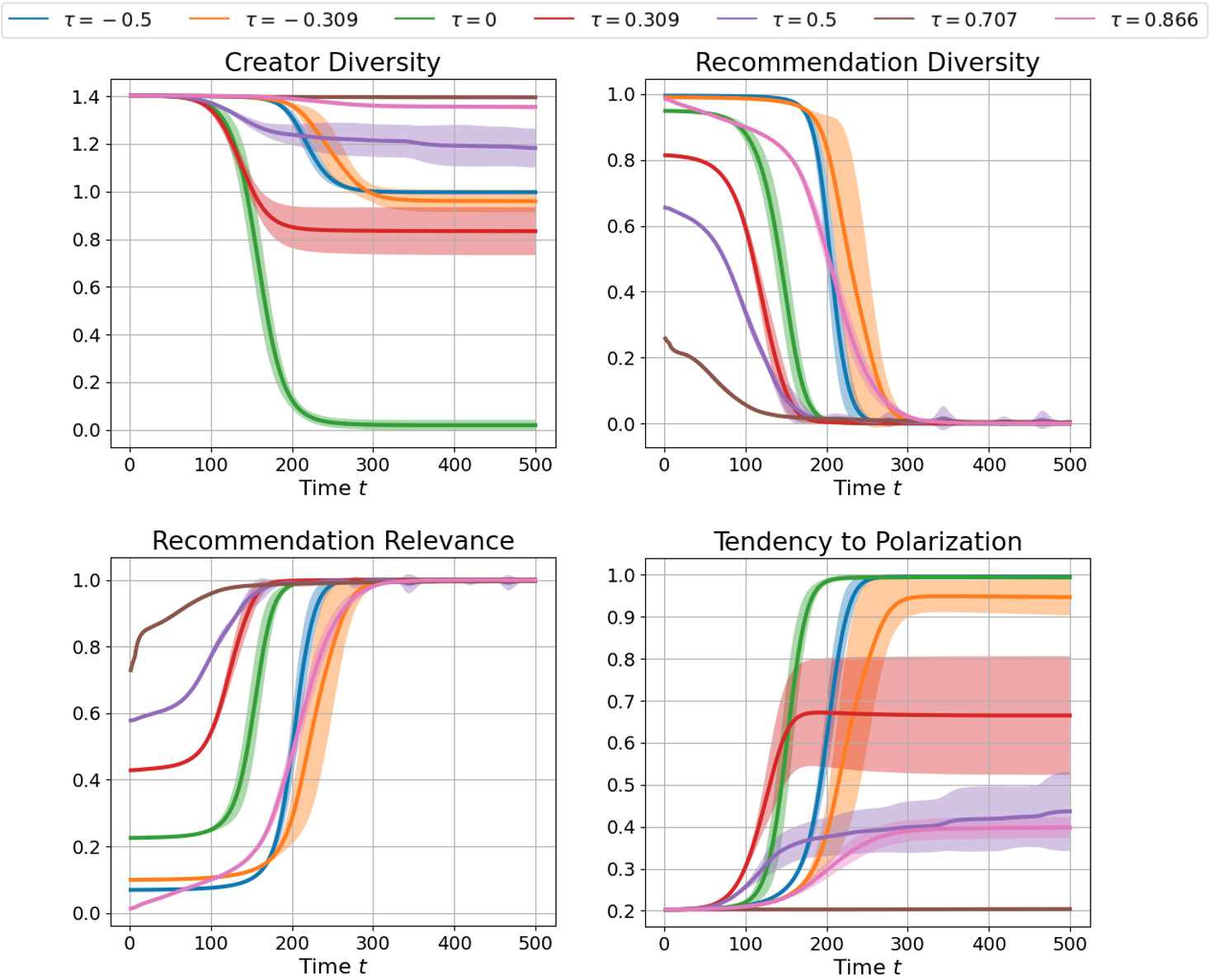}
    \caption{Changes of measures over time under truncation with different threshold $\tau$, with $\beta = 1$, on MovieLens 20M dataset}
    \label{fig:movielens-truncation}
\end{figure}

\newpage
\section{Useful Lemmas}
This section provides some lemmas that will be used in the proofs.  They are mainly about some properties of the dynamics update rule.

\begin{claim}\label{claim:squared-distance-inner-product}
For vectors $\bm x, \bm y \in \reals^d$ with unit norm $\| \bm x\|_2 = \| \bm y\|_2 = 1$, we have:
\begin{itemize}
    \item $\| \bm x - \bm y \|_2^2 = 2( 1 - \langle \bm x, \bm y\rangle)$. 
    \item $\langle \bm x, \bm y\rangle = 1- \frac{1}{2}\| \bm x - \bm y \|_2^2$. 
\end{itemize}
\end{claim}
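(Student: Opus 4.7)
The plan is to derive both identities from a single application of the polarization-style expansion of the squared Euclidean norm, specialized to unit vectors. Nothing subtle is needed; the two statements are equivalent up to algebraic rearrangement, so it suffices to prove the first and then solve for $\langle \bm x, \bm y\rangle$.

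Concretely, first I would write
\[
\|\bm x - \bm y\|_2^2 \;=\; \langle \bm x - \bm y,\, \bm x - \bm y\rangle
\]
and then expand via bilinearity and symmetry of the inner product to obtain
\[
\|\bm x - \bm y\|_2^2 \;=\; \langle \bm x, \bm x\rangle - 2\langle \bm x, \bm y\rangle + \langle \bm y, \bm y\rangle \;=\; \|\bm x\|_2^2 + \|\bm y\|_2^2 - 2\langle \bm x, \bm y\rangle.
\]
Next I would invoke the unit-norm hypothesis $\|\bm x\|_2 = \|\bm y\|_2 = 1$, so that $\|\bm x\|_2^2 + \|\bm y\|_2^2 = 2$, yielding
\[
\|\bm x - \bm y\|_2^2 \;=\; 2 - 2\langle \bm x, \bm y\rangle \;=\; 2\bigl(1 - \langle \bm x, \bm y\rangle\bigr),
\]
which is precisely the first bullet. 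The second bullet then follows by dividing by $2$ and solving for $\langle \bm x, \bm y\rangle$, giving $\langle \bm x, \bm y\rangle = 1 - \tfrac{1}{2}\|\bm x - \bm y\|_2^2$.

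There is no real obstacle here: the claim merely records the observation that on the unit sphere the squared chord length and the inner product are related by a fixed affine map $t \mapsto 2(1-t)$. I would therefore present the whole argument as a two-line computation, without invoking any auxiliary results.
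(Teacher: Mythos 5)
Your proof is correct: expanding $\|\bm x - \bm y\|_2^2$ by bilinearity and substituting the unit-norm hypothesis is exactly the intended argument, and the paper itself states this claim without proof precisely because it is this two-line computation. Nothing to add.
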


\begin{lemma}[Convex Cone Property]
\label{lem:convex-cone}
Let $\bm z_1, \ldots, \bm z_k \in \reals^d$ be vectors with norm $\| \bm z_i^t \|_2 = 1$.  Suppose $\langle \bm z_i, \bm y\rangle > 0$ for every $i=1, \ldots, k$ for some $\bm y \in \reals^d$.   Let $\bm x = \normalize(\sum_{i=1}^k a_i \bm z_i)$ for some $a_1, \ldots, a_k \ge 0$ (namely, $\bm x$ is the normalization of some vector in the convex cone formed by $\bm z_1, \ldots, \bm z_k$).  Then, we have   
\begin{equation*}
     \langle \bm x, \bm y \rangle ~ \ge ~ \min_{i=1}^k \langle \bm z_i, \bm y\rangle ~ > ~ 0 \quad \text{ and } \quad \|\bm x - \bm y \|_2 ~ \le ~ \max_{i=1}^k \| \bm z_i - \bm y\|_2 ~ > ~ 0. 
\end{equation*}
\end{lemma}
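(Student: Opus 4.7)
The plan is to write $\bm x = \bm w/\|\bm w\|_2$ where $\bm w = \sum_{i=1}^k a_i \bm z_i$ and then relate $\langle \bm x, \bm y\rangle$ to the quantities $\langle \bm z_i, \bm y\rangle$ via the triangle inequality. The distance bound will then follow from the inner-product bound via the identity in Claim~\ref{claim:squared-distance-inner-product}, using that $\bm x$ and each $\bm z_i$ lie on the unit sphere (and that the natural use of this lemma in the proof of Proposition~\ref{obs:absorbing} has $\bm y = \bm c \in \S^{d-1}$).

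First, I would observe that by the triangle inequality, $\|\bm w\|_2 = \bigl\|\sum_i a_i \bm z_i\bigr\|_2 \le \sum_i a_i \|\bm z_i\|_2 = \sum_i a_i$, using $a_i \ge 0$ and $\|\bm z_i\|_2 = 1$. Note that $\|\bm w\|_2 > 0$ as well, since $\langle \bm w, \bm y\rangle = \sum_i a_i \langle \bm z_i, \bm y\rangle > 0$ (not all $a_i$ are zero because $\bm x$ is defined).

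Second, writing $m := \min_{i} \langle \bm z_i, \bm y\rangle > 0$, I would compute
\begin{equation*}
\langle \bm x, \bm y\rangle \;=\; \frac{1}{\|\bm w\|_2}\sum_{i=1}^k a_i \langle \bm z_i, \bm y\rangle \;\ge\; \frac{m \sum_{i=1}^k a_i}{\|\bm w\|_2} \;\ge\; m \;=\; \min_{i=1}^k \langle \bm z_i, \bm y\rangle,
\end{equation*}
where the last inequality uses the triangle inequality bound $\sum_i a_i \ge \|\bm w\|_2$ together with $m > 0$. This establishes the first claim.

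Third, for the distance claim, since $\bm x$, each $\bm z_i$, and $\bm y$ all have unit norm, Claim~\ref{claim:squared-distance-inner-product} gives $\|\bm x - \bm y\|_2^2 = 2(1 - \langle \bm x, \bm y\rangle)$ and $\|\bm z_i - \bm y\|_2^2 = 2(1 - \langle \bm z_i, \bm y\rangle)$. Combined with the inner-product bound above, this yields $\|\bm x - \bm y\|_2^2 \le 2(1 - m) = \max_i \|\bm z_i - \bm y\|_2^2$, giving the desired distance inequality. I do not anticipate a main obstacle here: the argument is essentially one application of the triangle inequality plus the sphere identity, and no more delicate geometry is required.
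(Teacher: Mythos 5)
Your proposal is correct and follows essentially the same route as the paper's proof: bound $\|\sum_i a_i \bm z_i\|_2 \le \sum_i a_i$ by the triangle inequality to get the inner-product bound, then convert to the distance bound via the unit-sphere identity $\|\bm x - \bm y\|_2^2 = 2(1 - \langle \bm x, \bm y\rangle)$. Your side remarks (that $\|\bm w\|_2 > 0$ and that the distance claim implicitly needs $\|\bm y\|_2 = 1$, which holds in the lemma's actual uses) are accurate and, if anything, slightly more careful than the paper's own write-up.
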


\begin{proof}
\begin{align*}
     \langle \bm x, \, \bm y \rangle & ~ = ~  \Big\langle \frac{\sum_{i=1}^k a_i \bm z_i}{\| \sum_{i=1}^k a_i \bm z_i \|_2}, \, \bm y \Big\rangle ~ = ~ \frac{1}{\| \sum_{i=1}^k a_i \bm z_i \|_2} \sum_{i=1}^k a_i \langle \bm z_i, \bm y \rangle \\
     & ~ \ge ~ \frac{1}{\| \sum_{i=1}^k a_i \bm z_i \|_2} \sum_{i=1}^k a_i \min_{i=1}^k \langle \bm z_i, \bm y \rangle ~ = ~ \min_{i=1}^k \langle \bm z_i, \bm y \rangle \frac{\sum_{i=1}^k a_i }{\| \sum_{i=1}^k a_i \bm z_i \|_2} \\
     & ~ \ge ~ \min_{i=1}^k \langle \bm z_i, \bm y \rangle \frac{\sum_{i=1}^k a_i }{\sum_{i=1}^k a_i} ~ = ~ \min_{i=1}^k \langle \bm z_i, \bm y\rangle. 
\end{align*}
This proves the first inequality.  To prove the second inequality, we use Claim~\ref{claim:squared-distance-inner-product} and the first inequality:  
\begin{align*}
    \|\bm x - \bm y \|_2 ~ = ~ \sqrt{2(1 - \langle \bm x, \bm y\rangle)} ~ \le ~ \sqrt{2(1 - \min_i \langle \bm z_i, \bm y\rangle)} ~ = ~ \max_i \sqrt{2(1 -  \langle \bm z_i, \bm y\rangle)} ~ = ~ \max_{i=1}^k \| \bm z_i - \bm y\|_2. 
\end{align*}
\end{proof}

\begin{lemma}\label{lem:lemma-B}
Let $\bm x^t, \bm y, \bm z^t \in \reals^d$ be vectors with norm $\| \bm x^t \|_2 = 1$, $\| \bm y \|_2 \ge 0$, $\| \bm z^t \|_2 \le 1$.  Suppose $\langle \bm x^t, \bm y \rangle \ge 0$, $\langle \bm z^t, \bm y \rangle \ge 0$.  After the update $\bm x^{t+1} = \normalize(\bm x^t + \eta \bm z^t)$, we have  
\begin{equation*}
     \langle \bm x^{t+1}-\bm x^t, \, \bm y \rangle ~ \ge ~ 
\frac{\eta}{1 + \eta\|\bm z^t\|_2}\Big( \langle \bm z^t, \bm y\rangle - \|\bm z^t\|_2 \langle \bm x^t, \bm y\rangle \Big).  
\end{equation*}
As a corollary, if $\bm y = \bm z^t$ and $\| \bm z^t \|_2 = 1$, then 
 \begin{equation*}
     \langle \bm x^{t+1}-\bm x^t, \, \bm z^t \rangle ~ \ge ~ \frac{\eta}{1 + \eta}\Big( 1 - \langle \bm x^t, \bm z^t\rangle \Big).  
\end{equation*}
\end{lemma}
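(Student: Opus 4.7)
}
The plan is to write the update in closed form, reduce the target inequality to a cross-multiplied polynomial inequality, and observe that it factors into a product of two manifestly non-negative terms. Concretely, let $N := \|\bm x^t + \eta \bm z^t\|_2$ and abbreviate $A := \langle \bm x^t, \bm y\rangle$, $B := \langle \bm z^t, \bm y\rangle$, $c := \|\bm z^t\|_2$. By the hypotheses, $A \ge 0$, $B \ge 0$, $c \in [0,1]$, and by the triangle inequality together with $\|\bm x^t\|_2 = 1$ we have the crucial bound
\begin{equation*}
    N ~ \le ~ 1 + \eta c.
\end{equation*}
Using $\bm x^{t+1} = (\bm x^t + \eta \bm z^t)/N$, a one-line calculation gives
\begin{equation*}
    \bm x^{t+1} - \bm x^t ~ = ~ \tfrac{1}{N}\bigl((1-N)\bm x^t + \eta \bm z^t\bigr),
\end{equation*}
so that $\langle \bm x^{t+1} - \bm x^t, \bm y\rangle = \bigl((1-N)A + \eta B\bigr)/N$.

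The next step is to show that this quantity is at least $\eta(B - cA)/(1+\eta c)$. Since $N$ and $1+\eta c$ are both positive (the normalization step is well-defined), the plan is to cross-multiply by $N(1+\eta c) > 0$ and rearrange. After expansion, most terms cancel and the inequality reduces to the factored form
\begin{equation*}
    (A + \eta B)\,(1 + \eta c - N) ~ \ge ~ 0.
\end{equation*}
Now both factors are non-negative: $A + \eta B \ge 0$ because $A, B \ge 0$ and $\eta > 0$, and $1 + \eta c - N \ge 0$ by the triangle-inequality bound above. This proves the main inequality.

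The corollary is then immediate: setting $\bm y = \bm z^t$ with $\|\bm z^t\|_2 = 1$ gives $B = \langle \bm z^t, \bm z^t\rangle = 1$, $c = 1$, and $A = \langle \bm x^t, \bm z^t\rangle$, so the bound becomes $\frac{\eta}{1+\eta}(1 - \langle \bm x^t, \bm z^t\rangle)$. The only subtlety is verifying the sign hypothesis $\langle \bm z^t, \bm y\rangle \ge 0$ in the corollary, which holds trivially since $\langle \bm z^t, \bm z^t\rangle = 1 \ge 0$. I expect no genuine obstacle in the proof; the main thing to notice is the clean factorization after cross-multiplication, which is what makes the triangle-inequality bound on $N$ do all the work.
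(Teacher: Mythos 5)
Your proof is correct and follows essentially the same route as the paper's: both decompose $\bm x^{t+1}-\bm x^t$ via the closed form of the normalization and reduce everything to the triangle-inequality bound $\|\bm x^t+\eta\bm z^t\|_2 \le 1+\eta\|\bm z^t\|_2$ together with the sign hypotheses on $\langle\bm x^t,\bm y\rangle$ and $\langle\bm z^t,\bm y\rangle$. The paper bounds the two terms of the decomposition separately rather than cross-multiplying into your factored form $(A+\eta B)(1+\eta c - N)\ge 0$, but this is only a cosmetic difference in organizing the same algebra.
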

\begin{proof}
By definition, 
\begin{align*}
     \langle \bm x^{t+1}-\bm x^t, \, \bm y \rangle & ~ = ~  \Big\langle \frac{\bm x^t + \eta \bm z^t}{\| \bm x^t + \eta \bm z^t \|_2} - \bm x^t, \, \bm y \Big\rangle \\
     & ~ = ~ \Big(\frac{1}{\| \bm x^t + \eta \bm z^t \|_2} - 1\Big) \cdot \langle \bm x^t, \bm y\rangle \, + \, \frac{\eta}{\| \bm x^t + \eta \bm z^t \|_2} \cdot \langle \bm z^t, \bm y\rangle \\
     {\small (\text{because $\|\bm x^t + \eta \bm z^t\|_2 \le 1 + \eta\|\bm z^t\|_2$}) } & ~ \ge ~ \Big(\frac{1}{1 + \eta \|\bm z^t \|_2} - 1\Big) \cdot \langle \bm x^t, \bm y\rangle \, + \, \frac{\eta}{1 + \eta \| \bm z^t \|_2} \cdot \langle \bm z^t, \bm y\rangle \\
      & ~ = ~ \frac{\eta}{1 + \eta\|\bm z^t \|_2}\Big( \langle \bm z^t, \bm y\rangle - \|\bm z^t\|_2 \langle \bm x^t, \bm y\rangle \Big).
\end{align*}
\end{proof}

\begin{lemma}\label{lem:pythogorean}
Let $\bm x^t, \bm z^t \in \reals^d$ be vectors with norm $\| \bm x^t \|_2 = 1$, $\| \bm z^t \|_2 \le 1$.  Suppose $\langle \bm x^t, \bm z^t \rangle \ge 0$ and $\eta > 0$.  Then the update $\bm x^{t+1} = \normalize(\bm x^t + \eta \bm z^t)$ satisfies   
\begin{itemize}
     \item $\langle \bm x^{t+1}-\bm x^t, \bm z^t \rangle \, \ge \, \frac{1}{\eta} \| \bm x^{t+1} - \bm x^t \|_2^2$.
     \item $\| \bm x^{t+1} - \bm x^t\|_2 \, \le \, \eta \|\bm z^t\|_2$. 
\end{itemize}
\end{lemma}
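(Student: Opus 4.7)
The plan is to reduce both inequalities to elementary computations by introducing $N := \|\bm x^t + \eta\bm z^t\|_2$, so that $\bm x^{t+1} = (\bm x^t + \eta\bm z^t)/N$, and then exploiting a single algebraic identity that drives both parts. Expanding $N^2 = 1 + 2\eta\langle \bm x^t, \bm z^t\rangle + \eta^2\|\bm z^t\|_2^2$ together with the hypothesis $\langle \bm x^t, \bm z^t\rangle \ge 0$ immediately gives $N \ge 1$. Rearranging the definition of $\bm x^{t+1}$ yields the key identity $\eta\bm z^t = N\bm x^{t+1} - \bm x^t$, which I would use in both parts; throughout, I rely on the unit-vector relation $1 - \langle \bm a, \bm b\rangle = \tfrac{1}{2}\|\bm a - \bm b\|_2^2$ from Claim~\ref{claim:squared-distance-inner-product}.

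For the first inequality, I would substitute the identity into the inner product on the left to obtain $\eta\langle \bm x^{t+1} - \bm x^t,\, \bm z^t\rangle = \langle \bm x^{t+1} - \bm x^t,\, N\bm x^{t+1} - \bm x^t\rangle$. Expanding the right-hand side and using $\|\bm x^{t+1}\|_2 = \|\bm x^t\|_2 = 1$ collapses it to $(N+1)\bigl(1 - \langle \bm x^{t+1}, \bm x^t\rangle\bigr)$, which by the claim equals $\tfrac{N+1}{2}\|\bm x^{t+1} - \bm x^t\|_2^2$. Dividing by $\eta$ and using $(N+1)/2 \ge 1$ (from $N \ge 1$) gives exactly $\langle \bm x^{t+1} - \bm x^t,\, \bm z^t\rangle \ge \tfrac{1}{\eta}\|\bm x^{t+1} - \bm x^t\|_2^2$.

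For the second inequality, I would take squared norms of the same identity: $\eta^2\|\bm z^t\|_2^2 = \|N\bm x^{t+1} - \bm x^t\|_2^2 = N^2 + 1 - 2N\langle \bm x^{t+1}, \bm x^t\rangle$. Substituting $\langle \bm x^{t+1}, \bm x^t\rangle = 1 - \tfrac{1}{2}\|\bm x^{t+1} - \bm x^t\|_2^2$ via the same claim lets me regroup the right-hand side neatly as $(N-1)^2 + N\|\bm x^{t+1} - \bm x^t\|_2^2$; dropping the non-negative square and using $N \ge 1$ yields $\eta^2\|\bm z^t\|_2^2 \ge \|\bm x^{t+1} - \bm x^t\|_2^2$, and taking square roots finishes the proof. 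There is no genuine obstacle here: once one spots the identity $\eta\bm z^t = N\bm x^{t+1} - \bm x^t$, both claims fall out of essentially the same one-line expansion, and the sole role of the sign hypothesis $\langle \bm x^t, \bm z^t\rangle \ge 0$ is to guarantee $N \ge 1$, which is precisely what converts the algebraic equalities into the stated inequalities.
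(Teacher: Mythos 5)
Your proof is correct, but it takes a genuinely different route from the paper's. The paper sets $\tilde{\bm x}^{t+1} = \bm x^t + \eta \bm z^t$, observes that $\langle \bm x^t, \bm z^t\rangle \ge 0$ forces $\|\tilde{\bm x}^{t+1}\|_2 \ge 1$, and then invokes the Pythagorean property of projection onto the unit ball, $\langle \bm x^t - \bm x^{t+1}, \tilde{\bm x}^{t+1} - \bm x^{t+1}\rangle \le 0$, to get the first inequality; the second inequality is then deduced from the first via Cauchy--Schwarz. You instead work directly from the identity $\eta\bm z^t = N\bm x^{t+1} - \bm x^t$ with $N = \|\bm x^t + \eta\bm z^t\|_2 \ge 1$, and both parts reduce to the exact identities $\eta\langle \bm x^{t+1}-\bm x^t, \bm z^t\rangle = \tfrac{N+1}{2}\|\bm x^{t+1}-\bm x^t\|_2^2$ and $\eta^2\|\bm z^t\|_2^2 = (N-1)^2 + N\|\bm x^{t+1}-\bm x^t\|_2^2$; I verified both expansions and the role of $N\ge 1$, and they are right. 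Your argument is more elementary and self-contained (no appeal to an external projection lemma), it yields the two claims independently of each other, and the exact identities are strictly sharper than the stated inequalities; the paper's projection argument, on the other hand, is shorter to state and generalizes immediately to projections onto arbitrary convex sets rather than just the unit ball. Either proof is acceptable here.
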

\begin{proof}
Let $\tilde{\bm x}^{t+1} = \bm x^t + \eta \bm z^t$, so $\bm x^t = \normalize(\tilde{\bm x}^{t+1})$ and $\bm z^t = \frac{1}{\eta}(\tilde{\bm x}^{t+1} - \bm x^t)$. Then we have 
\begin{align*}
     \langle \bm x^{t+1}-\bm x^t, \, \bm z^t \rangle ~ = ~ \frac{1}{\eta} \langle \bm x^{t+1}-\bm x^t, \, \tilde{\bm x}^{t+1} - \bm x^t \rangle. 
\end{align*}
Because $\langle \bm x^t, \bm z^t\rangle \ge 0$, the vector $\tilde{\bm x}^{t+1} = \bm x^t + \eta \bm z^t$ has length $\ge 1$ and hence is outside (or on the surface) of the $d$-dimensional unit ball.  Since $\bm x^t = \normalize(\tilde{\bm x}^{t+1})$ is the projection of $\tilde{\bm x}^{t+1}$ onto the unit ball, and $\bm z^t$ is another vector inside the unit ball, by the ``Pythagorean property'' (Proposition 2.2 in \cite{bansal_potential-function_2019}), we must have $\langle \bm x^t - \bm x^{t+1}, \tilde{\bm x}^{t+1} - \bm x^{t+1} \rangle \le 0$.  This implies 
\begin{align*}
     \langle \bm x^{t+1}-\bm x^t, \, \bm z^t \rangle & ~ \ge ~ \frac{1}{\eta} \Big( \langle \bm x^{t+1}-\bm x^t, \, \tilde{\bm x}^{t+1} - \bm x^t \rangle + \langle \bm x^t - \bm x^{t+1}, \, \tilde{\bm x}^{t+1} - \bm x^{t+1} \rangle\Big) \\
     & ~ = ~ \frac{1}{\eta} \langle \bm x^{t+1}-\bm x^t, \, \bm x^{t+1} - \bm x^t \rangle ~ = ~ \frac{1}{\eta} \| \bm x^{t+1} - \bm x^t \|_2^2, 
\end{align*}
which proves the first claim.  To prove the second claim, we use Cauchy-Schwarz inequality:
\begin{align*}
\frac{1}{\eta} \| \bm x^{t+1} - \bm x^t \|_2^2 ~ \le ~ \langle \bm x^{t+1}-\bm x^t, \, \bm z^t \rangle ~ \le ~ \|\bm x^{t+1} - \bm x^t\|_2 \| \bm z^t \|_2.
\end{align*}
This implies $\|\bm x^{t+1} - \bm x^t\|_2 \le \eta \|\bm z^t\|_2$. 
\end{proof}

\begin{lemma}\label{lem:inner-product-always>=0}
    Consider a creator $\bm v_i^t$ and a user $\bm u_j^t$.  Suppose the user is always recommended creator $i$ (so the user is updated by $\bm u_j^{t+1} = \normalize(\bm u_j^t + \eta_u f(\bm v_i^t, \bm u_j^t) \bm v_i^t)$), and creator $i$ is updated by $\bm v_i^{t+1} = \normalize(\bm v_i^t + \eta_c \bm \alpha_i^t)$ with $\| \bm \alpha_i^t \|_2 \le 1$ and $\langle \bm v_i^t, \bm \alpha_i^t\rangle \ge 0$ at each time step.    Assume:   
    \begin{itemize}
        \item The inner product $\langle \bm u_j^0, \bm v_i^0 \rangle > 0$ initially.  (Note that $\langle \bm u_j^0, \bm u_{j'}^0\rangle$ needs not hold.)
        \item There exists some constant $\lb_f > 0$ such that $f(\bm v_i, \bm u_j) \ge \lb_f > 0$ whenever $\langle \bm u_j, \bm v_i \rangle > 0$.
        \item $\eta_c \le \frac{\eta_u \lb_f}{2}$ and $0\le \eta_u < \frac{1}{2}$. 
    \end{itemize}
    Then, we have $\langle \bm u_j^t, \bm v_i^t \rangle > 0$ in all time steps. 
\end{lemma}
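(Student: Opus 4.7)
The plan is a straightforward induction on $t$ maintaining the invariant $\theta_t := \langle \bm u_j^t, \bm v_i^t\rangle > 0$. The base case $t=0$ is the first hypothesis. For the inductive step, I would analyze the inner product \emph{before} normalization, which admits a clean bilinear expansion in which both the conditions ``$f \ge \lb_f$ when $\langle \bm v_i, \bm u_j\rangle > 0$'' and ``$\eta_c \le \eta_u \lb_f / 2$'' are directly exploitable.

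Concretely, set $\tilde{\bm u} := \bm u_j^t + \eta_u f(\bm v_i^t, \bm u_j^t)\,\bm v_i^t$ and $\tilde{\bm v} := \bm v_i^t + \eta_c \bm \alpha_i^t$, so that $\bm u_j^{t+1} = \normalize(\tilde{\bm u})$ and $\bm v_i^{t+1} = \normalize(\tilde{\bm v})$. Writing $f := f(\bm v_i^t, \bm u_j^t)$ for brevity, expanding yields
\[
\langle \tilde{\bm u},\tilde{\bm v}\rangle \;=\; \theta_t \;+\; \eta_u f \;+\; \eta_c \langle \bm u_j^t, \bm \alpha_i^t\rangle \;+\; \eta_u \eta_c\, f\, \langle \bm v_i^t, \bm \alpha_i^t\rangle.
\]
By the inductive hypothesis $\theta_t > 0$, the sign condition on $f$ forces $f > 0$, and then the quantitative lower bound gives $f \ge \lb_f$. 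The fourth term is nonnegative because $f > 0$ and $\langle \bm v_i^t, \bm \alpha_i^t\rangle \ge 0$. The only potentially negative contribution is the third term, which by Cauchy--Schwarz together with $\|\bm u_j^t\|_2 = 1$ and $\|\bm \alpha_i^t\|_2 \le 1$ is bounded below by $-\eta_c$. Combining these bounds with $\eta_c \le \eta_u \lb_f / 2$,
\[
\langle \tilde{\bm u},\tilde{\bm v}\rangle \;\ge\; \theta_t + \eta_u \lb_f - \eta_c \;\ge\; \theta_t + \tfrac{1}{2}\eta_u \lb_f \;>\; 0.
\]

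It remains to check that normalization does not destroy positivity, i.e.\ that $\|\tilde{\bm u}\|_2, \|\tilde{\bm v}\|_2 > 0$. Since $\eta_u < 1/2$ and $|f| \le 1$, the reverse triangle inequality gives $\|\tilde{\bm u}\|_2 \ge 1 - \eta_u > 0$; likewise $\|\tilde{\bm v}\|_2^2 = 1 + 2\eta_c \langle \bm v_i^t, \bm \alpha_i^t\rangle + \eta_c^2 \|\bm \alpha_i^t\|_2^2 \ge 1 > 0$. Dividing a positive quantity by positive scalars preserves sign, so $\theta_{t+1} > 0$ and the induction closes.

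The only mild obstacle is the indefinite sign of $\eta_c \langle \bm u_j^t, \bm \alpha_i^t\rangle$: the creator's update direction $\bm \alpha_i^t$ is constrained only to correlate nonnegatively with $\bm v_i^t$, not with $\bm u_j^t$, so it could in principle pull the creator adversarially away from the user. The hypothesis $\eta_c \le \eta_u \lb_f / 2$ is calibrated precisely so that the guaranteed attractive pull on the user toward $\bm v_i^t$ (of magnitude at least $\eta_u \lb_f$) dominates this worst-case drift with a factor-of-two safety margin. Notably, the auxiliary Lemmas~\ref{lem:lemma-B} and \ref{lem:pythogorean} do not appear necessary for this statement -- the naive bilinear expansion of $\langle \tilde{\bm u}, \tilde{\bm v}\rangle$ already suffices.
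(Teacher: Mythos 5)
Your proof is correct, and it takes a genuinely different and more elementary route than the paper's. The paper works with the \emph{normalized} iterates throughout: it decomposes $\langle \bm u_j^{t+1}, \bm v_i^{t+1}\rangle - \langle \bm u_j^t, \bm v_i^t\rangle$ into $\langle \bm u_j^{t+1}, \bm v_i^{t+1}-\bm v_i^t\rangle + \langle \bm u_j^{t+1}-\bm u_j^t, \bm v_i^t\rangle$, lower-bounds the second term by $\tfrac{\eta_u \lb_f}{1+\eta_u \lb_f}(1-\langle \bm u_j^t,\bm v_i^t\rangle)$ via Lemma~\ref{lem:lemma-B}, bounds the first by $-\eta_c$ via Cauchy--Schwarz and Lemma~\ref{lem:pythogorean}, and then closes the induction with a case split on whether $1-\langle \bm u_j^t,\bm v_i^t\rangle$ exceeds $\tfrac12(1+\eta_u\lb_f)$. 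You instead expand the \emph{pre-normalization} inner product $\langle \tilde{\bm u},\tilde{\bm v}\rangle$ bilinearly and observe that normalization merely divides by the positive scalars $\|\tilde{\bm u}\|_2\,\|\tilde{\bm v}\|_2$, so the sign survives; your expansion and all four term bounds check out ($f\ge \lb_f$ from the sign condition plus the quantitative lower bound, the cross term $\eta_u\eta_c f\langle \bm v_i^t,\bm\alpha_i^t\rangle\ge 0$, and $\eta_c\langle \bm u_j^t,\bm\alpha_i^t\rangle\ge-\eta_c$), and $\eta_c\le\eta_u\lb_f/2$ closes the argument. You are also right that the auxiliary lemmas are not needed here, and in fact your argument does not even use $\eta_u<\tfrac12$ except to confirm $\|\tilde{\bm u}\|_2>0$ (which already follows from $\|\tilde{\bm u}\|_2^2 = 1+2\eta_u f\,\theta_t+(\eta_u f)^2\ge 1$ since $f,\theta_t>0$). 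What the paper's heavier machinery buys is a quantitative statement --- the normalized inner product strictly increases whenever it is below roughly $\tfrac12(1-\eta_u\lb_f)$ --- which is structurally parallel to the potential-function argument in Lemma~\ref{lemma:single-creator-J-users-any-radius}; but for the bare positivity claim of this lemma your shorter argument suffices.
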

\begin{proof}
We prove by induction.  Suppose $\langle \bm u_j^t, \bm v_i^t \rangle > 0$ already holds.  We prove that $\langle \bm u_j^{t+1}, \bm v_i^{t+1} \rangle > 0$ will also hold. 
Take the difference between $\langle \bm u_j^{t+1}, \bm v_i^{t+1} \rangle$ and $\langle \bm u_j^{t}, \bm v_i^{t} \rangle$:
\begin{align*}
    \langle \bm u_j^{t+1}, \bm v_i^{t+1} \rangle - \langle \bm u_j^t, \bm v_i^t \rangle \, = \, \langle \bm u_j^{t+1}, \bm v_i^{t+1} - \bm v_i^{t} \rangle + \langle \bm u_j^{t+1} - \bm u_j^t, \bm v_i^t \rangle. 
\end{align*}
For $\langle \bm u_j^{t+1} - \bm u_j^t, \bm v_i^t \rangle$, using Lemma~\ref{lem:lemma-B} with $\bm x^t = \bm u_j^t, \bm z^t = \bm v_i^t$, and $\eta = \eta_u f(\bm v_i^t, \bm u_j^t)$, we get
\begin{align*}
    \langle \bm u_j^{t+1} - \bm u_j^t, \bm v_i^t \rangle ~ \ge ~ \tfrac{\eta_u f(\bm v_i^t, \bm u_j^t)}{1 + \eta_u f(\bm v_i^t, \bm u_j^t)} \big( 1 - \langle \bm u_j^t, \bm v_i^t \rangle \big) ~ \ge ~ \tfrac{\eta_u \lb_f}{1 + \eta_u \lb_f} \big( 1 - \langle \bm u_j^t, \bm v_i^t \rangle \big). 
\end{align*}
For $\langle \bm u_j^{t+1}, \bm v_i^{t+1} - \bm v_i^{t} \rangle$, by Cauchy-Schwarz inequality and Lemma~\ref{lem:pythogorean}, 
\begin{align*}
    \langle \bm u_j^{t+1}, \bm v_i^{t+1} - \bm v_i^{t} \rangle & ~ \ge ~ - \|\bm u_j^{t+1} \|_2 \cdot \| \bm v_i^{t+1} - \bm v_i^{t} \|_2 ~ \ge ~ - 1\cdot \eta_c \|\bm \alpha_i^t \|_2 ~ \ge ~ -\eta_c. 
\end{align*}
\begin{itemize}
\item If $1 - \langle \bm u_j^t, \bm v_i^t\rangle > \frac{1}{2}(1+\eta_u\lb_f)$, then we have 
\begin{align*}
     \langle \bm u_j^{t+1}, \bm v_i^{t+1} \rangle - \langle \bm u_j^t, \bm v_i^t \rangle ~ > ~ \eta_u \lb_f \tfrac{1}{2} - \eta_c ~ \ge 0
\end{align*}
by the assumption of $\eta_c \le \frac{\eta_u \lb_f}{2}$. 
\item If $1 - \langle \bm u_j^t, \bm v_i^t\rangle \le \frac{1}{2}(1 + \eta_u\lb_f)$, then we have 
\begin{align*}
    & \langle \bm u_j^{t+1}, \bm v_i^{t+1} \rangle - \langle \bm u_j^t, \bm v_i^t \rangle ~ \ge ~ 0 - \eta_c \\
    \implies ~ & \langle \bm u_j^{t+1}, \bm v_i^{t+1} \rangle ~ \ge ~ \langle \bm u_j^t, \bm v_i^t \rangle - \eta_c ~ \ge ~ \tfrac{1}{2} - \tfrac{1}{2} \eta_u\lb_f - \eta_c ~ > ~ 0
\end{align*}
under the assumption of $\eta_c \le \frac{\eta_u \lb_f}{2}$ and $\eta_u < \frac{1}{2}$. 
\end{itemize}
The above two cases together ensure $\langle \bm u_j^{t+1}, \bm v_i^{t+1} \rangle > 0$. 
\end{proof}

\begin{lemma}\label{lem:one-user-one-creator-distance-decrease}
Consider a system of one user and one creator that satisfies $\langle \bm u_j^0, \bm v_i^0\rangle > 0$ and $\langle \bm u_j^0, \bm y \rangle > \langle \bm v_i^0, \bm y\rangle > 0$ for some $\bm y\in\reals^d$ with $\|\bm y\|\le 1$ initially.
 The creator is always recommended to the user (so the updates are $\bm u_j^{t+1} = \normalize(\bm u_j^t + \eta_u f(\bm v_i^t, \bm u_j^t) \bm v_i^t)$ and $\bm v_i^{t+1} = \normalize(\bm v_i^t + \eta_c \bm u_j^t)$).
 Suppose $\eta_c \le \frac{\eta_u \lb_f}{2}$ and $0\le \eta_u < \frac{1}{2}$.  Then, we have:
 \begin{itemize}
     \item $\langle \bm u_j^t, \bm y \rangle > \langle \bm v_i^t, \bm y\rangle > 0$ for all $t\ge 1$. 
     \item Suppose $\langle \bm u_j^0, \bm y\rangle - \langle \bm v_i^0, \bm y \rangle = D > 0$.  For any $R<D$, after $T = \frac{8}{3\eta_u\lb_f} \ln\frac{2}{R^2}$ steps, we have $\langle \bm v_i^T, \bm y\rangle - \langle \bm v_i^0, \bm y\rangle\ge \frac{\eta_c}{\eta_u + \eta_c} (D-R)$. 
 \end{itemize}
\end{lemma}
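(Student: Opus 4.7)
The plan is to prove (i) by induction on $t$ and (ii) by combining a discrete-time monotone invariant with exponential convergence $\langle \bm u_j^t, \bm v_i^t\rangle \to 1$. Throughout, I write $a^t = \langle \bm u_j^t, \bm y\rangle$, $b^t = \langle \bm v_i^t, \bm y\rangle$, $c^t = \langle \bm u_j^t, \bm v_i^t\rangle$, $D^t = a^t - b^t$, $\alpha_t = \eta_u f(\bm v_i^t, \bm u_j^t)$, $\beta = \eta_c$, and let $s_u = \|\bm u_j^t + \alpha_t \bm v_i^t\|_2$, $s_v = \|\bm v_i^t + \beta \bm u_j^t\|_2$ denote the normalization factors at step $t$. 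By Lemma~\ref{lem:inner-product-always>=0}, $c^t > 0$ for every $t$, so $\alpha_t \ge \eta_u \lb_f \ge 2\beta$, a fact I rely on throughout.

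For part (i), the base case is given. In the inductive step, suppose $a^t > b^t > 0$. Positivity of $a^{t+1}, b^{t+1}$ is immediate from Lemma~\ref{lem:convex-cone}, since both updated vectors lie in the convex cone of $\bm u_j^t$ and $\bm v_i^t$, each with positive inner product against $\bm y$. For the ordering, expand
\begin{equation*}
s_u s_v \,(a^{t+1} - b^{t+1}) = a^t(s_v - \beta s_u) - b^t(s_u - \alpha_t s_v),
\end{equation*}
and verify both bracket factors are positive via $s_u \le 1+\alpha_t, s_v \le 1+\beta, \eta_u < 1/2$. The crux is the algebraic identity
\begin{equation*}
(1+\alpha_t)^2 s_v^2 - (1+\beta)^2 s_u^2 = 2(\alpha_t - \beta)(1-c^t)(1-\alpha_t \beta),
\end{equation*}
obtained by direct squaring. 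Since $\alpha_t > \beta$ and $c^t < 1$ (because $a^t > b^t$ forces $\bm u_j^t \ne \bm v_i^t$), the RHS is strictly positive, so $s_v - \beta s_u \ge s_u - \alpha_t s_v$; combining with $a^t > b^t > 0$ gives $a^{t+1} > b^{t+1}$.

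For part (ii), I first establish a monotone invariant. Lemma~\ref{lem:lemma-B} gives $\Delta a^t \ge -\tfrac{\alpha_t}{1+\alpha_t}D^t$ and $\Delta b^t \ge \tfrac{\beta}{1+\beta}D^t$, so
\begin{equation*}
\eta_c \Delta a^t + \eta_u \Delta b^t \;\ge\; \eta_u \eta_c D^t \Big[\tfrac{1}{1+\eta_c} - \tfrac{f_t}{1+\eta_u f_t}\Big] \;\ge\; 0,
\end{equation*}
the last inequality reducing to $f_t(1+\eta_c-\eta_u) \le 1$, which holds from $f_t \le 1$ and $\eta_u \ge \eta_c$. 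Summing and substituting $a^0 - a^T = (D - D^T) - (b^T - b^0)$ produces
\begin{equation*}
b^T - b^0 \;\ge\; \frac{\eta_c(D - D^T)}{\eta_u + \eta_c},
\end{equation*}
so it suffices to show $D^T \le R$, or equivalently (via $D^T \le \|\bm u_j^T - \bm v_i^T\|_2 = \sqrt{2(1-c^T)}$) that $1 - c^T \le R^2/2$.

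For the decay, I would compute $c^{t+1} = (c^t + \alpha_t + \beta + \alpha_t\beta c^t)/(s_u s_v)$ directly, then prove the identity
\begin{equation*}
(s_u s_v)^2 - (c^t + \alpha_t + \beta + \alpha_t\beta c^t)^2 = (1-(c^t)^2)(1-\alpha_t\beta)^2.
\end{equation*}
Combined with the lower bound $s_u s_v \ge (1+\alpha_t c^t)(1+\beta c^t)$, after factoring out $(1+c^t)$ in numerator and denominator this yields the clean recursion $1 - c^{t+1} \le (1-c^t)/(1+\eta_u \lb_f)$. Iterating and using $\ln(1+x) \ge 3x/4$ for $x \in [0, 1/2]$, the prescribed $T = \tfrac{8}{3\eta_u\lb_f}\ln(2/R^2)$ forces $1-c^T \le R^2/2$ as required. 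The main obstacles are the two algebraic identities above; their non-obvious factorizations $(\alpha-\beta)(1-c)(1-\alpha\beta)$ and $(1-c^2)(1-\alpha\beta)^2$ are the technical heart of the argument, and once established, both the ordering in (i) and the geometric-rate convergence in (ii) follow by straightforward manipulation.
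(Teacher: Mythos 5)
Your proposal is correct, and while its skeleton matches the paper's (induction for the ordering $\langle \bm u_j^t,\bm y\rangle > \langle \bm v_i^t,\bm y\rangle > 0$, then telescoped one-step bounds from Lemma~\ref{lem:lemma-B} for the quantitative claim), several of your verifications take a genuinely different and more explicit route. For part (i), the paper reduces to showing that $h(\eta) = (1+\eta)/\sqrt{1+2\eta a+\eta^2}$ is increasing in $\eta$ and applies $h(\eta_c)\le h(\eta_u f)$; your identity $(1+\alpha_t)^2 s_v^2-(1+\beta)^2 s_u^2 = 2(\alpha_t-\beta)(1-c^t)(1-\alpha_t\beta)$ is exactly an explicit algebraic proof of that same monotonicity fact (I checked it, it holds), so the two arguments are equivalent in content. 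For part (ii), your weighting $\eta_c\,\Delta a^t+\eta_u\,\Delta b^t\ge 0$ differs from the paper's $\tfrac{1+\eta_c}{\eta_c}\Delta b^t+\tfrac{1+\eta_u}{\eta_u}\Delta a^t\ge 0$ and is actually cleaner: it lands directly on the constant $\tfrac{\eta_c}{\eta_u+\eta_c}$ without the paper's final $\eta_c\le\eta_u$ cleanup step. The most substantive divergence is the convergence rate: the paper simply cites Lemma~\ref{lemma:single-creator-J-users-any-radius} (the potential-function argument with rate $(1-\tfrac{3\eta_u\lb_f}{8})^t$) to get $\|\bm u_j^T-\bm v_i^T\|\le R$, whereas you re-derive a contraction $1-c^{t+1}\le (1-c^t)/(1+\eta_u\lb_f)$ from the exact identity $(s_us_v)^2-N^2=(1-(c^t)^2)(1-\alpha_t\beta)^2$ together with $s_us_v+N\ge(1+c^t)(1+\alpha_t+\beta+\alpha_t\beta c^t)$; I verified both the identity and the factorization, and your rate is in fact slightly sharper than the paper's, so the prescribed $T$ comfortably suffices via $\ln(1+x)\ge 3x/4$ on $[0,\tfrac12]$. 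What the paper's route buys is reuse of an already-proved lemma (and hence a shorter proof); what yours buys is a self-contained, exact-arithmetic argument for the two-body case with a better constant. No gaps.
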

\begin{proof}
We prove the first item by induction.  Suppose $\langle \bm u_j^t, \bm y \rangle > \langle \bm v_i^t, \bm y\rangle > 0$ holds.  Consider $t+1$.  First, by Lemma~\ref{lem:convex-cone}, $\langle \bm v_i^{t+1}, \bm y \rangle > 0$ holds.
Then, we prove $\langle \bm u_j^{t+1}, \bm y \rangle > \langle \bm v_i^{t+1}, \bm y \rangle$.  Let $f = f(\bm v_i^t, \bm u_j^t)$. 
\begin{align*}
    & \langle \bm u_j^{t+1}, \bm y \rangle - \langle \bm v_i^{t+1}, \bm y \rangle ~ = ~ \Big\langle \frac{\bm u_j^t + \eta_u f \bm v_i^t}{\|\bm u_j^t + \eta_u f \bm v_i^t\|_2}, \bm y\Big\rangle - \Big\langle \frac{\bm v_i^t + \eta_c \bm u_j^t}{\|\bm v_i^t + \eta_c \bm u_j^t\|_2}, \bm y\Big\rangle \\
    & ~ = ~ \Big(\frac{1}{\|\bm u_j^t + \eta_u f \bm v_i^t\|_2} - \frac{\eta_c}{\|\bm v_i^t + \eta_c \bm u_j^t\|_2} \Big) \langle \bm u_j^t, \bm y\rangle - \Big(\frac{1}{\|\bm v_i^t + \eta_c \bm u_j^t\|_2} - \frac{\eta_u f}{\|\bm u_j^t + \eta_u f \bm v_i^t\|_2}\Big)\langle \bm v_i^t, \bm y\rangle \\
    & ~ > ~ \Big(\frac{1}{\|\bm u_j^t + \eta_u f \bm v_i^t\|_2} - \frac{\eta_c}{\|\bm v_i^t + \eta_c \bm u_j^t\|_2} \Big) \langle \bm v_i^t, \bm y\rangle - \Big(\frac{1}{\|\bm v_i^t + \eta_c \bm u_j^t\|_2} - \frac{\eta_u f}{\|\bm u_j^t + \eta_u f \bm v_i^t\|_2}\Big)\langle \bm v_i^t, \bm y\rangle \\
    & ~ = ~ \Big( \frac{1+\eta_uf}{\|\bm u_j^t + \eta_u f \bm v_i^t\|_2} - \frac{1+\eta_c}{\|\bm v_i^t + \eta_c \bm u_j^t\|_2} \Big) \langle \bm v_i^t, \bm y\rangle \\
    & ~ = ~ \Big( \frac{1+\eta_uf}{\sqrt{1 + 2 \eta_u f \langle \bm u_j^t, \bm v_i^t \rangle + (\eta_u f)^2}} - \frac{1+\eta_c}{\sqrt{1 + 2 \eta_c \langle \bm u_j^t, \bm v_i^t \rangle + (\eta_c)^2}} \Big) \langle \bm v_i^t, \bm y\rangle.
\end{align*}
Let $a = \langle \bm u_j^t, \bm v_i^t\rangle \le 1$. 
We note that the function 
\begin{equation*}
    h(\eta) = \frac{1 + \eta}{\sqrt{1 + 2 \eta a + \eta^2}} = \sqrt{\frac{1 + 2\eta + \eta^2}{1 + 2 \eta a + \eta^2}} = \sqrt{1 + \frac{(2-2a)\eta}{1 + 2\eta a + \eta^2}} = \sqrt{1 + \frac{2(1-a)}{\frac{1}{\eta} + 2a + \eta}} 
\end{equation*}
is increasing in $\eta \in [0, 1]$.  Under the assumption of $\eta_c \le \frac{\eta_u \lb_f}{2} \le \frac{\eta_u f}{2} < \eta_u f$, we have $h(\eta_c) \le h(\eta_u f)$ and hence
\begin{align*}
    & \langle \bm u_j^{t+1}, \bm y \rangle - \langle \bm v_i^{t+1}, \bm y \rangle ~ > ~ \big( h(\eta_u f) - h(\eta_c) \big) \langle \bm v_i^t, \bm y\rangle ~ \ge ~ 0. 
\end{align*}

We then prove the second item.  Using Lemma~\ref{lem:lemma-B} for $\bm v_i^{t+1} = \normalize(\bm v_i^t + \eta_c \bm u_j^t)$, we get 
\begin{equation*}
    \langle \bm v_i^{t+1} - \bm v_i^t, \bm y \rangle \ge \frac{\eta_c}{1+\eta_c}\Big(\langle \bm u_j^t, \bm y \rangle - \langle \bm v_i^t, \bm y \rangle \Big).
\end{equation*}
Using Lemma~\ref{lem:lemma-B} for $\bm u_j^{t+1} = \normalize(\bm u_j^t + \eta_u f(\bm v_i^t, \bm u_j^t) \bm v_i^t)$ and using the fact $\langle \bm v_i^t, \bm y \rangle - \langle \bm u_j^t, \bm y \rangle < 0$ proved in item 1, 
\begin{equation*}
    \langle \bm u_j^{t+1} - \bm u_j^t, \bm y \rangle \ge \frac{\eta_u f(\bm v_i^t, \bm u_j^t)}{1+\eta_u f(\bm v_i^t, \bm u_j^t)}\Big(\langle \bm v_i^t, \bm y \rangle - \langle \bm u_j^t, \bm y \rangle \Big) \ge \frac{\eta_u}{1+\eta_u}\Big(\langle \bm v_i^t, \bm y \rangle - \langle \bm u_j^t, \bm y \rangle \Big).
\end{equation*}
Rearranging the above two inequalities: 
\begin{align*}
    & \frac{1+\eta_c}{\eta_c} \Big(\langle \bm v_i^{t+1}, \bm y \rangle - \langle \bm v_i^t, \bm y \rangle \Big) \ge \langle \bm u_j^t, \bm y \rangle - \langle \bm v_i^t, \bm y \rangle; \\
    & \frac{1+\eta_u}{\eta_u} \Big( \langle \bm u_j^{t+1}, \bm y\rangle - \langle \bm u_j^t, \bm y \rangle \Big) \ge \langle \bm v_i^t, \bm y \rangle - \langle \bm u_j^t, \bm y \rangle.  
\end{align*}
Summing the above two inequalities over $t=0, 1, \ldots, T-1$: 
\begin{equation}
\label{eq:two-inequalities-1}
    \frac{1+\eta_c}{\eta_c}\Big( \langle \bm v_i^T, \bm y\rangle - \langle \bm v_i^0, \bm y\rangle\Big) + \frac{1+\eta_u}{\eta_u}\Big( \langle \bm u_j^T, \bm y\rangle - \langle \bm u_j^0, \bm y\rangle\Big) \ge 0. 
\end{equation}
According to Lemma~\ref{lemma:single-creator-J-users-any-radius}, after at most $T = \frac{8}{3\eta_u\lb_f} \ln\frac{2}{R^2}$ steps, we have $\| \bm u_j^T - \bm v_i^T\|_2 \le R$.  This implies $\langle \bm u_j^T, \bm y \rangle - \langle \bm v_i^T, \bm y \rangle = \langle \bm u_j^T - \bm v_i^T, \bm y \rangle \le \| \bm u_j^T - \bm v_i^T \| \le R$ 
and hence 
\begin{equation}\label{eq:two-inequalities-2}
    \Big( \langle \bm v_i^T, \bm y\rangle - \langle \bm v_i^0, \bm y\rangle\Big) - \Big( \langle \bm u_j^T, \bm y\rangle - \langle \bm u_j^0, \bm y\rangle\Big) = 
    \Big(\langle \bm u_j^0, \bm y \rangle - \langle \bm v_i^0, \bm y \rangle\Big) - \Big(\langle \bm u_j^T, \bm y \rangle - \langle \bm v_i^T, \bm y \rangle\Big) \ge D - R.  
\end{equation}
Multiplying \eqref{eq:two-inequalities-2} by $\frac{1+\eta_u}{\eta_u}$ and adding to \eqref{eq:two-inequalities-1}:
\begin{equation*}
     \Big( \frac{1+\eta_c}{\eta_c} + \frac{1+\eta_u}{\eta_u} \Big)\Big( \langle \bm v_i^T, \bm y\rangle - \langle \bm v_i^0, \bm y\rangle\Big) \ge \frac{1+\eta_u}{\eta_u}(D-R). 
\end{equation*}
This implies 
\begin{equation*}
     \langle \bm v_i^T, \bm y\rangle - \langle \bm v_i^0, \bm y\rangle\ge \frac{\frac{1+\eta_u}{\eta_u}}{\frac{1+\eta_c}{\eta_c} + \frac{1+\eta_u}{\eta_u} }(D-R) = \frac{\eta_c(1+\eta_u)}{\eta_u(1+\eta_c) + \eta_c(1+\eta_u)} (D-R) \ge \frac{\eta_c}{\eta_u + \eta_c} (D-R). 
\end{equation*}
given $\eta_c \le \eta_u$. 
\end{proof}

The following lemma shows that, when we \emph{reflect} some of the feature vectors in a system $(U^t, V^t) = (\{\bm u_j^t \}_{j\in[m]}, \{ \bm v_i^t\}_{i\in[n]})$, there is a correspondence between the behaviors of the system with the reflected vectors and the original system. 

\begin{lemma}[Reflection]
\label{lem:reflection}
Let $(U^t, V^t) = (\{\bm u_j^t \}_{j\in[m]}, \{ \bm v_i^t\}_{i\in[n]})$ be a system of $m$ users and $n$ creators with impact functions $f, g$.  Let $a_i, b_j \in \{+1, -1\}$, $\forall i\in[n]$, $\forall i\in[m]$ be some binary constants.  Define: 
\begin{align*}
    \tilde {\bm u}_j^t = b_j \bm u_j^t = \pm \bm u_j^t, \quad \quad \tilde {\bm v}_i^t = a_i \bm v_i^t = \pm \tilde {\bm v}_i^t. 
\end{align*}
and impact functions
\begin{align*}
    \tilde f(\tilde{\bm v}_i, \tilde{\bm u}_j) = a_i b_j f(\bm v_i, \bm u_j), \quad \quad \tilde g(\tilde {\bm u}_j, \tilde {\bm v}_i) = a_i b_j g(\bm u_j, \bm v_i). 
\end{align*}
Then: 
\begin{itemize}
    \item There is a ``correspondence'' between the evolution of the system $(U^t, V^t)$ with impact functions $f, g$ and the evolution of the system $(\tilde U^t, \tilde V^t) = (\{\tilde {\bm u}_j^t \}_{j\in[m]}, \{ \tilde{\bm v}_i^t\}_{i\in[n]})$ with impact functions $\tilde f, \tilde g$.  Formally, suppose every user is recommended the same creator in the two systems, then the updated vectors in the two systems still satisfy the relations: $\tilde {\bm u}_j^{t+1} = b_j \bm u_j^{t+1}$, ~$\tilde {\bm v}_i^{t+1} = a_i \bm v_i^{t+1}$. 
    \item If the system $(\tilde U^t, \tilde V^t)$ is in $R$-bi-polarization, then the original system $(U^t, V^t)$ is also in $R$-bi-polarization.  
\end{itemize} 
\end{lemma}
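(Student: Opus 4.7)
The plan is to verify both claims by direct computation, leveraging two elementary facts: (i) the normalization map $\normalize$ is equivariant under multiplication by $\pm 1$, so $\normalize(s\bm x) = s\,\normalize(\bm x)$ for any $s\in\{+1,-1\}$ and nonzero $\bm x$; and (ii) $a_i^2 = b_j^2 = 1$, so paired sign factors always collapse. Essentially the entire lemma is an equivariance statement for the dynamics under the group of per-vector sign flips, and the impact functions $\tilde f, \tilde g$ are defined precisely to make this equivariance hold.

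For the first claim, I would expand the user update in the reflected system using the definitions of $\tilde{\bm u}_j^t, \tilde{\bm v}_{i_j^t}^t, \tilde f$:
\begin{equation*}
\tilde{\bm u}_j^{t+1} \;=\; \normalize\bigl(b_j \bm u_j^t + \eta_u\, a_{i_j^t}^2\, b_j\, f(\bm v_{i_j^t}^t, \bm u_j^t)\, \bm v_{i_j^t}^t\bigr),
\end{equation*}
and since $a_{i_j^t}^2 = 1$ the factor $b_j$ pulls out of the normalization to give $\tilde{\bm u}_j^{t+1} = b_j \bm u_j^{t+1}$. For the creator update, the key observation is that the sets $J_i^t$ are identical in the two systems because the same recommendations are made by assumption; then each summand becomes $\tilde g(\tilde{\bm u}_j^t, \tilde{\bm v}_i^t)\, \tilde{\bm u}_j^t = a_i b_j^2\, g(\bm u_j^t, \bm v_i^t)\, \bm u_j^t = a_i\, g(\bm u_j^t, \bm v_i^t)\, \bm u_j^t$, so $a_i$ factors uniformly out of the sum and then out of the normalization, yielding $\tilde{\bm v}_i^{t+1} = a_i \bm v_i^{t+1}$.

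For the second claim, suppose $(\tilde U^t, \tilde V^t)$ is in $R$-bi-polarization with some vector $\bm c$, so each $\tilde{\bm u}_j^t$ lies within distance $R$ of $+\bm c$ or $-\bm c$, and likewise each $\tilde{\bm v}_i^t$. Since $\bm u_j^t = b_j \tilde{\bm u}_j^t$ with $b_j \in \{+1, -1\}$ acting as an isometry that permutes the two-element set $\{+\bm c, -\bm c\}$, the vector $\bm u_j^t$ is also within distance $R$ of $+\bm c$ or $-\bm c$. The identical argument applies to $\bm v_i^t = a_i \tilde{\bm v}_i^t$, so the original system satisfies $R$-bi-polarization with the same vector $\bm c$.

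There is no real obstacle here; the only subtlety is to keep track of the sign bookkeeping and to notice that the matching $b_j^2 = 1$ in the creator update is what allows the common prefactor $a_i$ to emerge uniformly across all users in $J_i^t$, which in turn is what makes normalization pass through the update cleanly.
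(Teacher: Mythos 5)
Your proposal is correct and follows essentially the same route as the paper's proof: expand each update, use $a_i^2 = b_j^2 = 1$ to collapse paired signs, pull the remaining sign factor through $\normalize$, and observe that sign flips preserve $R$-closeness to $\{+\bm c, -\bm c\}$. No gaps.
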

\begin{proof}
Consider the first item.
Suppose user $i$ is recommended creator $j$ at time step $t$ in the two systems. 
Then by definition, the updated user vectors in the two systems satisfy 
\begin{align*}
    \tilde {\bm u}_j^{t+1} & ~ = ~ \normalize\big(\tilde{\bm u}_j^t + \eta_u \tilde f(\tilde{\bm v}_i^t, \tilde{\bm u}_j^t) \tilde{\bm v}_i^t\big) ~ = ~ \normalize\big(b_j \bm u_j^t + \eta_u a_i b_j f(\bm v_i^t, \bm u_j^t) a_i \bm v_i^t\big) \\
    & ~ = ~ \normalize\big(b_j \bm u_j^t + \eta_u b_j f(\bm v_i^t, \bm u_j^t) \bm v_i^t\big) ~ = ~ b_j \normalize\big(\bm u_j^t + \eta_u f(\bm v_i^t, \bm u_j^t) \bm v_i^t\big) ~ = ~ b_j \bm u_j^{t+1} 
\end{align*}
Suppose creator $i$ is recommended to the set of users $J$ at time step $t$ in the two systems.  Then, 
\begin{align*}
    \tilde {\bm v}_i^{t+1} & ~ = ~ \normalize\big(\tilde{\bm v}_i^t + \tfrac{\eta_c}{|J|}\sum_{j\in J} g(\tilde{\bm u}_j^t, \tilde {\bm v}_i^t) \tilde{\bm u}_j^t\big) \\
    & ~ = ~ \normalize\big(a_i \bm v_i^t + \tfrac{\eta_c}{|J|}\sum_{j\in J} a_i b_j g(\bm u_j^t, \bm v_i^t) b_j \bm u_j^t\big) \\
    & ~ = ~ \normalize\big(a_i \bm v_i^t + \tfrac{\eta_c}{|J|}\sum_{j\in J} a_i g(\bm u_j^t, \bm v_i^t) \bm u_j^t\big)  \\
    & ~ = ~ a_i \normalize\big(\bm v_i^t + \tfrac{\eta_c}{|J|}\sum_{j\in J} g(\bm u_j^t, \bm v_i^t) \bm u_j^t\big)  ~ = ~ a_i \bm v_i^{t+1}.
\end{align*}
This means that the evolution of the system $(\tilde U^t, \tilde V^t)$ has a correspondence to the evolution of the original system $(U^t, V^t)$.

Consider the second item.  Suppose $(\tilde U^t, \tilde V^t)$ is in $R$-bi-polarization, so $\tilde {\bm v}_i^t = \pm \bm v_i^t$ is $R$-close to $\pm \bm c$ and $\tilde {\bm u}_j^t = \pm \bm u_j^t$ is $R$-close to $\pm \bm c$ with some vector $\bm c \in \S^{d-1}$.  This implies that $\bm v_i^t$ is $R$-close to $\pm \bm c$ and $\bm u_j^t$ is $R$-close to $\pm \bm c$.  So, the system $(U^t, V^t)$ satisfies $R$-bi-polarization.
\end{proof}

\section{Proof of Proposition \ref{obs:absorbing}}
\label{app:absorbing}
\begin{proof}
Let $(\bm U^t, \bm V^t)$ be an $(R, \bm c)$-bi-polarization state with $R\in[0, 1]$ and $\bm c\in \S^{d-1}$, where all $\bm u_j^t$ and $\bm v_i^t$ are within distance $R$ to $+\bm c$ or $-\bm c$. 
We show that, after one step of update, $\bm u_j^{t+1}$ and $\bm v_i^{t+1}$ are still within distance $R$ to $+\bm c$ or $-\bm c$, so $(\bm U^{t+1}, \bm V^{t+1})$ still satisfies $(R, \bm c)$-bi-polarization. 

Consider $\bm u_j^t$.  Without loss of generality, suppose $\bm u_j^t$ is close to $+ \bm c$, so $\|\bm u_j^t - \bm c\|_2 \le R$.  Suppose user $j$ is recommended creator $i$ at step $t$.
Let $\tilde{\bm v}_i^t = \bm v_i^t$ if $\langle \bm v_i^t, \bm u_j^t \rangle \ge 0$ and $\tilde{\bm v}_i^t = 
- \bm v_i^t$ if $\langle \bm v_i^t, \bm u_j^t \rangle < 0$. 
Then, the user update is 
\begin{equation*}
    \bm u_j^{t+1} = \normalize\Big(\bm u_j^t + \eta_u f(\bm v_i^t, \bm u_j^t) \bm v_i^t\Big) = \normalize\Big(\bm u_j^t + \eta_u |f(\bm v_i^t, \bm u_j^t)| \tilde {\bm v}_i^t\Big).
\end{equation*}
Since $\tilde {\bm v}_i^t$ is close to $+\bm c$ or $-\bm c$, $\langle \tilde {\bm v}_i^t, \bm u_j^t\rangle > 0$, and $\bm u_j^t$ is close to $+\bm c$, it must be that $\tilde {\bm v}_i^t$ is close to $+\bm c$, so $\|\tilde {\bm v}_i^t - \bm c\|_2 \le R$. 
Then, since $\bm u_j^{t+1}$ is the normalization of a vector in the convex cone formed by $\bm u_j^t$ and $\tilde {\bm v}_i^t$, by Lemma~\ref{lem:convex-cone}, we have
\begin{align*}
    \| \bm u_j^{t+1} - \bm c \|_2 ~ \le ~ \max\big\{ \| \bm u_j^t - \bm c\|_2, ~ \| \tilde {\bm v}_i^t - \bm c \|_2 \big\} ~ \le ~ R.  
\end{align*}

Consider $\bm v_i^t$.  Suppose $\|\bm v_i^t - \bm c\|_2 \le R$.
Let $J$ be the set of users that are recommended creator $i$ at step $t$.  For each $j\in J$, let $\tilde {\bm u}_j^t = \bm u_j^t$ if $\langle \bm u_j^t, \bm v_i^t\rangle \ge 0$ and $\tilde {\bm u}_j^t = -\bm u_j^t$ if $\langle \bm u_j^t, \bm v_i^t\rangle < 0$.   Then, the creator update is 
\begin{align*}
    \bm v_i^{t+1} & = \normalize \Big( \bm v_i^t + \frac{\eta_c}{|J|}\sum_{j \in J} g(\bm u_j^t, \bm v_i^t) \bm u_j^t \Big)  = \normalize \Big( \bm v_i^t + \frac{\eta_c}{|J|}\sum_{j \in J} |g(\bm u_j^t, \bm v_i^t)| \tilde {\bm u}_j^t \Big). 
\end{align*}
We note that every $\tilde {\bm u}_j^t$ satisfies $\| \tilde {\bm u}_j^t - \bm c \|_2 \le R$ (by the same reasoning as above).
Then, since $\bm v_i^{t+1}$ is the normalization of a vector in the convex cone formed by $\bm v_i^t$ and $\{\tilde {\bm u}_j^t\}_{j\in J}$, by Lemma~\ref{lem:convex-cone}, we have
\begin{equation*}
    \| \bm v_i^{t+1} - \bm c \|_2 ~ \le ~ \min \Big\{ \| \bm v_i^t - \bm c\|_2, ~ \min_{j \in J} \| \tilde {\bm u}_j^t - \bm c \|_2  \Big\} ~ \le ~ R. \qedhere  
\end{equation*} 
\end{proof}

\section{Proof of Lemma~\ref{lem:finite-length-path}}



\label{app:finite-length-path}
Lemma~\ref{lem:finite-length-path} is proved by induction on the number $n$ of creators.  We first show that any system with 1 creator and multiple users must converge to $R$-bi-polarization in finite steps for any $R > 0$.  Using the result for $1$ creator, we then construct a finite length path that leads to $R$-bi-polarization for any system with $n \ge 2$ creators. 

\subsection{Base Case: Convergence Results for $n=1$ Creator}
We prove some convergence results for the special case of only one creator.  This will serve as the basis for the proof for $n \ge 2$ creators.
Recall that we have the following dynamics update rule: 
\begin{itemize}
    \item User: $\bm u_j^{t+1} = \normalize(\bm u_j^t + \eta_u f(\bm v_i^t, \bm u_j^t) \bm v_i^t)$ where $\bm v_i^t$ is the creator recommended to user $j$; $f(\bm v_i, \bm u_j)$ satisfies: 
    \begin{equation}
        f(\bm v_i, \bm u_j) \text{ is } \begin{cases}
        > 0 & \text{ if } \langle\bm v_i, \bm u_j \rangle > 0 \\
        < 0 & \text{ if }\langle\bm v_i, \bm u_j \rangle < 0 \\
        = 0 & \text{ if } \langle \bm v_i, \bm u_j \rangle = 0. 
        \end{cases}
    \end{equation}
    \item Creator: $\bm v_i^{t+1} = \normalize(\bm v_j^t + \frac{\eta_c}{|J|}\sum_{j \in J} g(\bm u_j^t, \bm v_i^t) \bm u_j^t)$ where $J$ is the set of users being recommended creator $i$. 
\end{itemize}

\begin{lemma}\label{lemma:single-creator-J-users-any-radius}
Consider a system of $1$ creator $\bm v_i^t$ and $|J|$ users $\{\bm u_j^t\}_{j \in J}$, where the creator is recommended to all users at every time step. Assume: 
\begin{itemize}
    \item Initially, $\forall j \in J, \langle \bm u_j^0, \bm v_i^0 \rangle > 0$. 
    \item There exists some constant $\lb_f > 0$ such that $f(\bm v_i, \bm u_j) \ge \lb_f > 0$ whenever $\langle \bm v_i, \bm u_j \rangle > 0$. 
    \item $g(\bm u_j, \bm v_i) = 1$ when $\langle \bm u_j, \bm v_i \rangle > 0$. 
    \item $\eta_c \le \frac{\eta_u \lb_f}{2}$ and $0\le \eta_u < \frac{1}{2}$.
\end{itemize}
Then, for any $R > 0$, after at most $\frac{8}{3\eta_u\lb_f} \ln\frac{2|J|}{R^2}$ steps, $\sum_{j\in J} \| \bm u_j^t - \bm v_i^t\|_2^2 \le R^2$ will hold forever.  In particular, each user vector will satisfy $\| \bm u_j^t - \bm v_i^t\|_2 \le R$. 
\end{lemma}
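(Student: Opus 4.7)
The plan is to track the potential $\Psi^t \coloneqq \sum_{j \in J} \|\bm u_j^t - \bm v_i^t\|_2^2 = 2\sum_{j\in J}(1 - \langle \bm u_j^t, \bm v_i^t\rangle)$, initially bounded by $\Psi^0 \le 2|J|$ (since all $\langle \bm u_j^0, \bm v_i^0\rangle > 0$ implies $1 - \langle \bm u_j^0, \bm v_i^0\rangle \le 1$), and to establish a one-step geometric contraction $\Psi^{t+1} \le (1 - c\,\eta_u\lb_f)\Psi^t$ for a universal $c \ge 3/8$. Iterating and using $-\log(1-x) \ge x$ then yields $\Psi^T \le R^2$ after $T \le \frac{1}{c\eta_u\lb_f}\ln(2|J|/R^2) \le \frac{8}{3\eta_u\lb_f}\ln(2|J|/R^2)$ steps. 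Because the contraction holds at every step, $\Psi^t$ is monotonically decreasing, so the bound $\Psi^t \le R^2$ persists forever; in particular, each $\|\bm u_j^t - \bm v_i^t\|_2 \le R$.

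\textbf{Preliminary step.}
To activate the quantitative guarantees on $f$ and $g$, first verify by induction on $t$ that $\langle \bm u_j^t, \bm v_i^t\rangle > 0$ for every $j$. Lemma~\ref{lem:inner-product-always>=0} applies directly: writing the creator update in the form $\bm v_i^{t+1} = \normalize(\bm v_i^t + \eta_c \bm \alpha_i^t)$ with $\bm \alpha_i^t = \frac{1}{|J|}\sum_k \bm u_k^t$, the required conditions $\|\bm \alpha_i^t\|_2 \le 1$ and $\langle \bm v_i^t, \bm \alpha_i^t\rangle \ge 0$ hold under the inductive hypothesis. Consequently $f(\bm v_i^t, \bm u_j^t) \ge \lb_f$ and $g(\bm u_j^t, \bm v_i^t) = 1$ throughout.

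\textbf{Contribution of the user step.}
Decompose $\Psi^{t+1}$ via the intermediate potential $\Psi' \coloneqq \sum_{j\in J}\|\bm u_j^{t+1} - \bm v_i^t\|_2^2$. Applying Lemma~\ref{lem:lemma-B} to each user update with $\bm z^t = \bm y = \bm v_i^t$ and $\eta = \eta_u f_j$ gives $1 - \langle \bm u_j^{t+1}, \bm v_i^t\rangle \le \tfrac{1}{1+\eta_u\lb_f}(1 - \langle \bm u_j^t, \bm v_i^t\rangle)$, hence $\Psi' \le \Psi^t/(1+\eta_u\lb_f)$. Expanding squared norms using $\|\bm u_j^{t+1}\|_2 = \|\bm v_i^t\|_2 = \|\bm v_i^{t+1}\|_2 = 1$ yields the identity
\begin{equation*}
    \Psi^{t+1} - \Psi' \;=\; -2|J|\,\langle \bar{\bm u}^{t+1},\, \bm v_i^{t+1} - \bm v_i^t\rangle,
\end{equation*}
where $\bar{\bm u}^{t+1}$ is the post-update user centroid. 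This isolates the creator's contribution.

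\textbf{Main obstacle.}
The crude bound $|\langle \bar{\bm u}^{t+1}, \bm v_i^{t+1} - \bm v_i^t\rangle| \le \eta_c$ via Cauchy-Schwarz adds a term proportional to $|J|\eta_c$ that destroys the contraction, so a sharper estimate is needed. First, Lemma~\ref{lem:pythogorean} applied to the creator update gives $\langle \bm v_i^{t+1} - \bm v_i^t, \bar{\bm u}^t\rangle \ge \eta_c^{-1}\|\bm v_i^{t+1} - \bm v_i^t\|_2^2$. Second, Lemma~\ref{lem:pythogorean} applied to each user update, together with $\langle \bm u_j^{t+1}, \bm v_i^t\rangle \le 1$, yields the refined per-user displacement bound $\|\bm u_j^{t+1} - \bm u_j^t\|_2^2 \le \tfrac{\eta_u}{2}\|\bm u_j^t - \bm v_i^t\|_2^2$, so by Jensen's inequality $\|\bar{\bm u}^{t+1} - \bar{\bm u}^t\|_2^2 \le \tfrac{\eta_u}{2|J|}\Psi^t$. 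Splitting $\langle \bar{\bm u}^{t+1}, \bm v_i^{t+1} - \bm v_i^t\rangle = \langle \bar{\bm u}^t, \bm v_i^{t+1} - \bm v_i^t\rangle + \langle \bar{\bm u}^{t+1} - \bar{\bm u}^t, \bm v_i^{t+1} - \bm v_i^t\rangle$, bounding the second piece by Cauchy-Schwarz, and minimizing in $x \coloneqq \|\bm v_i^{t+1} - \bm v_i^t\|_2$ (the quadratic $x^2/\eta_c - x\sqrt{\eta_u\Psi^t/(2|J|)}$ has minimum $-\eta_c\eta_u\Psi^t/(8|J|)$) produces
\begin{equation*}
    \langle \bar{\bm u}^{t+1},\, \bm v_i^{t+1} - \bm v_i^t\rangle \;\ge\; -\tfrac{\eta_c\,\eta_u\,\Psi^t}{8|J|}.
\end{equation*}
Plugging this into the identity gives $\Psi^{t+1} \le \Psi^t/(1+\eta_u\lb_f) + \tfrac{\eta_c\eta_u}{4}\Psi^t$, and invoking $\eta_c \le \eta_u\lb_f/2$ and $\eta_u < 1/2$ reduces this to the desired $\Psi^{t+1} \le (1 - \tfrac{3}{8}\eta_u\lb_f)\Psi^t$, closing the argument.
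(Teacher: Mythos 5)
Your proof is correct, and it shares the paper's overall skeleton — the same potential $\sum_{j\in J}\|\bm u_j^t-\bm v_i^t\|_2^2$, the same preliminary invariant via Lemma~\ref{lem:inner-product-always>=0}, the same two workhorse lemmas (Lemma~\ref{lem:lemma-B} and Lemma~\ref{lem:pythogorean}), and the same contraction rate $\tfrac{3\eta_u\lb_f}{8}$ — but the one-step estimate is organized genuinely differently. The paper expands $\langle \bm u_j^t,\bm v_i^t\rangle-\langle \bm u_j^{t+1},\bm v_i^{t+1}\rangle$ into three terms (user drift, creator drift, and a cross term), splits the user drift into a $\tfrac34$/$\tfrac14$ weighted combination of the Lemma~\ref{lem:lemma-B} and Lemma~\ref{lem:pythogorean} bounds, and absorbs the cross term by Cauchy--Schwarz plus AM--GM, using $\sqrt{1/(\eta_u\eta_c)}\ge 1$. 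You instead pass through the intermediate state $(\bm U^{t+1},\bm V^t)$: the user half gives a clean multiplicative factor $1/(1+\eta_u\lb_f)$, and the creator correction $-2|J|\langle\bar{\bm u}^{t+1},\bm v_i^{t+1}-\bm v_i^t\rangle$ is controlled by combining the Pythagorean bound with a per-user displacement estimate $\|\bm u_j^{t+1}-\bm u_j^t\|_2^2\le\tfrac{\eta_u}{2}\|\bm u_j^t-\bm v_i^t\|_2^2$ and minimizing a quadratic in $\|\bm v_i^{t+1}-\bm v_i^t\|_2$. Your route isolates more transparently why the creator's motion cannot undo the users' progress (its damage is at most $\tfrac{\eta_c\eta_u}{4}\Psi^t$, i.e.\ second order in the step sizes), whereas the paper's weighted split is shorter once one guesses the right weights; both land on the same constant, and each step of yours checks out (including the additive-to-multiplicative conversion $\tfrac{1}{1+\eta_u\lb_f}+\tfrac{\eta_c\eta_u}{4}\le 1-\tfrac{3}{8}\eta_u\lb_f$ under $\eta_c\le\tfrac{\eta_u\lb_f}{2}$, $\eta_u<\tfrac12$, $\lb_f\le 1$).
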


\begin{proof}
We first note that, by Lemma~\ref{lem:inner-product-always>=0}, all user vectors satisfy $\langle \bm u_j^t, \bm v_i^t \rangle > 0$ in all time steps $t > 0$.  Hence, the creator update is always $\bm v_i^{t+1} = \normalize(\bm v_i^t + \frac{\eta_c}{|J|} \sum_{j \in J} g(\bm u_j^t, \bm v_i^t) \bm u_j^t) = \normalize(\bm v_i^t + \eta_c \frac{1}{|J|} \sum_{j \in J} \bm u_j^t)$. 

Let $a_t = 1 / (1- \frac{3\eta_u\lb_f}{8})^t$. 
Define the following potential function: 
\begin{equation}
    \Phi^t ~ = ~ a_t \sum_{j\in J} \frac{1}{2} \| \bm u_j^t - \bm v_i^t \|_2^2 ~ = ~ a_t \sum_{j\in J} \big(1 - \langle \bm u_j^t, \bm v_i^t \rangle \big). 
\end{equation}
We will show that $\Phi^t$ is monotonically decreasing.  Take the difference between $\Phi^{t+1}$ and $\Phi^t$: 
\begin{align*}
    \Phi^{t+1} - \Phi^t ~ & = ~ a_{t+1} \sum_{j\in J} \Big( \langle \bm u_j^t, \bm v_i^t \rangle - \langle \bm u_j^{t+1}, \bm v_i^{t+1} \rangle  \Big) ~ + ~ (a_{t+1} - a_t) \sum_{j\in J} \big(1 - \langle \bm u_j^t, \bm v_i^t \rangle\big) \\
    & = ~ a_{t+1} \bigg( \sum_{j\in J} \langle \bm v_i^t, \bm u_j^t - \bm u_j^{t+1}\rangle + \sum_{j\in J} \langle \bm u_j^t, \bm v_i^t - \bm v_i^{t+1}\rangle + \sum_{j\in J} \langle \bm u_j^{t+1} - \bm u_j^t, \bm v_i^t - \bm v_i^{t+1} \rangle \bigg) \\
    & \quad \quad + \, (a_{t+1} - a_t) \sum_{j\in J} \big(1 - \langle \bm u_j^t, \bm v_i^t \rangle\big). 
\end{align*}
Using Lemma~\ref{lem:lemma-B} with $\bm x^t = \bm u_j^t$, $\bm z^t = \bm v_i^t$, and $\eta = \eta_u f(\bm v_i^t, \bm u_j^t)$, we get
\begin{align*}
    \langle \bm v_i^t, \bm u_j^t - \bm u_j^{t+1} \rangle ~ \le ~ -\frac{\eta_u f(\bm v_i^t, \bm u_j^t)}{1 + \eta_u f(\bm v_i^t, \bm u_j^t)} \big( 1 - \langle \bm u_j^t, \bm v_i^t \rangle \big) ~ \le ~ -\frac{\eta_u \lb_f}{2} \big( 1 - \langle \bm u_j^t, \bm v_i^t \rangle \big). 
\end{align*}
Using Lemma~\ref{lem:pythogorean} with $\bm x^t = \bm u_j^t$, $\bm z^t = \bm v_i^t$, and $\eta = \eta_u f(\bm v_i^t, \bm u_j^t)$, we get
\begin{align*}
    \langle \bm v_i^t, \bm u_j^t - \bm u_j^{t+1} \rangle ~ \le ~ -\frac{1}{\eta_u f(\bm v_i^t, \bm u_j^t)} \| \bm u_j^{t+1} - \bm u_j^t \|_2^2 ~ \le ~ -\frac{1}{\eta_u} \| \bm u_j^{t+1} - \bm u_j^t \|_2^2. 
\end{align*}
Using Lemma~\ref{lem:pythogorean} with $\bm x^t = \bm v_i^t$, $\bm z^t = \frac{1}{|J|} \sum_{j \in J} \bm u_j^t$, and $\eta = \eta_c$, we get
\begin{align*}
    \sum_{j\in J} \langle \bm u_j^t, \bm v_i^t - \bm v_i^{t+1}\rangle ~ = ~ |J| \langle \tfrac{1}{|J|} \sum_{j\in J} \bm u_j^t, \bm v_i^t - \bm v_i^{t+1}\rangle ~ \le ~ - \frac{|J|}{\eta_c} \| \bm v_i^{t+1} - \bm v_i^t \|_2^2. 
\end{align*}
Using the above three inequalities, we can upper bound $\Phi^{t+1} - \Phi^t$: 
\begin{align*}
    & \Phi^{t+1} - \Phi^t \\
    & = ~ a_{t+1} \bigg( \frac{3}{4} \sum_{j\in J} \langle \bm v_i^t, \bm u_j^t - \bm u_j^{t+1}\rangle ~ + ~ \frac{1}{4} \sum_{j\in J} \langle \bm v_i^t, \bm u_j^t - \bm u_j^{t+1}\rangle  \\
    &  \hspace{4em} \, + \, \sum_{j\in J} \langle \bm u_j^t, \bm v_i^t - \bm v_i^{t+1}\rangle ~ + ~ \sum_{j\in J} \langle \bm u_j^{t+1} - \bm u_j^t, \bm v_i^t - \bm v_i^{t+1} \rangle \bigg) ~  + ~ (a_{t+1} - a_t) \sum_{j\in J} \big(1 - \langle \bm u_j^t, \bm v_i^t \rangle\big) \\
    & \le ~ a_{t+1} \bigg( - \frac{3}{4} \sum_{j\in J}\frac{\eta_u \lb_f}{2} \big( 1 - \langle \bm u_j^t, \bm v_i^t \rangle \big) ~ - ~ \frac{1}{4} \sum_{j\in J} \frac{1}{\eta_u} \| \bm u_j^{t+1} - \bm u_j^t \|_2^2  \\
    &  \hspace{4em} \, - \, \frac{|J|}{\eta_c} \| \bm v_i^{t+1} - \bm v_i^t \|_2^2 ~ + ~ \sum_{j\in J} \|\bm u_j^{t+1} - \bm u_j^t \|_2 \cdot \| \bm v_i^{t+1} - \bm v_i^t \|_2 \bigg) \,  + \, (a_{t+1} - a_t) \sum_{j\in J} \big(1 - \langle \bm u_j^t, \bm v_i^t \rangle\big) \\
    & = ~ a_{t+1} \bigg( - \frac{3\eta_u \lb_f}{8} \sum_{j\in J}\big( 1 - \langle \bm u_j^t, \bm v_i^t \rangle \big) \\
    &  \hspace{4em} ~ -  ~ \sum_{j\in J} \Big( \underbrace{\frac{1}{4\eta_u} \| \bm u_j^{t+1} - \bm u_j^t \|_2^2  + \frac{1}{\eta_c} \| \bm v_i^{t+1} - \bm v_i^t \|_2^2}_{\ge 2\sqrt{\frac{1}{4\eta_u\eta_c}\|\bm u_j^{t+1} - \bm u_j^t\|_2^2 \|\bm v_i^{t+1} - \bm v_i^t\|_2^2}} ~ - ~ \|\bm u_j^{t+1} - \bm u_j^t \|_2 \cdot \| \bm v_i^{t+1} - \bm v_i^t \|_2 \Big) \bigg) \\
    & \quad \quad \,  + \, (a_{t+1} - a_t) \sum_{j\in J} \big(1 - \langle \bm u_j^t, \bm v_i^t \rangle\big) \\
    & \le ~ a_{t+1} \bigg( - \frac{3\eta_u \lb_f}{8} \sum_{j\in J}\big( 1 - \langle \bm u_j^t, \bm v_i^t \rangle \big) ~ -  ~ \sum_{j\in J} \Big(\underbrace{\sqrt{\tfrac{1}{\eta_u \eta_c}} - 1}_{\ge 0}\Big)\|\bm u_j^{t+1} - \bm u_j^t \|_2 \cdot \| \bm v_i^{t+1} - \bm v_i^t \|_2 \bigg) \\
    & \quad \quad \,  + \, (a_{t+1} - a_t) \sum_{j\in J} \big(1 - \langle \bm u_j^t, \bm v_i^t \rangle\big) \\
    & \le ~ a_{t+1} \bigg( - \frac{3\eta_u \lb_f}{8} \sum_{j\in J}\big( 1 - \langle \bm u_j^t, \bm v_i^t \rangle \big) ~ + ~ 0 \bigg) ~  + ~ (a_{t+1} - a_t) \sum_{j\in J} \big(1 - \langle \bm u_j^t, \bm v_i^t \rangle\big) \\
    & = ~ \Big( \big(1 - \frac{3\eta_u\lb_f}{8}\big)a_{t+1} - a_t \Big) \sum_{j\in J} \big(1 - \langle \bm u_j^t, \bm v_i^t \rangle\big) \\
    & = ~ 0 
\end{align*}
where the last step is because $(1 - \frac{3\eta_u \lb_f}{8}) a_{t+1} = a_t$. 

We have shown that $\Phi^t$ is monotonically decreasing.  Thus,
\begin{align*}
    \frac{1}{2}\sum_{j\in J} \| \bm u_j^T - \bm v_i^T \|^2 \, = \, \frac{\Phi^T}{a_T} \, \le \, \frac{\Phi^0}{a_T} \, \le \, \frac{\sum_{j\in J} 1}{a_T} \, = \, \big(1 - \tfrac{3\eta_u\lb_f}{8}\big)^T |J| \, \le \, e^{-\frac{3\eta_u\lb_f}{8} T} |J| \, \le \, \frac{1}{2}R^2 
\end{align*}
whenever $T \ge \frac{8}{3\eta_u\lb_f} \ln\frac{2|J|}{R^2}$. 
\end{proof}

\begin{corollary}[of Lemma~\ref{lemma:single-creator-J-users-any-radius}]\label{cor:single-creator-bi-polarization}
Consider a system of $1$ creator $\bm v_i^t$ and $|J|$ users $\{\bm u_j^t\}_{j\in J}$, where the creator is recommended to all users at every time step. Assume: 
\begin{itemize}
    \item Initially, $\langle \bm u_j^0, \bm v_i^0 \rangle \ne 0$ for every $j\in J$. 
    \item There exists some constant $\lb_f > 0$ such that $|f(\bm v_i, \bm u_j)| \ge \lb_f > 0$ whenever $\langle \bm v_i, \bm u_j \rangle \ne 0$. 
    \item $g(\bm u_j, \bm v_i) = \mathrm{sign}(\langle \bm u_j, \bm v_i \rangle )$. 
    \item $\eta_c \le \frac{\eta_u \lb_f}{2}$ and $0\le \eta_u < \frac{1}{2}$.

\end{itemize}
Then, for any $R > 0$, after at most $\frac{8}{3\eta_u\lb_f} \ln\frac{2|J|}{R^2}$ steps, the system will reach $R$-bi-polarization. 
\end{corollary}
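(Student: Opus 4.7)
\textbf{Proof plan for Corollary~\ref{cor:single-creator-bi-polarization}.} The plan is to reduce the mixed-sign case to the uniform positive-sign case (Lemma~\ref{lemma:single-creator-J-users-any-radius}) via the reflection trick of Lemma~\ref{lem:reflection}. Since $\langle \bm u_j^0, \bm v_i^0 \rangle \ne 0$, we can define, for each $j \in J$,
\begin{equation*}
    b_j ~ = ~ \mathrm{sign}\bigl(\langle \bm u_j^0, \bm v_i^0 \rangle\bigr) ~ \in ~ \{+1, -1\},
\end{equation*}
and set $a_i = +1$. Let $\tilde{\bm u}_j^0 = b_j \bm u_j^0$, $\tilde{\bm v}_i^0 = \bm v_i^0$, and let $\tilde f, \tilde g$ be the reflected impact functions from Lemma~\ref{lem:reflection}. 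By construction, $\langle \tilde{\bm u}_j^0, \tilde{\bm v}_i^0 \rangle = b_j \langle \bm u_j^0, \bm v_i^0 \rangle = |\langle \bm u_j^0, \bm v_i^0 \rangle| > 0$ for every $j$.

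Next I would verify that the reflected single-creator system satisfies every hypothesis of Lemma~\ref{lemma:single-creator-J-users-any-radius}. First, $\mathrm{sign}(\langle \tilde{\bm u}_j, \tilde{\bm v}_i \rangle) = b_j \mathrm{sign}(\langle \bm u_j, \bm v_i \rangle)$, so $\tilde g(\tilde{\bm u}_j, \tilde{\bm v}_i) = b_j \, \mathrm{sign}(\langle \bm u_j, \bm v_i \rangle) = \mathrm{sign}(\langle \tilde{\bm u}_j, \tilde{\bm v}_i \rangle)$, i.e., $\tilde g$ takes value $+1$ whenever $\langle \tilde{\bm u}_j, \tilde{\bm v}_i \rangle > 0$. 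Second, when $\langle \tilde{\bm u}_j, \tilde{\bm v}_i \rangle > 0$, by checking both signs of $b_j$ one sees $\tilde f(\tilde{\bm v}_i, \tilde{\bm u}_j) = b_j f(\bm v_i, \bm u_j) = |f(\bm v_i, \bm u_j)| \ge \lb_f > 0$. The update rates are unchanged, so all hypotheses of Lemma~\ref{lemma:single-creator-J-users-any-radius} hold for the reflected system.

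Applying Lemma~\ref{lemma:single-creator-J-users-any-radius} to $(\tilde U^t, \tilde V^t)$, after $T = \frac{8}{3\eta_u \lb_f} \ln \frac{2|J|}{R^2}$ steps we obtain $\|\tilde{\bm u}_j^T - \tilde{\bm v}_i^T\|_2 \le R$ for every $j$. Thus $(\tilde U^T, \tilde V^T)$ is in $R$-consensus (with center $\tilde{\bm v}_i^T$), which is a special case of $R$-bi-polarization. The second item of Lemma~\ref{lem:reflection} then transfers this conclusion back to the original system: $(U^T, V^T)$ is in $R$-bi-polarization, with $\bm c = \bm v_i^T$ and each $\bm u_j^T$ lying within distance $R$ of $b_j \bm c$. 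The only small subtlety is making sure the recommendations in the original and reflected systems coincide step by step, which is immediate here since there is a single creator and it is recommended to every user at every step.

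The main (minor) obstacle is the bookkeeping in the verification that $\tilde f$ and $\tilde g$ still satisfy the hypotheses of Lemma~\ref{lemma:single-creator-J-users-any-radius} after reflection; once that check is in place, the conclusion follows immediately without any new convergence argument. I do not anticipate any need for new potential-function work beyond what Lemma~\ref{lemma:single-creator-J-users-any-radius} already provides.
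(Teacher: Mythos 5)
Your proposal is correct and follows essentially the same route as the paper: reflect the users with negative initial inner product (your $b_j = \mathrm{sign}(\langle \bm u_j^0, \bm v_i^0\rangle)$ is exactly the paper's partition into $J^+$ and $J^-$), apply Lemma~\ref{lemma:single-creator-J-users-any-radius} to the reflected system to get $R$-consensus, and transfer back via the reflection lemma (Lemma~\ref{lem:reflection}). Your explicit verification that $\tilde f$ and $\tilde g$ satisfy the hypotheses of Lemma~\ref{lemma:single-creator-J-users-any-radius} after reflection is a detail the paper leaves implicit, but it is not a different argument.
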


\begin{proof}
Let $J^+ = \{ j \in J: \langle \bm u_j^0, \bm v_i^0\rangle > 0\}$ be the set of users with positive inner products with creator $i$ initially; let $J^- = \{ j \in J: \langle \bm u_j^0, \bm v_i^0\rangle < 0\}$.
Let $\tilde {\bm u}_j^t = - \bm u_j^t$ for $j \in J^-$ and $\tilde {\bm u}_j^t = \bm u_j^t$ for $j \in J^+$.
%
%
Then, the system consisting of $\{\tilde {\bm u}_j^t \}_{j\in J}$ and $\bm v_i^t$ satisfies the initial condition $\langle \tilde {\bm u}_j^0, \bm v_i^0 \rangle > 0$ in Lemma~\ref{lemma:single-creator-J-users-any-radius}.  So, by Lemma~\ref{lemma:single-creator-J-users-any-radius}, it reaches $R$-consensus after at most $\frac{8}{3\eta_u\lb_f} \ln\frac{2|J|}{R^2}$ steps.  Then by the reflection lemma (Lemma~\ref{lem:reflection}), the original system, consisting of $\{ \bm u_j^t\}_{j\in J}$ and $\bm v_i^t$, must reach $R$-bi-polarization. 
\end{proof}

\subsection{Inductive Step: Proof of Lemma~\ref{lem:finite-length-path}}

\begin{lemma}\label{lem:n-creator-all-positive}
    Consider a system of $n\ge 1$ creators $\{\bm v_1^t, \ldots, \bm v_n^t$\} and $|J|$ users $\{\bm u_j^t\}_{j\in J}$. 
    Assume:
    \begin{itemize}
        \item Initially, $\langle \bm v_i^0, \bm v_{i'}^0 \rangle > 0$ for every $i, i'$, and $\langle \bm v_i^0, \bm u_j^0 \rangle > 0$ for every $i, j$. 
        \item Assumptions of Lemma~\ref{lemma:single-creator-J-users-any-radius}. 
    \end{itemize}
    Then, for any $R\in(0, 1)$, there exists a path of finite length that leads the initial state $(\bm U^0, \bm V^0)$ to $R$-consensus. 
\end{lemma}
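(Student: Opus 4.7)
The plan is to induct on the number of creators $n$. The base case $n=1$ follows directly from Lemma~\ref{lemma:single-creator-J-users-any-radius}: since $\langle \bm u_j^0,\bm v_1^0\rangle>0$ for every $j$, recommending the sole creator to every user for $T=\frac{8}{3\eta_u\lb_f}\ln\frac{2|J|}{R^2}$ steps forces $\sum_j\|\bm u_j^T-\bm v_1^T\|_2^2\le R^2$, so $\|\bm u_j^T-\bm v_1^T\|_2\le R$ for every $j$, giving $R$-consensus at $\bm c=\bm v_1^T$.

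For the inductive step I would fix the target $R\in(0,1)$, pick a much smaller tolerance $R_1\ll R$ (to be chosen at the end), and use a two-level construction. First I would apply the inductive hypothesis to the subsystem consisting of creators $1,\dots,n-1$ and all users, using only recommendations within that subsystem; this produces a finite path that brings the subsystem to $R_1$-consensus around some center $\bm c\in\S^{d-1}$, while creator $n$, receiving no recommendations, stays at $\bm v_n^0$. Since every update normalizes a convex combination of vectors already in the positive cone spanned by the initial vectors (Lemma~\ref{lem:convex-cone}), all feature vectors remain in that cone, so $\bm v_n^0$ continues to have strictly positive inner products with every other vector; in particular $\langle\bm v_n,\bm c\rangle>0$.

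Next I would iteratively pull creator $n$ into the cluster via ``sweeps''. In each sweep, pick any user $j^\star$ and recommend creator $n$ only to $j^\star$, while recommending creator $1$ to every other user. The pair $(\bm u_{j^\star},\bm v_n)$ then evolves as an isolated one-user--one-creator subsystem to which Lemma~\ref{lem:one-user-one-creator-distance-decrease} applies with reference direction $\bm y=\bm c$: after the prescribed number of steps, $\langle\bm v_n,\bm c\rangle$ increases by at least $\frac{\eta_c}{\eta_u+\eta_c}\bigl(\langle\bm u_{j^\star},\bm c\rangle-\langle\bm v_n,\bm c\rangle\bigr)-O(R_1)$, while creator $1$, continually fed the remaining users that sit near $\bm c$, is pulled back and does not drift. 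After this sub-phase I reapply the inductive hypothesis to the $(n-1)$-creator subsystem to bring it back to $R_1$-consensus around a new center $\bm c'$ (creator $n$ is idle during this and keeps its position). The gap $1-\langle\bm v_n,\bm c\rangle$ thus contracts by a factor at most $\frac{\eta_u}{\eta_u+\eta_c}<1$ per sweep, so after $O(\log(1/R))$ sweeps $\|\bm v_n-\bm c\|_2\le R/2$ while the subsystem sits within $R/2$ of $\bm c$, yielding $R$-consensus overall.

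The hard part will be the quantitative bookkeeping across sweeps: recommending creator $n$ to $j^\star$ temporarily drags $\bm u_{j^\star}$ away from $\bm c$, and the subsequent re-consolidation shifts the center to some $\bm c'\neq\bm c$. I would have to pick $R_1$ small enough (in terms of $\eta_u,\eta_c,\lb_f$, the contraction factor $\frac{\eta_u}{\eta_u+\eta_c}$, and $|J|$) so that the drift of the cluster center per sweep is dominated by the contraction of the gap, and verify that the preconditions of Lemma~\ref{lem:one-user-one-creator-distance-decrease} (namely $\langle\bm u_{j^\star},\bm c\rangle>\langle\bm v_n,\bm c\rangle>0$) remain valid at the start of every sweep. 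Lemma~\ref{lem:inner-product-always>=0} guarantees no sign flip occurs along the entire path, so the positive-cone property is preserved and the inductive hypothesis is legitimately applicable at each re-consolidation.
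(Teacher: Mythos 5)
Your proposal follows essentially the same route as the paper's proof: bring the $(n-1)$-creator subsystem to a fine $R_1$-consensus by induction, then alternate between (a) recommending creator $n$ to a single user so that Lemma~\ref{lem:one-user-one-creator-distance-decrease} (with $\bm y=\bm c$) contracts the gap $1-\langle\bm v_n,\bm c\rangle$ by the factor $\tfrac{\eta_u}{\eta_u+\eta_c}$ up to $O(R_1)$ drift, and (b) re-consolidating the subsystem by the inductive hypothesis, with the absorbing/convex-cone properties preserving positivity throughout. The quantitative bookkeeping you defer is exactly what the paper carries out (it sets $R_1$ via $\sqrt{(\tfrac{\eta_u}{\eta_c}+2)4R_1}=R$ and unrolls the recursion $F^{t+1}\le\tfrac{\eta_u}{\eta_u+\eta_c}F^t+O(R_1)$ over $K=O(\ln(1/R_1))$ sweeps), so the plan is sound and matches the paper's argument.
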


\begin{proof}
Fix any $R \in (0, 1)$.  Choose $R_1$ such that $\sqrt{(\frac{\eta_u}{\eta_c} + 2) 4R_1} = R$.  Clearly, $R_1 < R$.  We construct a path that leads the state $(\bm U^0, \bm V^0)$ to $R$-consensus as follows. 

\emph{Step (1): Consider the subsystem of the first $n-1$ creators and all users $J$.  By induction, there exists a path of length $T_1 = L_{n-1, R_1} < +\infty$ that leads the subsystem to $(R_1, \bm c^{T_1})$-consensus with some $\bm c^{T_1} \in \S^{d-1}$.}  So, after these $T_1$ steps, all creators $i\in\{1, \ldots, n-1\}$ and all users $j\in J$ satisfy $\| \bm v_i^{T_1} - \bm c^{T_1}\| \le R_1$ and $\| \bm u_j^{T_1} - \bm c^{T_1} \| \le R_1$.
Creator $n$ does not update during these $T_1$ steps, so $\bm v_n^{T_1} = \bm v_n^0$, and it still has positive inner products with the first $n-1$ creators and all users by the convex cone property (Lemma~\ref{lem:convex-cone}).
Let's then consider the distance between creators $n$ and the consensus center $\bm c^{T_1}$: $\| \bm v_n^{T_1} - \bm c^{T_1}\|$.  If $\| \bm v_n^{T_1} - \bm c^{T_1}\| \le R$, then the system has satisfied $(R, \bm c^{T_1})$-consensus, so our construction is finished. 
Otherwise, $\| \bm v_n^{T_1} - \bm c^{T_1}\| > R$.  We continue the construction as follows: 

\emph{Step (2): Pick any user $j_0 \in J$, recommend creator $n$ to user $j_0$ for $T_2 = \frac{8}{3\eta_u\lb_f} \ln\frac{2}{R_1^2}$ steps, while recommending creator 1 to all other users.}
From the $(R_1, \bm c^{T_1})$-consensus in step (1) we know $\| \bm u_{j_0}^{T_1} - \bm c^{T_1} \| \le R_1$, so 
\begin{align*}
    \langle \bm u_{j_0}^{T_1}, \bm c^{T_1} \rangle ~ = ~ 1 - \tfrac{1}{2} \| \bm u_{j_0}^{T_1} - \bm c^{T_1} \|^2 ~ \ge ~ 1 - \tfrac{R_1^2}{2} ~ > ~  1 - \tfrac{R^2}{2} ~ \ge ~ 1 - \tfrac{1}{2} \| \bm v_n^{T_1} - \bm c^{T_1} \|^2 ~ = ~ \langle \bm v_2^{T_1}, \bm c^{T_1} \rangle.
\end{align*}
Thus, we can apply Lemma~\ref{lem:one-user-one-creator-distance-decrease} with $\bm y = \bm c^{T_1}$ to derive that, after these $T_2$ steps, 
\begin{align}
    \langle \bm v_n^{T_1+T_2}, \bm c^{T_1} \rangle - \langle \bm v_n^{T_1}, \bm c^{T_1} \rangle & ~ \ge ~ \tfrac{\eta_c}{\eta_u + \eta_c} \Big(\langle \bm u_{j_0}^{T_1}, \bm c^{T_1} \rangle - \langle \bm v_n^{T_1}, \bm c^{T_1} \rangle - R_1\Big)  \nonumber \\
    & ~ \ge ~ \tfrac{\eta_c}{\eta_u + \eta_c} \Big(1 - \tfrac{R_1^2}{2} - \langle \bm v_n^{T_1}, \bm c^{T_1} \rangle - R_1\Big).  \nonumber \\
    \implies \quad  \langle \bm v_n^{T_1+T_2}, \bm c^{T_1} \rangle 
    & ~ \ge ~ \langle \bm v_n^{T_1}, \bm c^{T_1}\rangle + \tfrac{\eta_c}{\eta_u + \eta_c} \Big(1 - \tfrac{R_1^2}{2} - \langle \bm v_n^{T_1}, \bm c^{T_1} \rangle - R_1 \Big).  \label{eq:inner-product-creator-n-c}
\end{align}
For the inner product between creator $n$ and user $j_0$, by Lemma~\ref{lemma:single-creator-J-users-any-radius} $\| \bm v_n^{T_1+T_2} - \bm u_{j_0}^{T_1+T_2}\| \le R_1$, so  
\begin{align}
    \langle \bm v_n^{T_1+T_2}, \bm u_{j_0}^{T_1+T_2} \rangle ~ = ~ 1 - \tfrac{1}{2} \| \bm v_n^{T_1+T_2} - \bm u_{j_0}^{T_1+T_2}\|^2 ~ \ge ~ 1 - \tfrac{R_1^2}{2}.  \label{eq:inner-product-creator-n-user-j0}
\end{align}
Consider the inner products between creator $n$ and the first $n-1$ creators and the users in $J\setminus\{j_0\}$.
Because the first $n-1$ creators and the users in $J\setminus\{j_0\}$ form $(R_1, \bm c^{T_1})$-consensus at time step $T_1$, by Observation \ref{obs:absorbing}, they still form $(R_1, \bm c^{T_1})$-consensus at time step $T_1+T_2$, so $\| \bm v_i^{T_1+T_2} - \bm c^{T_1} \| \le R_1$ and $\| \bm u_j^{T_1+T_2} - \bm c^{T_1} \| \le R_1$. 
This implies, for $i \ne n$, 
\begin{align}
    \langle \bm v_n^{T_1+T_2}, \bm v_i^{T_1+T_2} \rangle & ~ \ge ~ \langle \bm v_n^{T_1+T_2}, \bm c^{T_1} \rangle - \| \bm v_i^{T_1+T_2} - \bm c^{T_1} \| \nonumber \\
    & ~ \ge ~  \langle \bm v_2^{T_1+T_2}, \bm c^{T_1} \rangle - R_1 \nonumber \\
    \text{\eqref{eq:inner-product-creator-n-c}} & ~ \ge ~  \langle \bm v_n^{T_1}, \bm c^{T_1}\rangle + \tfrac{\eta_c}{\eta_u + \eta_c} \Big(1 - \tfrac{R_1^2}{2} - \langle \bm v_n^{T_1}, \bm c^{T_1} \rangle - R_1 \Big) - R_1,  \label{eq:inner-product-creator-n-i}
\end{align}
and for $j \in J\setminus\{j_0\}$, 
\begin{align}
    \langle \bm v_n^{T_1+T_2}, \bm u_j^{T_1+T_2} \rangle & ~ \ge ~ \langle \bm v_n^{T_1+T_2}, \bm c^{T_1} \rangle - \| \bm u_j^{T_1+T_2} - \bm c^{T_1} \| \nonumber \\
    & ~ \ge ~  \langle \bm v_2^{T_1+T_2}, \bm c^{T_1} \rangle - R_1 \nonumber \\
    \text{\eqref{eq:inner-product-creator-n-c}} & ~ \ge ~  \langle \bm v_n^{T_1}, \bm c^{T_1}\rangle + \tfrac{\eta_c}{\eta_u + \eta_c} \Big(1 - \tfrac{R_1^2}{2} - \langle \bm v_n^{T_1}, \bm c^{T_1} \rangle - R_1 \Big) - R_1.  \label{eq:inner-product-creator-n-user-j}
\end{align}

\emph{Step (3): Consider the subsystem of the first $n-1$ creators and all users $J$.  By induction, there exists a path of length $T_3 = L_{n-1, R_1} < +\infty$ that leads the subsystem to $(R_1, \bm c^{T_1+T_2+T_3})$-consensus with some $\bm c^{T_1+T_2+T_3} \in \S^{d-1}$.}  So, we have $\| \bm v_i^{T_1+T_2+T_3} - \bm c^{T_1+T_2+T_3}\| \le R_1$ for every $i\in\{1, \ldots, n-1\}$ and $\| \bm u_j^{T_1+T_2+T_3} - \bm c^{T_1+T_2+T_3}\| \le R_1$ for every $j\in J$, and $\bm v_n^{T_1+T_2+T_3} = \bm v_n^{T_1+T_2}$. 
Consider the inner product between creator $n$ and any of the first $n-1$ creators $i \in \{1, \ldots, n-1\}$. By the convex cone property (Lemma~\ref{lem:convex-cone}),
\begin{align}
    \langle \bm v_n^{T_1+T_2+T_3}, \bm v_i^{T_1+T_2+T_3} \rangle & ~ = ~ \langle \bm v_n^{T_1+T_2}, \bm v_i^{T_1+T_2+T_3} \rangle  \nonumber \\
    \text{Lemma~\ref{lem:convex-cone}} & ~ \ge ~ \min\big\{ \langle \bm v_n^{T_1+T_2}, \bm v_i^{T_1+T_2} \rangle, ~ \min_{j\in J} \langle \bm v_n^{T_1+T_2}, \bm u_j^{T_1+T_2} \rangle \big\}  \nonumber \\
    \text{by \eqref{eq:inner-product-creator-n-user-j0}, \eqref{eq:inner-product-creator-n-i}, \eqref{eq:inner-product-creator-n-user-j}} & ~ \ge ~  \min\big\{ \langle \bm v_n^{T_1}, \bm c^{T_1}\rangle + \tfrac{\eta_c}{\eta_u + \eta_c} \Big(1 - \tfrac{R_1^2}{2} - \langle \bm v_n^{T_1}, \bm c^{T_1} \rangle - R_1 \Big) - R_1, ~ 1 - \tfrac{R_1^2}{2} \big\}  \nonumber \\
    & ~ = ~ \langle \bm v_n^{T_1}, \bm c^{T_1}\rangle + \tfrac{\eta_c}{\eta_u + \eta_c} \Big(1 - \tfrac{R_1^2}{2} - \langle \bm v_n^{T_1}, \bm c^{T_1} \rangle - R_1 \Big) - R_1 \label{eq:inner-product-minus-R_1}
\end{align}
where the last equality is because, under the assumption of $\| \bm v_n^{T_1} - \bm c^{T_1}\| > R = \sqrt{(\frac{\eta_u}{\eta_c} + 2) 4R_1}$, 
\begin{align*}
    & \langle \bm v_n^{T_1}, \bm c^{T_1}\rangle + \tfrac{\eta_c}{\eta_u + \eta_c} \Big(1 - \tfrac{R_1^2}{2} - \langle \bm v_n^{T_1}, \bm c^{T_1} \rangle - R_1 \Big) - R_1 \\
    & ~ = ~ \tfrac{\eta_u}{\eta_u + \eta_c}\langle \bm v_n^{T_1}, \bm c^{T_1}\rangle + \tfrac{\eta_c}{\eta_u + \eta_c} \Big(1 - \tfrac{R_1^2}{2} - R_1 \Big) - R_1 \\
    & ~ = ~ \tfrac{\eta_u}{\eta_u + \eta_c}\Big(1 - \tfrac{1}{2} \| \bm v_2^{T_1} - \bm c^{T_1}\|^2 \Big) + \tfrac{\eta_c}{\eta_u + \eta_c} \Big(1 - \tfrac{R_1^2}{2} - R_1 \Big) - R_1 \\
    & ~ \le ~ \tfrac{\eta_u}{\eta_u + \eta_c}\Big(1 - \tfrac{1}{2} (\tfrac{\eta_u}{\eta_c}+2)4R_1 \Big) + \tfrac{\eta_c}{\eta_u + \eta_c} \Big(1 - \tfrac{R_1^2}{2} - R_1 \Big) - R_1 \\
    & ~ \le ~ \max\{1 - \tfrac{1}{2} (\tfrac{\eta_u}{\eta_c}+2)4R_1, ~ 1 - \tfrac{R_1^2}{2} - R_1 \} - R_1 \\
    & ~ = ~ 1 - \tfrac{R_1^2}{2} - R_1 - R_1 ~ \le ~ 1 - \tfrac{R_1^2}{2}.
\end{align*}
From \eqref{eq:inner-product-minus-R_1} and $\| \bm v_i^{T_1+T_2+T_3} - \bm c^{T_1+T_2+T_3}\| \le R_1$, 
\begin{align*}
    \langle \bm v_n^{T_1+T_2+T_3}, \bm c^{T_1+T_2+T_3} \rangle & ~ \ge ~ \langle \bm v_n^{T_1+T_2}, \bm v_i^{T_1+T_2+T_3} \rangle - \| \bm v_i^{T_1+T_2+T_3} - \bm c^{T_1+T_2+T_3}\|  \\
    & ~ \ge ~ \langle \bm v_n^{T_1}, \bm c^{T_1}\rangle + \tfrac{\eta_c}{\eta_u + \eta_c} \Big(1 - \tfrac{R_1^2}{2} - \langle \bm v_n^{T_1}, \bm c^{T_1} \rangle - R_1 \Big) - 2R_1. 
\end{align*}
Using $1$ to minus the above inequality, we obtain 
\begin{align*}
    1 -  \langle \bm v_n^{T_1+T_2+T_3}, \bm c^{T_1+T_2+T_3} \rangle & ~ \le ~ \tfrac{\eta_u}{\eta_u + \eta_c}\Big( 1 - \langle \bm v_n^{T_1}, \bm c^{T_1}\rangle \Big) + \tfrac{\eta_c}{\eta_u + \eta_c} \Big(\tfrac{R_1^2}{2}  + R_1\Big) + 2R_1. 
\end{align*}
Let $F^t = 1 - \langle \bm v_n^t, \bm c^t\rangle$, 
then 
\begin{align}
    F^{T_1+T_2+T_3} & \le \tfrac{\eta_u}{\eta_u + \eta_c}F^{T_1} + \tfrac{\eta_c}{\eta_u + \eta_c} \Big(\tfrac{R_1^2}{2}  + R_1\Big) + 2R_1.  \label{eq:F-n}
\end{align}

\emph{Repeat steps (2) and (3) for $K$ times.}  Then, using \eqref{eq:F-n} for $K$ times, 
\begin{align*}
    F^{T_1+K(T_2+T_3)} & ~ \le ~ \tfrac{\eta_u}{\eta_u + \eta_c}F^{T_1+(K-1)(T_2+T_3)} + \tfrac{\eta_c}{\eta_u + \eta_c} \Big(\tfrac{R_1^2}{2}  + R_1\Big) + 2R_1 \\
    & ~ \le ~ \tfrac{\eta_u}{\eta_u + \eta_c}\bigg( \tfrac{\eta_u}{\eta_u + \eta_c}F^{T_1+(K-2)(T_2+T_3)} + \tfrac{\eta_c}{\eta_u + \eta_c} \Big(\tfrac{R_1^2}{2}  + R_1\Big) + 2R_1 \bigg) + \tfrac{\eta_c}{\eta_u + \eta_c} \Big(\tfrac{R_1^2}{2}  + R_1\Big) + 2R_1 \\
    & ~~~ \vdots ~ \\
    & ~ \le ~ \big(\tfrac{\eta_u}{\eta_u+\eta_c}\big)^K F^{T_1} + \Big( 1 + \tfrac{\eta_u}{\eta_t+\eta_c} + \cdots + \big(\tfrac{\eta_u}{\eta_t+\eta_c}\big)^{K-1}  \Big) \Big(\tfrac{\eta_c}{\eta_u + \eta_c} \Big(\tfrac{R_1^2}{2}  + R_1\Big) + 2R_1\Big) \\
    & ~ \le ~ \big(\tfrac{\eta_u}{\eta_u+\eta_c}\big)^K \cdot 1 + \frac{1}{1 - \tfrac{\eta_u}{\eta_u+\eta_c}}\Big(\tfrac{\eta_c}{\eta_u + \eta_c} \Big(\tfrac{R_1^2}{2}  + R_1\Big) + 2R_1\Big) \\
    & ~ = ~ \big(\tfrac{\eta_u}{\eta_u+\eta_c}\big)^K + \tfrac{R_1^2}{2}  + R_1 + \tfrac{\eta_u + \eta_c}{\eta_c} 2R_1 \\
    & ~ \le ~ \tfrac{R_1^2}{2} + \tfrac{R_1^2}{2}  + R_1 + \tfrac{\eta_u + \eta_c}{\eta_c} 2R_1 ~ \le ~ \big( \tfrac{\eta_u}{\eta_c} + 2) 2R_1,
\end{align*}
by choosing $K = \frac{\ln \frac{2}{R_1^2}}{\ln \frac{\eta_u+\eta_c}{\eta_u}} \le \tfrac{\eta_u + \eta_c}{\eta_c} \ln \tfrac{2}{R_1^2}$.  This means that, after repeating steps (2) and (3) for $K$ times, we must have
\begin{align*}
    \| \bm v_n^{T_1+K(T_2+T_3)} - \bm c^{T_1+K(T_2+T_3)} \| & ~ = ~ \sqrt{2\big(1 - \langle \bm v_n^{T_1+K(T_2+T_3)}, \bm c^{T_1+K(T_2+T_3)} \rangle \big)} \\
    & ~ = ~ \sqrt{2F^{T_1 + K(T_2+T_3)}} ~ \le ~ \sqrt{2 \big( \tfrac{\eta_u}{\eta_c} + 2) 2R_1} ~ = ~ R.
\end{align*}
The above inequality, together with the fact that other creators $i\ne n$ and all users in $J$ already satisfy $(R_1 \le R, \bm c^{T_1+K(T_2+T_3)})$-consensus after step (3), implies that the whole system has reached $(R, \bm c^{T_1+K(T_2+T_3)})$-consensus.  

The length of the path constructed above is at most: 
\begin{align*}
    T_1 + K(T_2+T_3) & ~ \le ~ L_{n-1, R_1} + \tfrac{\eta_u + \eta_c}{\eta_c} \ln \tfrac{2}{R_1^2} \Big(\tfrac{8}{3\eta_u\lb_f} \ln\tfrac{2|J|}{R_1^2} + L_{n-1, R_1} \Big) ~ = ~ L_{n, R} ~ < ~ +\infty, 
\end{align*}
which is finite.
\end{proof}

\begin{lemma}\label{lem:n-creator-R0-u-j0}
    Consider a subsystem of $n$ creators $\{\bm v_1^t, \ldots, \bm v_n^t$\} and $|J|$ users $\{ \bm u_j^t \}_{j\in J}$.
    Assume:
    \begin{itemize}
        \item Initially, the first $n-1$ creators and all users are in $R_0$-consensus: $\| \bm v_i^0 - \bm c \| \le R_0$, $\| \bm u_j^0 - \bm c\| \le R_0$, with $0 < R_0 < \frac{\eta_c}{5(\eta_c+\eta_u)}$.
        \item $\langle \bm v_n^0, \bm u_{j_0}^0\rangle > 0$ for some $j_0 \in J$. 
        \item $g(\bm u_j, \bm v_i) = \mathrm{sign}(\langle \bm u_j, \bm v_i \rangle)$. 
        \item Assumption of Lemma~\ref{lemma:single-creator-J-users-any-radius}.  
    \end{itemize}
    Then, for any $R \in (0, 1)$, there exists a path of finite length that leads the initial state $(\bm U^0, \bm V^0)$ to $R$-consensus. 
\end{lemma}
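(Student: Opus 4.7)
The plan is to reduce Lemma~\ref{lem:n-creator-R0-u-j0} to Lemma~\ref{lem:n-creator-all-positive}: I would first move creator $n$ into the existing $R_0$-consensus cluster so that every pair of feature vectors has strictly positive inner product, and then invoke Lemma~\ref{lem:n-creator-all-positive} on the resulting configuration to drive the whole system to $R$-consensus in finite time.

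The intermediate phase mirrors steps~(2)--(3) of the proof of Lemma~\ref{lem:n-creator-all-positive} and proceeds in rounds. Each round has two parts: (a) recommend creator $n$ only to the ``anchor'' user $j_0$, and recommend creator~$1$ (which lies in the current consensus cluster) to every other user, for $T_2 = \frac{8}{3\eta_u \lb_f}\ln\frac{2}{R_1^2}$ steps. Lemma~\ref{lem:inner-product-always>=0} guarantees $\langle \bm v_n^t, \bm u_{j_0}^t \rangle > 0$ throughout part~(a), so Lemma~\ref{lemma:single-creator-J-users-any-radius} yields $\|\bm v_n - \bm u_{j_0}\| \le R_1$ at the end, while Proposition~\ref{obs:absorbing} applied to the subsystem $\{\bm v_i\}_{i<n} \cup \{\bm u_j\}_{j \ne j_0}$ keeps that sub-cluster $R_0$-close to its center $\bm c$. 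In part~(b), recommend creator~$1$ to $j_0$ for enough steps to re-absorb $j_0$ into the cluster; since creator $n$ does not update in part~(b), its accumulated progress toward the cluster is preserved. Quantitatively, Lemma~\ref{lem:one-user-one-creator-distance-decrease} with $\bm y = \bm c$ shows that each round shrinks $1 - \langle \bm v_n, \bm c\rangle$ by a factor of $\frac{\eta_u}{\eta_u + \eta_c} < 1$ up to an $O(R_0)$ error term, so $O(\log(1/R))$ rounds suffice to bring $\|\bm v_n - \bm c\|$ below any prescribed threshold. Once $\bm v_n$ has positive inner product with every other feature vector, a single invocation of Lemma~\ref{lem:n-creator-all-positive} finishes the proof.

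The main obstacle is that Lemma~\ref{lem:one-user-one-creator-distance-decrease} requires $\langle \bm v_n^0, \bm c\rangle > 0$, whereas our hypothesis provides only $\langle \bm v_n^0, \bm u_{j_0}^0 \rangle > 0$, which merely implies $\langle \bm v_n^0, \bm c\rangle \ge -R_0$. The sharp constant in $R_0 < \frac{\eta_c}{5(\eta_c + \eta_u)}$ is precisely what handles this potentially-negative first round: after the initial pairing step, the convex-cone property (Lemma~\ref{lem:convex-cone}) forces $\bm v_n$ into the hemisphere spanned by $\bm u_{j_0}^0$ and the iterates $\bm v_n^t$, and the tightness of $R_0$ ensures $\bm v_n^{T_2}$ ends up strictly in the positive hemisphere of $\bm c$, so Lemma~\ref{lem:one-user-one-creator-distance-decrease} becomes applicable from the second round onward. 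A careful bookkeeping of the drift of $\bm u_{j_0}$ during part~(a) and of the re-absorption time in part~(b) will then show that the total path length is finite.
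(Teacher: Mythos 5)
Your overall architecture matches the paper's: one pairing phase to pull creator $n$ toward the cluster, then hand off to Lemma~\ref{lem:n-creator-all-positive} once all pairwise inner products are positive. However, the step you yourself flag as ``the main obstacle'' is where the proof actually breaks, and your proposed resolution does not close it. You want to track progress via $\langle \bm v_n^t, \bm c\rangle$ using Lemma~\ref{lem:one-user-one-creator-distance-decrease} with $\bm y = \bm c$, but that lemma's hypothesis requires $\langle \bm v_n^0, \bm y\rangle > 0$, and the assumption $\langle \bm v_n^0, \bm u_{j_0}^0\rangle > 0$ only gives $\langle \bm v_n^0, \bm c\rangle > -R_0$. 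Your fallback — that Lemma~\ref{lem:convex-cone} ``forces $\bm v_n$ into the positive hemisphere of $\bm c$'' — cannot work either, because Lemma~\ref{lem:convex-cone} has exactly the same requirement: it only controls $\langle \normalize(\sum a_i \bm z_i), \bm y\rangle$ when \emph{every} generator $\bm z_i$ already satisfies $\langle \bm z_i, \bm y\rangle > 0$, and here one of the generators of $\bm v_n^{t+1}$ is $\bm v_n^t$ itself, precisely the vector whose sign against $\bm c$ is in doubt. (The other generator, $\bm u_{j_0}^t$, is also drifting toward $\bm v_n$ during the pairing phase, so even its positivity against $\bm c$ needs an argument.) So the ``potentially-negative first round'' is never actually handled.

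The fix is the one the paper uses, and it also removes the need for your multi-round scheme: apply Lemma~\ref{lem:one-user-one-creator-distance-decrease} with the \emph{fixed} reference vector $\bm y = \bm u_{j_0}^0$ rather than $\bm c$. Its hypothesis $\langle \bm u_{j_0}^0, \bm y\rangle = 1 > \langle \bm v_n^0, \bm y\rangle > 0$ holds by assumption, and after a single phase of $T = \frac{8}{3\eta_u \lb_f}\ln\frac{2}{R_0^2}$ steps it yields
\begin{equation*}
\langle \bm v_n^{T}, \bm u_{j_0}^0\rangle \;\ge\; \tfrac{\eta_u}{\eta_u+\eta_c}\langle \bm v_n^0, \bm u_{j_0}^0\rangle + \tfrac{\eta_c}{\eta_u+\eta_c}(1-R_0) \;>\; \tfrac{\eta_c}{\eta_u+\eta_c}(1-R_0).
\end{equation*}
Since every other feature vector stays within $2R_0$ of $\bm u_{j_0}^0$ (by Proposition~\ref{obs:absorbing} applied to the residual cluster), all inner products with $\bm v_n^T$ are at least $\tfrac{\eta_c}{\eta_u+\eta_c}(1-R_0) - 4R_0 > 0$ under $R_0 < \frac{\eta_c}{5(\eta_c+\eta_u)}$ — this is exactly where that constant is spent, not on a hemisphere argument. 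At that point Lemma~\ref{lem:n-creator-all-positive} applies immediately; the $O(\log(1/R))$ repetitions of parts (a)--(b) in your plan duplicate work that Lemma~\ref{lem:n-creator-all-positive} already performs internally, since positivity of the inner products (not closeness of $\bm v_n$ to $\bm c$) is all that lemma needs.
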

\begin{proof}
    First, we recommend creator $n$ to user $j_0$ for $T = \frac{8}{3\eta_u\lb_f}\ln \frac{2}{R_0^2}$ steps, while recommending other creators to other users arbitrarily.
    Applying Lemma~\ref{lem:one-user-one-creator-distance-decrease} with $\bm y = \bm u_{j_0}^0$, we get
    \begin{align}
        \langle \bm v_n^T, \bm u_{j_0}^0 \rangle - \langle \bm v_n^0, \bm u_{j_0}^0 \rangle ~ \ge ~ \tfrac{\eta_c}{\eta_u + \eta_c}\Big(\langle \bm u_{j_0}^0, \bm u_{j_0}^0\rangle - \langle \bm v_n^0, \bm u_{j_0}^0 \rangle - R_0 \Big) ~ = ~ \tfrac{\eta_c}{\eta_u + \eta_c}\Big(1 - \langle \bm v_n^0, \bm u_{j_0}^0 \rangle - R_0 \Big). \label{eq:difference-in-inner-product-1-n} 
    \end{align}
    On the other hand, because the first $n-1$ creators and all users in $J\setminus\{j_0\}$ form an $(R_0, \bm c)$-consensus at time step $0$, according to Observation~\ref{obs:absorbing}, they still form an $(R_0, \bm c)$-consensus at time step $T$, so $\| \bm v_i^T - \bm c \| \le R_0$ for every $i\in\{1, \ldots, n-1\}$.  This implies, for every $i\in\{1, \ldots, n-1\}$, 
    \begin{align}
        \langle \bm v_n^T, \bm v_i^T \rangle - \langle \bm v_n^T, \bm u_{j_0}^0 \rangle ~ \ge ~ - \| \bm v_i^T - \bm u_{j_0}^0 \| ~ \ge ~ - \| \bm v_i^T - \bm c \| - \| \bm c - \bm u_{j_0}^0 \| ~ \ge ~ - 2R_0.  \label{eq:difference-in-inner-product-2-n}
    \end{align}
    Adding \eqref{eq:difference-in-inner-product-1-n} and \eqref{eq:difference-in-inner-product-2-n} and moving $\langle \bm v_n^0, \bm u_{j_0}^0 \rangle$ to the right side, we get
    \begin{align*}
        \langle \bm v_n^T, \bm v_i^T \rangle & ~ \ge ~ \langle \bm v_n^0, \bm u_{j_0}^0 \rangle + \tfrac{\eta_c}{\eta_u + \eta_c}\Big(1 - \langle \bm v_n^0, \bm u_{j_0}^0 \rangle - R_0 \Big) - 2R_0 \\
        & ~ = ~ \tfrac{\eta_u}{\eta_u + \eta_c}\langle \bm v_n^0, \bm u_{j_0}^0 \rangle + \tfrac{\eta_c}{\eta_u + \eta_c}\big(1 - R_0 \big) - 2R_0 \\
        & ~ > ~ 0 + \tfrac{\eta_c}{\eta_u + \eta_c}\big(1 - R_0 \big) - 2R_0 ~ > ~ 0,
    \end{align*}
    under the condition of $R_0 < \frac{\eta_c}{5(\eta_u+\eta_c)}$. 
    Moreover, for every $j\in J\setminus \{j_0\}$, because $\| \bm u_j^T - \bm v_i^T \| \le \| \bm u_j^T - \bm c \| + \| \bm c - \bm v_i^T \| \le 2R_0$, 
    \begin{align*}
        \langle \bm v_n^T, \bm u_j^T \rangle ~ \ge ~ \langle \bm v_n^T, \bm v_i^T \rangle - \| \bm u_j^T - \bm v_i^T \| ~ \ge ~ \tfrac{\eta_c}{\eta_u + \eta_c}\big(1 - R_0 \big) - 4R_0 > 0. 
    \end{align*}
    For $j_0$, by Lemma~\ref{lemma:single-creator-J-users-any-radius}, $\| \bm v_n^T - \bm u_{j_0}^T \| \le R_0$, so 
    \begin{align*}
        \langle \bm v_n^T, \bm u_{j_0}^T \rangle ~ = ~ 1 - \tfrac{1}{2} \| \bm v_n^T - \bm u_{j_0}^T \|^2 ~ \ge ~ 1 - \tfrac{R_0^2}{2} ~ > ~ 0.
    \end{align*}
    For the inner product between any creator $i\in\{1, \ldots, n-1\}$ and the users: 
    \begin{align*}
        & \langle \bm v_i^T, \bm u_{j_0}^T \rangle ~ \ge ~ \langle \bm v_i^T, \bm v_n^T \rangle - \| \bm v_n^T - \bm u_{j_0}^T \| ~ \ge ~ \tfrac{\eta_c}{\eta_u + \eta_c}\big(1 - R_0 \big) - 2R_0 - R_0 ~ = ~ \tfrac{\eta_c}{\eta_u + \eta_c}\big(1 - R_0 \big) - 3R_0 ~ > ~ 0; \\
        &\forall j \in J\setminus\{j_0\}, \quad \langle \bm v_i^T, \bm u_j^T \rangle ~ = ~ 1 - \tfrac{1}{2} \| \bm v_i^T - \bm u_j^T \|^2 ~ \ge ~ 1 - \tfrac{1}{2} \big(\| \bm v_i^T - \bm c \| + \| \bm c - \bm u_j^T \| \big)^2 ~ > ~ 1 - \tfrac{1}{2} (2R_0)^2 ~ > ~ 0. 
    \end{align*}
    All of the ``$>0$'' inequalities above show that the system of $\{\bm v_i^T\}_{i\in[n]}$ and $\{\bm u_j^T \}_{j \in J}$ satisfies the condition of Lemma~\ref{lem:n-creator-all-positive}.  So, there exists a path of finite length $T_2 < +\infty$ that leads the system to $R$-consensus by Lemma~\ref{lem:n-creator-all-positive}. The total length of path $T + T_2 = \tfrac{8}{3\eta_u\lb_f}\ln \tfrac{2}{R_0^2} + T_2 < +\infty$ is finite. 
\end{proof}

\finiteLengthPath*

\begin{proof}
We prove this lemma by induction on the number of creators $n$.  The case for $n=1$ directly follows from Corollary~\ref{cor:single-creator-bi-polarization} which shows that, for any system of $n=1$ creator and $|J|$ users with no $\langle \bm v_i^0, \bm u_j^0\rangle = 0$, there exists a path of length at most $L_1^R = \frac{8}{3 \eta_u \lb_F} \ln \frac{2|J|}{R^2} < +\infty$ that leads to $R$-bi-polarization. 

Consider $n \ge 2$.  Consider the subsystem consisting of the first $n-1$ creators $\{\bm v_1^t, \ldots, \bm v_{n-1}^t\}$ and all users.  Let $R_0 = \frac{\eta_c}{6(\eta_c+\eta_u)}$.  By induction, there exists a path of finite length $T_1 = L_{n-1}^{R_0} < +\infty$ that leads the subsystem to $R_0$-bi-polarization, with some vector $\bm c_0 \in \S^{d-1}$, so every $\bm v_i^{T_1}$ is $R_0$-close to $+ \bm c_0$ or $-\bm c_0$, for $i\ne n$, and every $\bm u_j^{T_1}$ is $R_0$-close to $+\bm c_0$ or $-\bm c_0$.  Define: 
\begin{align*}
    \tilde {\bm v}_i^t = \begin{cases} 
    \bm v_i^t & \text{ if $\bm v_i^{T_1}$ is $R_0$-close to $+\bm c$} \\
    - \bm v_i^t & \text{ if $\bm v_i^{T_1}$ is $R_0$-close to $-\bm c$}
    \end{cases}
    ~~~~ \forall i\ne n,
    \quad \quad \quad 
    \tilde {\bm u}_j^t = \begin{cases} 
    \bm u_j^t & \text{ if $\bm u_j^{T_1}$ is $R_0$-close to $+\bm c$} \\
    - \bm u_j^t & \text{ if $\bm u_j^{T_1}$ is $R_0$-close to $-\bm c$}
    \end{cases}
    ~~~~ \forall j\in J. 
\end{align*}
By definition, we have 
\begin{align*}
    \| \tilde {\bm v}_i^{T_1} - \bm c_0 \| \le R_0, ~~~ \forall i \ne n, ~ \quad \quad \quad  \| \tilde {\bm u}_j^{T_1} - \bm c_0 \| \le R_0, ~~~ \forall j \in J. 
\end{align*}
This means that $\{\tilde {\bm v}_i^{T_1} \}_{i\ne n}$ and $\{\tilde {\bm u}_j^{T_1} \}_{j \in J}$ form an $(R_0, \bm c_0)$-consensus.  Consider creator $n$.  Let
\begin{align*}
    \tilde {\bm v}_n^t = \begin{cases}
        \bm v_n^t & \text{ if $\langle \bm v_n^{T_1}, \tilde {\bm u}_{j_0}^{T_1}\rangle > 0$ for some $j_0 \in J$} \\
        - \bm v_n^t & \text{ if $\langle \bm v_n^{T_1}, \tilde {\bm u}_{j}^{T_1}\rangle < 0$ for all $j \in J$}.
    \end{cases}
\end{align*}
(The case where $\langle \bm v_n^{T_1}, \tilde {\bm u}_j^{T_1} \rangle = 0$ for some $j\in J$ is ignored because the initial states that can lead to such states have measure $0$.)  By definition, we have 
\begin{align*}
    \langle \tilde {\bm v}_n^{T_1}, \tilde {\bm u}_{j_0}^{T_1}\rangle > 0 \text{ for some $j_0 \in J$}. 
\end{align*}

Note that, at time step $T_1$, the system consisting of $\{\tilde {\bm v}_i^{T_1} \}_{i\in[n]}$ and $\{\tilde {\bm u}_j^{T_1} \}_{j\in J}$ satisfies the condition of Lemma~\ref{lem:n-creator-R0-u-j0}, so there exists a path of length $T_2 = \tilde L_n^R< +\infty$ that leads the system to $R$-consensus.
Then by the reflection lemma (Lemma~\ref{lem:reflection}), the original system $\{\bm v_i^t \}_{i\in[n]}$, $\{\bm u_j^t \}_{j\in J}$ must reach $R$-bi-polarization.  The total length of path that leads to this $R$-bi-polarization is
$L_n^R = T_1 + T_2 = L_{n-1}^{R_0} + \tilde L_n^{R} < +\infty$.
\end{proof}

\section{Missing Proofs from Section~\ref{sec:real-world-discussion}}
\subsection{Proof of Proposition~\ref{prop:top-k}}
\label{app:top-k}
Let $R>0$ be any small number.  Let $\bm c_1, \ldots, \bm c_{\lfloor n/k \rfloor} \in \reals^d$ be $\lfloor n/k \rfloor$ vectors that satisfy $B(\bm c_\ell, 2R) \cap B(\bm c_{\ell'}, 2R) = \emptyset$ for $\ell \ne \ell'$, where $B(\bm c, R)$ is the ball centered at $\bm c$ with radius $R$: $\{\bm x\in\reals^d : \| \bm x - \bm c\|_2 \le R \}$.
Consider user and creator features $(\bm U^t, \bm V^t)$ that satisfy: every ball $B(\bm c_\ell, R)$ ($\ell = 1, \ldots, \lfloor n/k\rfloor$) contains $k$ creator vectors, and every user vector $\bm u_j^t$ is in one of the balls $B(\bm c_\ell, R)$.  By definition, $(\bm U^t, \bm V^t)$ form $\lfloor n/k \rfloor$ clusters.  We show that, after one step of update, the new state $(\bm U^{t+1}, \bm V^{t+1})$ must still form $\lfloor n/k \rfloor$ clusters.  Consider any user $j$.  Suppose $\bm u_j^t \in B(\bm c_\ell, R)$, then the distance from $\bm u_j^t$ to any creator $\bm v_i^t \in B(\bm c_\ell, R)$ is at most $2R$: 
\begin{align*}
    \| \bm u_j^t - \bm v_i^t \| \le 2R. 
\end{align*}
The distance from $\bm u_j^t$ to any creator $\bm v_{i'}^t$ not in $B(\bm c_\ell, R)$ is greater than $2R$:
\begin{align*}
    \| \bm u_j^t - \bm v_{i'}^t \| > 2R
\end{align*}
because $\bm v_{i'}^t$ is in some other ball $B(\bm c_{\ell'}, R)$ that satisfies $B(\bm c_{\ell'}, 2R) \cap B(\bm c_\ell, 2R) = \emptyset$. 
This implies that the inner products between user $j$ and the creators in ball $B(\bm c_\ell, R)$ are greater than that with the creators in other ball: 
\begin{align*}
    \forall \bm v_i^t \in B(\bm c_\ell, R), ~~ \langle \bm u_j^t, \bm v_i^t \rangle = 1 - \frac{1}{2} \| \bm u_j^t - \bm v_i^t \|_2^2 & \ge 1 - \frac{1}{2} (2R)^2 > 1 - \frac{1}{2} \| \bm u_j^t - \bm v_{i'}^t \| = \langle \bm u_j^t, \bm v_i^t \rangle, ~~ \forall \bm v_{i'}^t \in B(\bm c_{\ell'}, R). 
\end{align*}
Since $B(\bm c_\ell, R)$ contains $k$ creators, these $k$ creators are the $k$-most relevant ones to user $j$, so user $j$ will only be recommended these creators.  Then, by applying Observation~\ref{obs:absorbing} to each of the $\lfloor n/k \rfloor$ balls separately, we see that each ball is a $R$-consensus and hence absorbing.  So, the new state  $(\bm U^{t+1}, \bm V^{t+1})$ still forms $\lfloor n/k \rfloor$ clusters with these $\lfloor n/k \rfloor$ balls.

\subsection{Proof of Proposition~\ref{prop:truncating}}
\label{app:truncating}
The $d$-dimensional simplex centered at the original has $d+1$ vectors with negative inner products with each other.  They form $d+1$ clusters.   Since user-creator pairs with negative inner product $\langle \bm u_i, \bm v_j \rangle < 0$ are not recommended, recommendations only happen within each cluster.  By Observation~\ref{obs:absorbing}, each cluster is absorbing, so the whole system is stable, keep forming $d+1$ clusters forever. 

\section{Additional Discussion on Real-World Recommender Systems}
\label{app:additional-discussion-real-world}

Here we further discuss real-world recommender systems' properties and designs that are currently not covered in our main paper. We plan to generalize our model in the future to further capture these features and discuss insightful findings, but having them in the current paper may be a distraction to our main findings. 

\subsection{User and Creator Retention and Activeness}

In our current model, the users and creators will stay in the system from the start to the end. However, in real-world recommender systems, users and creators may leave the platform either permanently or for a certain period. Meanwhile, new users and creators will join the platform. Such join and leave dynamics are also influenced by the recommendations' relevance and diversity, which further complicate the problem. Moreover, users and creators have different activeness levels on the platform, e.g., some users may watch a lot more videos than others, and some creators may post a lot more creations, these effects will also be strongly correlated with the dual influence of the recommender system. 

\subsection{Creation Quality}

Creation quality is a major factor influencing users' feedback in addition to the creation style, e.g., well-made cuisine videos could also be fun and liked by gamers and pet lovers, which we need more than a collaborative filtering type of modeling like our current model to capture such features. A potential solution to boost both long-term system diversity and single-shot recommendation diversity is to design 
mechanisms that can incentivize creators to create higher-quality videos instead of changing their creation styles. 

\subsection{Cold Start}

Cold Start is widely used in real-world recommender systems for newly published items. Due to the lack of user-item interactions on new items, the systems randomly recommend these new items to users and collect data for collaborative filtering. 
In our current model, if we consider the creators creating new items in each time step under their current time creation style, then cold start guarantees the conditions in Theorem \ref{thm:bi-polarization}. But if we consider the system to have good enough content understanding ability and can accurately predict the new creations' embeddings, the cold start is not necessary and our model and results in the top-$k$ truncation and threshold truncation parts are valid. We also highlight a subtle difference between cold start and random traffic, if cold start is used on creators instead of items, then after the creator is exposed to users a certain number of times, the system will not guarantee to provide a non-zero probability of recommending this creator, and thus the conditions in Theorem \ref{thm:bi-polarization} may not hold.

\newpage

\section*{NeurIPS Paper Checklist}

\begin{enumerate}

\item {\bf Claims}
    \item[] Question: Do the main claims made in the abstract and introduction accurately reflect the paper's contributions and scope?
    \item[] Answer: \answerYes{} 
    \item[] Justification: 
    \item[] Guidelines:
    \begin{itemize}
        \item The answer NA means that the abstract and introduction do not include the claims made in the paper.
        \item The abstract and/or introduction should clearly state the claims made, including the contributions made in the paper and important assumptions and limitations. A No or NA answer to this question will not be perceived well by the reviewers. 
        \item The claims made should match theoretical and experimental results, and reflect how much the results can be expected to generalize to other settings. 
        \item It is fine to include aspirational goals as motivation as long as it is clear that these goals are not attained by the paper. 
    \end{itemize}

\item {\bf Limitations}
    \item[] Question: Does the paper discuss the limitations of the work performed by the authors?
    \item[] Answer: \answerYes{} 
    \item[] Justification: Limitations are discussed in Section~\ref{sec:conclusion} and Section~\ref{app:additional-discussion-real-world}. 
    \item[] Guidelines:
    \begin{itemize}
        \item The answer NA means that the paper has no limitation while the answer No means that the paper has limitations, but those are not discussed in the paper. 
        \item The authors are encouraged to create a separate "Limitations" section in their paper.
        \item The paper should point out any strong assumptions and how robust the results are to violations of these assumptions (e.g., independence assumptions, noiseless settings, model well-specification, asymptotic approximations only holding locally). The authors should reflect on how these assumptions might be violated in practice and what the implications would be.
        \item The authors should reflect on the scope of the claims made, e.g., if the approach was only tested on a few datasets or with a few runs. In general, empirical results often depend on implicit assumptions, which should be articulated.
        \item The authors should reflect on the factors that influence the performance of the approach. For example, a facial recognition algorithm may perform poorly when image resolution is low or images are taken in low lighting. Or a speech-to-text system might not be used reliably to provide closed captions for online lectures because it fails to handle technical jargon.
        \item The authors should discuss the computational efficiency of the proposed algorithms and how they scale with dataset size.
        \item If applicable, the authors should discuss possible limitations of their approach to address problems of privacy and fairness.
        \item While the authors might fear that complete honesty about limitations might be used by reviewers as grounds for rejection, a worse outcome might be that reviewers discover limitations that aren't acknowledged in the paper. The authors should use their best judgment and recognize that individual actions in favor of transparency play an important role in developing norms that preserve the integrity of the community. Reviewers will be specifically instructed to not penalize honesty concerning limitations.
    \end{itemize}

\item {\bf Theory Assumptions and Proofs}
    \item[] Question: For each theoretical result, does the paper provide the full set of assumptions and a complete (and correct) proof?
    \item[] Answer: \answerYes{} 
    \item[] Justification: Proof sketches are given in the body and the full proofs are in the appendix. 
    \item[] Guidelines:
    \begin{itemize}
        \item The answer NA means that the paper does not include theoretical results. 
        \item All the theorems, formulas, and proofs in the paper should be numbered and cross-referenced.
        \item All assumptions should be clearly stated or referenced in the statement of any theorems.
        \item The proofs can either appear in the main paper or the supplemental material, but if they appear in the supplemental material, the authors are encouraged to provide a short proof sketch to provide intuition. 
        \item Inversely, any informal proof provided in the core of the paper should be complemented by formal proofs provided in appendix or supplemental material.
        \item Theorems and Lemmas that the proof relies upon should be properly referenced. 
    \end{itemize}

    \item {\bf Experimental Result Reproducibility}
    \item[] Question: Does the paper fully disclose all the information needed to reproduce the main experimental results of the paper to the extent that it affects the main claims and/or conclusions of the paper (regardless of whether the code and data are provided or not)?
    \item[] Answer: \answerYes{} 
    \item[] Justification: Experiment details are provided in Section~\ref{sec:experiment} and Section \ref{app:additional-real-world}. 
    \item[] Guidelines:
    \begin{itemize}
        \item The answer NA means that the paper does not include experiments.
        \item If the paper includes experiments, a No answer to this question will not be perceived well by the reviewers: Making the paper reproducible is important, regardless of whether the code and data are provided or not.
        \item If the contribution is a dataset and/or model, the authors should describe the steps taken to make their results reproducible or verifiable. 
        \item Depending on the contribution, reproducibility can be accomplished in various ways. For example, if the contribution is a novel architecture, describing the architecture fully might suffice, or if the contribution is a specific model and empirical evaluation, it may be necessary to either make it possible for others to replicate the model with the same dataset, or provide access to the model. In general. releasing code and data is often one good way to accomplish this, but reproducibility can also be provided via detailed instructions for how to replicate the results, access to a hosted model (e.g., in the case of a large language model), releasing of a model checkpoint, or other means that are appropriate to the research performed.
        \item While NeurIPS does not require releasing code, the conference does require all submissions to provide some reasonable avenue for reproducibility, which may depend on the nature of the contribution. For example
        \begin{enumerate}
            \item If the contribution is primarily a new algorithm, the paper should make it clear how to reproduce that algorithm.
            \item If the contribution is primarily a new model architecture, the paper should describe the architecture clearly and fully.
            \item If the contribution is a new model (e.g., a large language model), then there should either be a way to access this model for reproducing the results or a way to reproduce the model (e.g., with an open-source dataset or instructions for how to construct the dataset).
            \item We recognize that reproducibility may be tricky in some cases, in which case authors are welcome to describe the particular way they provide for reproducibility. In the case of closed-source models, it may be that access to the model is limited in some way (e.g., to registered users), but it should be possible for other researchers to have some path to reproducing or verifying the results.
        \end{enumerate}
    \end{itemize}

\item {\bf Open access to data and code}
    \item[] Question: Does the paper provide open access to the data and code, with sufficient instructions to faithfully reproduce the main experimental results, as described in supplemental material?
    \item[] Answer: \answerYes{} 
    \item[] Justification: Provided in the supplemental file. 
    \item[] Guidelines:
    \begin{itemize}
        \item The answer NA means that paper does not include experiments requiring code.
        \item Please see the NeurIPS code and data submission guidelines (\url{https://nips.cc/public/guides/CodeSubmissionPolicy}) for more details.
        \item While we encourage the release of code and data, we understand that this might not be possible, so “No” is an acceptable answer. Papers cannot be rejected simply for not including code, unless this is central to the contribution (e.g., for a new open-source benchmark).
        \item The instructions should contain the exact command and environment needed to run to reproduce the results. See the NeurIPS code and data submission guidelines (\url{https://nips.cc/public/guides/CodeSubmissionPolicy}) for more details.
        \item The authors should provide instructions on data access and preparation, including how to access the raw data, preprocessed data, intermediate data, and generated data, etc.
        \item The authors should provide scripts to reproduce all experimental results for the new proposed method and baselines. If only a subset of experiments are reproducible, they should state which ones are omitted from the script and why.
        \item At submission time, to preserve anonymity, the authors should release anonymized versions (if applicable).
        \item Providing as much information as possible in supplemental material (appended to the paper) is recommended, but including URLs to data and code is permitted.
    \end{itemize}

\item {\bf Experimental Setting/Details}
    \item[] Question: Does the paper specify all the training and test details (e.g., data splits, hyperparameters, how they were chosen, type of optimizer, etc.) necessary to understand the results?
    \item[] Answer: \answerYes{} 
    \item[] Justification: See Section~\ref{sec:experiment}. 
    \item[] Guidelines:
    \begin{itemize}
        \item The answer NA means that the paper does not include experiments.
        \item The experimental setting should be presented in the core of the paper to a level of detail that is necessary to appreciate the results and make sense of them.
        \item The full details can be provided either with the code, in appendix, or as supplemental material.
    \end{itemize}

\item {\bf Experiment Statistical Significance}
    \item[] Question: Does the paper report error bars suitably and correctly defined or other appropriate information about the statistical significance of the experiments?
    \item[] Answer: \answerYes{} 
    \item[] Justification: 
    \item[] Guidelines:
    \begin{itemize}
        \item The answer NA means that the paper does not include experiments.
        \item The authors should answer "Yes" if the results are accompanied by error bars, confidence intervals, or statistical significance tests, at least for the experiments that support the main claims of the paper.
        \item The factors of variability that the error bars are capturing should be clearly stated (for example, train/test split, initialization, random drawing of some parameter, or overall run with given experimental conditions).
        \item The method for calculating the error bars should be explained (closed form formula, call to a library function, bootstrap, etc.)
        \item The assumptions made should be given (e.g., Normally distributed errors).
        \item It should be clear whether the error bar is the standard deviation or the standard error of the mean.
        \item It is OK to report 1-sigma error bars, but one should state it. The authors should preferably report a 2-sigma error bar than state that they have a 96\% CI, if the hypothesis of Normality of errors is not verified.
        \item For asymmetric distributions, the authors should be careful not to show in tables or figures symmetric error bars that would yield results that are out of range (e.g. negative error rates).
        \item If error bars are reported in tables or plots, The authors should explain in the text how they were calculated and reference the corresponding figures or tables in the text.
    \end{itemize}

\item {\bf Experiments Compute Resources}
    \item[] Question: For each experiment, does the paper provide sufficient information on the computer resources (type of compute workers, memory, time of execution) needed to reproduce the experiments?
    \item[] Answer: \answerNA{} 
    \item[] Justification: Computer resources are not a limitation in our experiments. 
    \item[] Guidelines:
    \begin{itemize}
        \item The answer NA means that the paper does not include experiments.
        \item The paper should indicate the type of compute workers CPU or GPU, internal cluster, or cloud provider, including relevant memory and storage.
        \item The paper should provide the amount of compute required for each of the individual experimental runs as well as estimate the total compute. 
        \item The paper should disclose whether the full research project required more compute than the experiments reported in the paper (e.g., preliminary or failed experiments that didn't make it into the paper). 
    \end{itemize}
    
\item {\bf Code Of Ethics}
    \item[] Question: Does the research conducted in the paper conform, in every respect, with the NeurIPS Code of Ethics \url{https://neurips.cc/public/EthicsGuidelines}?
    \item[] Answer: \answerYes{} 
    \item[] Justification: 
    \item[] Guidelines:
    \begin{itemize}
        \item The answer NA means that the authors have not reviewed the NeurIPS Code of Ethics.
        \item If the authors answer No, they should explain the special circumstances that require a deviation from the Code of Ethics.
        \item The authors should make sure to preserve anonymity (e.g., if there is a special consideration due to laws or regulations in their jurisdiction).
    \end{itemize}

\item {\bf Broader Impacts}
    \item[] Question: Does the paper discuss both potential positive societal impacts and negative societal impacts of the work performed?
    \item[] Answer: \answerYes{} 
    \item[] Justification: Discussed in Section~\ref{sec:conclusion}. 
    \item[] Guidelines:
    \begin{itemize}
        \item The answer NA means that there is no societal impact of the work performed.
        \item If the authors answer NA or No, they should explain why their work has no societal impact or why the paper does not address societal impact.
        \item Examples of negative societal impacts include potential malicious or unintended uses (e.g., disinformation, generating fake profiles, surveillance), fairness considerations (e.g., deployment of technologies that could make decisions that unfairly impact specific groups), privacy considerations, and security considerations.
        \item The conference expects that many papers will be foundational research and not tied to particular applications, let alone deployments. However, if there is a direct path to any negative applications, the authors should point it out. For example, it is legitimate to point out that an improvement in the quality of generative models could be used to generate deepfakes for disinformation. On the other hand, it is not needed to point out that a generic algorithm for optimizing neural networks could enable people to train models that generate Deepfakes faster.
        \item The authors should consider possible harms that could arise when the technology is being used as intended and functioning correctly, harms that could arise when the technology is being used as intended but gives incorrect results, and harms following from (intentional or unintentional) misuse of the technology.
        \item If there are negative societal impacts, the authors could also discuss possible mitigation strategies (e.g., gated release of models, providing defenses in addition to attacks, mechanisms for monitoring misuse, mechanisms to monitor how a system learns from feedback over time, improving the efficiency and accessibility of ML).
    \end{itemize}
    
\item {\bf Safeguards}
    \item[] Question: Does the paper describe safeguards that have been put in place for responsible release of data or models that have a high risk for misuse (e.g., pretrained language models, image generators, or scraped datasets)?
    \item[] Answer: \answerNA{} 
    \item[] Justification: 
    \item[] Guidelines:
    \begin{itemize}
        \item The answer NA means that the paper poses no such risks.
        \item Released models that have a high risk for misuse or dual-use should be released with necessary safeguards to allow for controlled use of the model, for example by requiring that users adhere to usage guidelines or restrictions to access the model or implementing safety filters. 
        \item Datasets that have been scraped from the Internet could pose safety risks. The authors should describe how they avoided releasing unsafe images.
        \item We recognize that providing effective safeguards is challenging, and many papers do not require this, but we encourage authors to take this into account and make a best faith effort.
    \end{itemize}

\item {\bf Licenses for existing assets}
    \item[] Question: Are the creators or original owners of assets (e.g., code, data, models), used in the paper, properly credited and are the license and terms of use explicitly mentioned and properly respected?
    \item[] Answer: \answerYes{} 
    \item[] Justification: 
    \item[] Guidelines:
    \begin{itemize}
        \item The answer NA means that the paper does not use existing assets.
        \item The authors should cite the original paper that produced the code package or dataset.
        \item The authors should state which version of the asset is used and, if possible, include a URL.
        \item The name of the license (e.g., CC-BY 4.0) should be included for each asset.
        \item For scraped data from a particular source (e.g., website), the copyright and terms of service of that source should be provided.
        \item If assets are released, the license, copyright information, and terms of use in the package should be provided. For popular datasets, \url{paperswithcode.com/datasets} has curated licenses for some datasets. Their licensing guide can help determine the license of a dataset.
        \item For existing datasets that are re-packaged, both the original license and the license of the derived asset (if it has changed) should be provided.
        \item If this information is not available online, the authors are encouraged to reach out to the asset's creators.
    \end{itemize}

\item {\bf New Assets}
    \item[] Question: Are new assets introduced in the paper well documented and is the documentation provided alongside the assets?
    \item[] Answer: \answerNA{} 
    \item[] Justification: 
    \item[] Guidelines:
    \begin{itemize}
        \item The answer NA means that the paper does not release new assets.
        \item Researchers should communicate the details of the dataset/code/model as part of their submissions via structured templates. This includes details about training, license, limitations, etc. 
        \item The paper should discuss whether and how consent was obtained from people whose asset is used.
        \item At submission time, remember to anonymize your assets (if applicable). You can either create an anonymized URL or include an anonymized zip file.
    \end{itemize}

\item {\bf Crowdsourcing and Research with Human Subjects}
    \item[] Question: For crowdsourcing experiments and research with human subjects, does the paper include the full text of instructions given to participants and screenshots, if applicable, as well as details about compensation (if any)? 
    \item[] Answer: \answerNA{} 
    \item[] Justification: 
    \item[] Guidelines:
    \begin{itemize}
        \item The answer NA means that the paper does not involve crowdsourcing nor research with human subjects.
        \item Including this information in the supplemental material is fine, but if the main contribution of the paper involves human subjects, then as much detail as possible should be included in the main paper. 
        \item According to the NeurIPS Code of Ethics, workers involved in data collection, curation, or other labor should be paid at least the minimum wage in the country of the data collector. 
    \end{itemize}

\item {\bf Institutional Review Board (IRB) Approvals or Equivalent for Research with Human Subjects}
    \item[] Question: Does the paper describe potential risks incurred by study participants, whether such risks were disclosed to the subjects, and whether Institutional Review Board (IRB) approvals (or an equivalent approval/review based on the requirements of your country or institution) were obtained?
    \item[] Answer: \answerNA{} 
    \item[] Justification: 
    \item[] Guidelines:
    \begin{itemize}
        \item The answer NA means that the paper does not involve crowdsourcing nor research with human subjects.
        \item Depending on the country in which research is conducted, IRB approval (or equivalent) may be required for any human subjects research. If you obtained IRB approval, you should clearly state this in the paper. 
        \item We recognize that the procedures for this may vary significantly between institutions and locations, and we expect authors to adhere to the NeurIPS Code of Ethics and the guidelines for their institution. 
        \item For initial submissions, do not include any information that would break anonymity (if applicable), such as the institution conducting the review.
    \end{itemize}

\end{enumerate}

\end{document}